\DeclareMathAlphabet{\pazocal}{OMS}{zplm}{m}{n}
\def\serieslogo@{}
\def\@setcopyright{}
\numberwithin{equation}{section}
\newtheorem{theorem}{Theorem}[section]
\newtheorem{corollary}[theorem]{Corollary}
\newtheorem{proposition}[theorem]{Proposition}
\newtheorem{lemma}[theorem]{Lemma}
\newtheorem{definition}[theorem]{Definition}
\theoremstyle{definition}
\newtheorem{remark}[theorem]{Remark}
\theoremstyle{remark}
\newcommand{\near}{{\operatorname{near}}}
\newcommand{\loc}{{\operatorname{loc}}}
\newcommand{\var}{{\vartheta}}
\newcommand{\edge}{{\operatorname{edge}}}
\newcommand{\bulk}{{\operatorname{bulk}}}
\newcommand{\Sf}{{\operatorname{Sf}}}
\newcommand{\systeme}[1]{\left\{ \begin{matrix} #1 \end{matrix} \right.}
\newcommand{\BBB}{\mathcal{B}}
\newcommand{\R}{\mathbb{R}}
\newcommand{\PPP}{\mathcal{P}}
\newcommand{\gap}{{\operatorname{gap}}}
\newcommand{\CCC}{\mathcal{C}}
\newcommand{\C}{\mathbb{C}}
\newcommand{\Z}{\mathbb{Z}}
\newcommand{\w}{{\omega}}
\newcommand{\GG}{{\mathcal{G}}}
\newcommand{\brill}{{\mathcal{B}}}
\newcommand{\matrice}[1]{\left[ \begin{matrix}
#1
\end{matrix} \right]}
\newcommand{\bk}{{\bf k}}
\newcommand{\RRR}{{\mathcal{R}}}
\newcommand{\CC}{{\CCC}}
\newcommand{\PP}{{\PPP}}
\newcommand{\Dom}{{\rm{Dom}}}
\newcommand{\bq}{{\bf q}}
\newcommand{\az}{\alpha}
\newcommand{\bK}{{\bf K}}
\newcommand{\bv}{{\bf v}}
\newcommand{\Ll}{{\mathscr{L}}}
\newcommand{\Hh}{{\mathscr{H}}}
\newcommand{\bl}{\ensuremath{\boldsymbol\ell}}
\newcommand{\br}{\ensuremath{\boldsymbol{r}}}
\newcommand{\bx}{{\bf x}}
\newcommand{\by}{{\bf y}}
\newcommand{\Id}{{\rm Id}}
\newcommand{\tvar}{{\widetilde{\var}}}
\newcommand{\ove}[1]{{\overline{#1}}}
\newcommand{\HH}{\pazocal{H}}
\newcommand{\far}{{\operatorname{far}}}
\newcommand{\blr}[1]{\left\langle #1 \right\rangle}
\newcommand{\Dirac}{{\slashed{\mathfrak{D}}}}
\newcommand{\ktilde}{{\bm{\mathfrak{K}}}}
\newcommand{\kpar}{{k_{\parallel}}}
\newcommand{\tk}{{\bm{\mathfrak{K}}}}
\newcommand{\E}{{\mathfrak{E}}}
\newcommand{\D}{\partial}
\newcommand{\Di}{{\slashed{D}}{}}
\newcommand{\DiKs}{\slashed{D}{}^\bKs}
\newcommand{\DiK}{\slashed{D}{}^\bK}
\newcommand{\DiKp}{\slashed{D}{}^\bKp}
\newcommand{\tDir}{\widetilde{\slashed{D}}}
\newcommand{\tDiKs}{\tDir{}^\bKs}
\newcommand{\tDiK}{\tDir{}^\bK}
\newcommand{\tDiKp}{\tDir{}^\bKp}
\newcommand\bKs {{\bK_\star}}
\newcommand{\epsi}{\varepsilon}
\newcommand{\de}{ \ \mathrel{\stackrel{\makebox[0pt]{\mbox{\normalfont\tiny def}}}{=}} \ }
\newcommand{\tv}{{\bm{\mathfrak{v}}}}
\newcommand{\vtilde}{{\bm{\mathfrak{v}}}}
\newcommand{\TT}{\pazocal{T}}
\newcommand{\dist}{{\operatorname{dist}}} 
\newcommand{\hF}{{\widehat{F}}}
\newcommand{\vF}{{\rm v}_{\rm F}}
\newcommand{\eps}{\varepsilon}
\newcommand{\tHd}{\widetilde{H}{}^\delta}
\newcommand{\hPhi}{{\widehat{\Phi}}}
\newcommand{\ta}{\widetilde{a}}
\newcommand{\bt}{\widetilde{b}}
\newcommand{\nit}{\noindent}
\newcommand{\nn}{\nonumber}
\newcommand{\sgn}{sgn}
\newcommand{\dive}{{\operatorname{div}}}
\newcommand{\bKp}{{ {\bK^\prime}}}
\newcommand{\OO}{{\mathscr{O}}}
\newcommand{\lr}[1]{\langle #1 \rangle}
\newcommand{\hG}{{\widehat{G}}}
\begin{document}

\title[Edge states and the  valley Hall effect]
{Edge states and the  valley Hall effect}

 \author{A. Drouot}
\address{Department of  Mathematics, Columbia University, New York, NY, USA}
\email{alexis.drouot@gmail.com}

 \author{M. I. Weinstein}
 \address{Department of Applied Physics and Applied Mathematics and Department of Mathematics, Columbia University, New York, NY, USA}
 \email{miw2103@columbia.edu}

\begin{abstract}  

We study energy propagation along line-defects (edges) in two dimensional continuous, energy preserving periodic media.
The unperturbed medium (bulk) is modeled by a honeycomb Schroedinger operator, which is periodic with respect to the triangular lattice, invariant under parity, $\PP$,  and complex-conjugation, $\CC$. 
A honeycomb operator has {\it Dirac points} in its band structure: two dispersion surfaces touch conically at an energy level, $E_D$ \cite{FW:12,FLW-CPAM:17}. Periodic perturbations which  break $\PP$ or $\CC$  open a gap in the essential spectrum about energy $E_D$. Such operators model an insulator near energy $E_D$.

Our edge operator is a small perturbation of the bulk and models a transition (via a domain wall)
between distinct periodic, $\PP$ or $\CC$ breaking  perturbations. The edge operator permits energy transport along the line-defect. The associated  energy channels  are called {\it edge states}. They are time-harmonic solutions of the underlying wave equation, which are localized near and propagating along the line-defect. 
They are of great scientific interest due to their remarkable stability, and are a key property of topological insulators.

We completely characterize the edge state spectrum within the bulk spectral gap about $E_D$.
At the center of our analysis is an expansion of the edge operator resolvent  for energies 
near $E_D$.  The leading term features
 the resolvent of  an  {\it effective Dirac operator}. Edge state eigenvalues are poles of the resolvent, which bifurcate from the Dirac point. The corresponding eigenstates have the multiscale structure identified in \cite{FLW-2d_edge:16}.

We extend earlier work on  zigzag-type edges \cite{Drouot:18b} to all rational edges. We 
 elucidate the role in edge state formation played by the type of symmetry-breaking and the orientation of the edge.  We prove the  resolvent expansion by a new direct and transparent strategy. 
 Our results also provide a rigorous explanation of the numerical observations in \cite{FLW-2d_materials:15,LWZ:18};
    see also the photonic experimental study in \cite{Rechtsman-etal:18}. 
Finally we discuss implications  for the  {\it Valley Hall Effect}, which concerns quantum Hall-like energy transport in honeycomb structures.
% in the absence of a  magnetic field.
%
\end{abstract}

\maketitle

\section{Introduction}\label{introduction}
 Propagation of energy along an interface between different bulk media
   is a ubiquitous and important phenomenon in physics. In two-dimensional systems, the interface is a line defect and  the basic modes of propagation 
   are called {\it edge states}. These are time-harmonic solutions of the underlying wave equation, 
   localized near and propagating plane wave-like along the interface.
  % localized transversely to and  plane-wave like (propagating) parallel to the line-defect.
  %
  The bulk and defect models we study  are closely related to two-dimensional materials, one-atom-thick monolayers extending in-plane to the macro-scale.
    A paradigm is graphene, a two-dimensional honeycomb arrangement of carbon atoms which  is the most conductive known material, both electrically and thermally \cite{RMP-Graphene:09}.

  When suitably perturbed, graphene and related materials admit  edge states which are spectacularly robust against strong spatially localized perturbations.
  Many aspects of this stability can be understood using notions of topology  in terms of the Dirac points   and associated Floquet--Bloch modes, 
   of the  bulk honeycomb operator. 
   These are  conical singularities in the band spectrum; see \S\ref{bulk}.
 %  These are  conical singularities at quasimomentum / energy pairs $(\bKs,E_D)$, where $\bKs$ %varies over the vertices 
  % of the Brillouin zone; see \S\ref{bulk}.
  
  These propagation phenomena arise for general energy-preserving wave equations satisfying certain periodicity and symmetry assumptions. This has inspired investigations of  fabricated media, dubbed {\it artificial graphene}, in electronic physics, optics and photonics, acoustics and mechanics \cite{artificial-graphene:11,ozawa_etal:18,Marquardt:17,irvine:15}.
  The great interest in {\it topologically protected edge states} lies in applications of robust energy transport to technological settings.

 Motivated by graphene and its artificial analogs, we consider  continuum Schr\"odinger operators 
  which interpolate across a line defect between two weakly deformed honeycomb structures.  The types of  perturbed honeycomb operators or {\it edge operators} we consider  were introduced in \cite{FLW-2d_materials:15,FLW-2d_edge:16} to capture the essential features of theoretical and experimental work \cite{HR:07,RH:08,Soljacic-etal:08}. 

We consider two classes of edge operators: 
%\begin{itemize}
%\item[(i)] 

\nit (i)\ self-adjoint deformations of the bulk operator which break parity-inversion symmetry ($\mathcal{P}$) but preserve time-reversal invariance ($\mathcal{C}$);\small

%\item[(ii)]  
\nit (ii)\ self-adjoint deformations of the bulk operator which break $\CCC$ and preserve $\PPP$.
%\end{itemize}

In case (i), the model is a real-valued perturbation of a bulk honeycomb Schr\"odinger operator. In case (ii), the model is a divergence form elliptic operator, modeling a {\it bi-anisotropic} perturbation. Such models are of interest in the field of metamaterials;
   see, for example,  \cite{HR:07,RH:08,Shvets-PTI:13} in the physics literature, and the mathematical study \cite{LWZ:18}. Our methods  also apply to honeycomb magnetic
    Schr\"odinger operators \cite{Drouot:18b,Drouot:19}. 
  
Our (unperturbed) bulk Hamiltonian $H^0$ is periodic with respect to the equilateral triangular  lattice $\Lambda$ and satisfies the symmetry properties of a honeycomb potential; see  \eqref{symm-def}. A {\it rational edge} is a line defect which is  parallel to a fixed  $\tv_1\in\Lambda$. We construct an edge Hamiltonian $H_{\edge}^\delta$, which is a $\OO(\delta)$-perturbation of $H^0$. It is invariant under translations in $\Z\tv_1 $ but not invariant in other directions of $\Lambda$; see \S\ref{Hdelta}.

  {\it Edge states} are  time-harmonic solutions $e^{-iEt}\Psi(\bx)$ of the Schr\"odinger equation $i\D_t\psi=H_{\edge}\psi$. The spatial profile, $\Psi$, is localized in directions which are transverse to $\R\tv_1$, plane-wave like (propagating) parallel to $\R\tv_1$, {\it i.e.} $\Psi(\bx+\tv_1)=e^{i\kpar}\Psi(\bx)$. Here, $\kpar\in[0,2\pi]$ is referred to as the propagation constant, 
  parallel quasimomentum or parallel wave number.
%
 %with phase velocity $E/\kpar$. Here $\kpar$ is the parallel quasimomentum: the characteristic frequency of $\Psi$ in the $\tv_1$-direction. 
 Thus, $\big(\Psi,E\big)$ solves an eigenvalue problem:
 \begin{align}
 H_{\edge}^\delta\Psi\ & =\ E\ \Psi\nn \\
 \label{kp-evp}\qquad \Psi(\bx+\tv_1)\ &  =\ e^{i\kpar}\ \Psi(\bx)\qquad \ \textrm{(propagation parallel to $\R\tv_1$)}\\ 
\Psi(\bx)\ & \to\ 0\  \textrm{as}\  \systeme{ |\bx|\to\infty\\ \bx\cdot\tv_1=0} \quad \textrm{(localization transverse to $\R\tv_1$)}.\nn
\end{align}
We will reformulate \eqref{kp-evp} in a function space $\Ll^2_\kpar$, $0\le\kpar\le2\pi$, which incorporates the boundary conditions of   \eqref{kp-evp}; see \S\ref{sec:1.3} and \S\ref{FB-cyl}. 

There are two types of rational edge orientations:  zigzag-type  and  armchair-type;  see \S\ref{ZZ-AC}.
Edge state diagrams are plots of
  $E(\kpar)$ vs. $\kpar$. The global character of these diagrams depends strongly on the type of edge and on the manner in which the symmetries are broken by the perturbation; see Figures \ref{CnoP} and \ref{PnoC}. 

    \begin{figure}
  \includegraphics{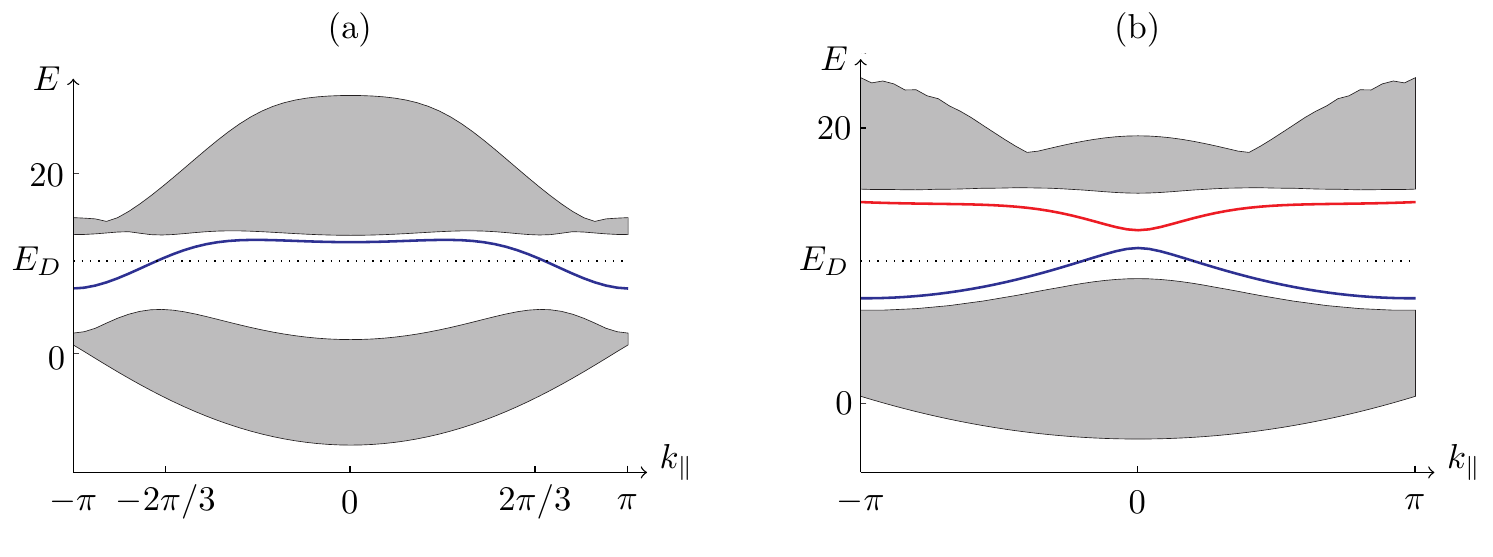}
  \caption{Numerical simulations of edge state curves (blue/red) for $\PP$-breaking, $\CC$-preserving deformations ($H^\delta_{\edge} = H^\delta$) and for (a) a zigzag edge; (b) an armchair edge. Here $\delta = 3$ and the bulk spectrum is in grey. 
   }\label{CnoP} %fig:1a
\end{figure}

In this paper we advance the spectral analysis 
of $H^\delta_{\edge}$  initiated in \cite{FLW-2d_materials:15,FLW-2d_edge:16}
 and continued in \cite{Drouot:18b,Drouot:19}.
   Edge states which bifurcate from Dirac points were first constructed for zigzag-type edges in \cite{FLW-2d_edge:16,Drouot:18b} and in a related 1D model in \cite{FLW-PNAS:14,FLW-MAMS:17}. See  also \cite{DFW:18,Drouot:18a} for extensions and refinements of \cite{FLW-PNAS:14,FLW-MAMS:17} in the context of a larger family of dislocation operators.  
 In \cite{FLW-2d_edge:16}  a Schur complement  / Lyapunov--Schmidt strategy was used. The paper  \cite{Drouot:18b} characterized all zigzag-type edge states via a resolvent expansion \cite{DFW:18}. This characterization implies that  $\CCC$-breaking induced edge states are topologically protected \cite{Drouot:18b,Drouot:19}.
   %
   %\footnote{\tr{Clarified previous results}}
   
We summarize the consequences of  Theorem \ref{res-exp}, and Corollaries \ref{KKp-eigsZZ}-\ref{KKp-eigsAC}:
  \begin{itemize}
  \item We provide a complete and detailed description of the spectrum of $H^\delta_{\edge}$
    in a neighborhood of  the Dirac energy, $E_D$,  for all rational edges and small $\delta$.
   \item   Our proof extends results of \cite{Drouot:18b}  to all rational edges, and thus encompasses the more subtle case of armchair-type edges. Our method of proof unifies the approaches of  \cite{FLW-2d_edge:16} and \cite{Drouot:18b}. Motivated by 
   \cite{FLW-PNAS:14,FLW-MAMS:17,FLW-2d_edge:16}  we use a
    Lyapunov-Schmidt / Schur reduction strategy to obtain resolvent expansions.

 \item
 We interpret the robustness of edge states, following the analysis of \cite{Drouot:18b,Drouot:19}.
\item  
We discuss implications of our results for the  {\it Valley Hall Effect}, which concerns quantum Hall-like energy transport in honeycomb structures in the absence of a  magnetic field; see \S\ref{vhall}.
   \end{itemize}

A brief summary of our results is in \S\ref{sec:1.3}. The detailed theorems appear in \S\ref{sec:5a}.

\begin{figure}
        \includegraphics{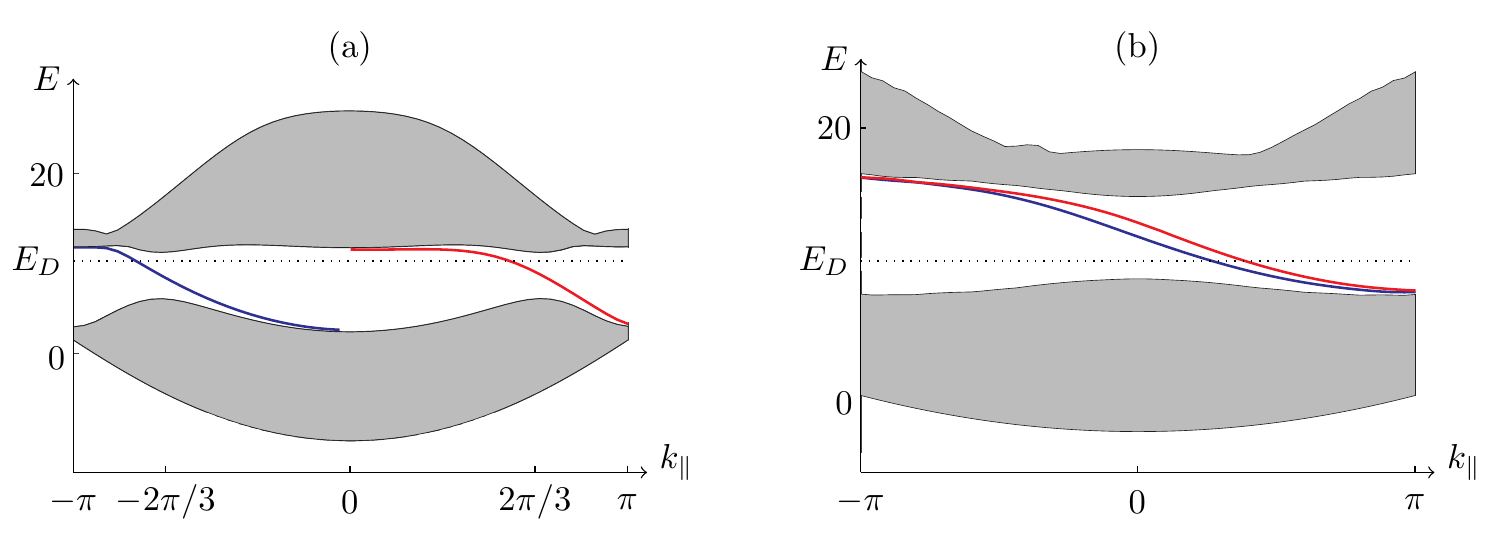}
    \caption{Numerical simulations of edge state curves (blue/red) for $\CC$-breaking, $\PP$-preserving deformations ($H^\delta_{\edge} = \tHd$) and for (a) a zigzag edge; (b) an armchair edge. The edge state curves traverse the bulk spectral gap. The spectral flow (signed count of eigenvalues crossing the gap) is $2$. It equals the difference of Chern numbers of low-lying eigenbundles at either side of the edge; see \cite{Drouot:18b,Drouot:19} and \S\ref{topology}.}\label{PnoC} %fig:2a
\end{figure}

\subsection{Honeycomb operators and Dirac points}\label{bulk}
We begin with a periodic self-adjoint Schr\"odinger operator: 
\begin{equation}
 H^0 \de -\Delta+V(\bx)\ \quad \textrm{ acting on $L^2=L^2(\R^2)$},\label{H0}
 \end{equation}
 where $V$ is real-valued and periodic with respect to the equilateral lattice $\Lambda=\Z\bv_1\oplus\Z\bv_2$. The corresponding dual lattice is 
\begin{equation}
\Lambda^*=2\pi \Z\bk_1\oplus 2\pi\Z\bk_2 \ \ \ \ \text{where} \ \ \ \ \bk_m \cdot \bv_n = \delta_{mn},\ \ m,n=1,2;
\nn\end{equation} 
see \S\ref{notation}.  The Brillouin zone $\brill\subset \R^2$ is a choice of fundamental cell of $\R^2/\Lambda^*\cong\mathbb{T}^2$, the regular hexagon in Figure \ref{fig:3}.

 For $\bk\in\R^2$, we let $L^2_\bk$ denote the space of $\bk$- pseudoperiodic functions:
 \begin{equation}
L^2_\bk \de \left\{ u \in L^2_{\rm loc}(\R^2) : \ u(\bx+\bv) = e^{i\bk\cdot\bv}\ u(\bx),\ \bv\in\Lambda \right\} .
\end{equation}
  Note that $L^2_{\bk+\bq}=L^2_\bk$ for $\bq\in\Lambda^*$, and hence $L^2_\bk$ is $\Lambda^*$-periodic in $\bk$.
For $\bk \in\R^2$, we let $H^0_\bk$ be equal to  $ H^0$ acting on $L^2_\bk$. This operator 
has  discrete spectrum  denoted $E_1(\bk)\le \dots\le E_b(\bk)\le\dots$, listed with multiplicity. The {\it dispersion relations} of $H^0$ are the eigenvalue maps $\bk \in \R^2 \mapsto E_b(\bk)$. These are $\Lambda^*$-periodic and Lipschitz continuous functions of $\bk$ \cite{avron-simon:78,FW:14}.  The spectrum of $H^0$ acting on $L^2=L^2(\R^2)$ is the union of the intervals (spectral bands) $E_b(\brill)$:
%\begin{equation}
$\sigma_{L^2}\left(H^0\right) = \bigcup_{b=1}^\infty E_b(\brill)$.
%\end{equation}
The collection of dispersion relations and corresponding eigenmodes form the {\it band structure} of $H^0$.
 
\begin{figure}
         \includegraphics{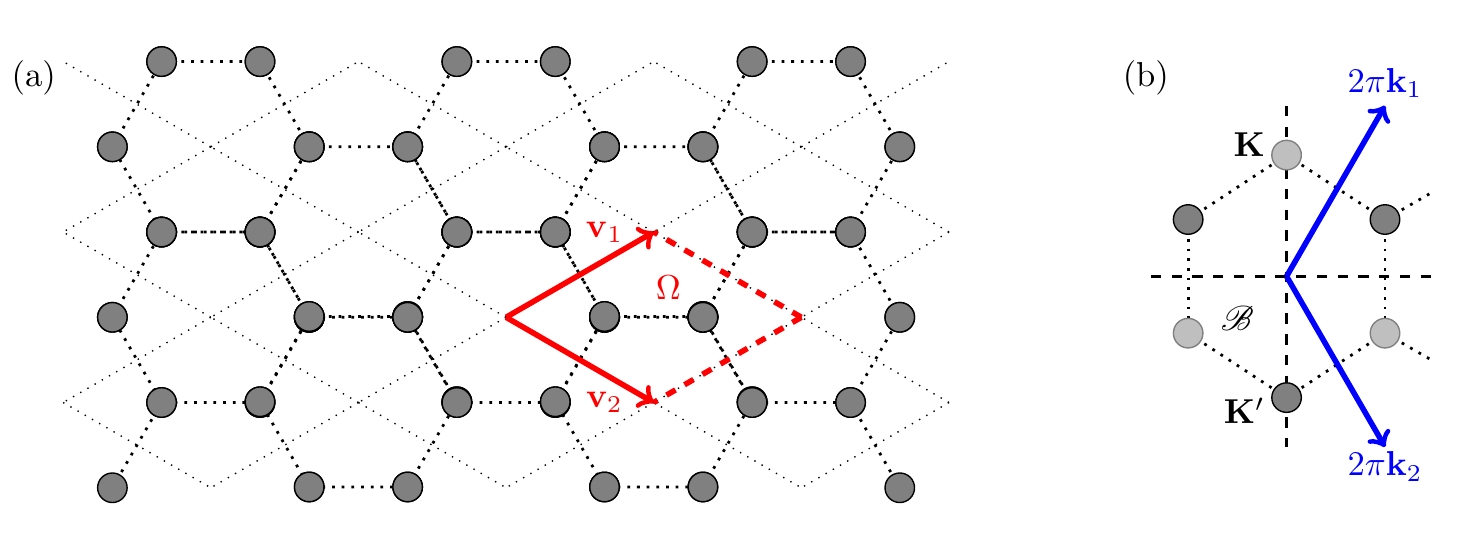}
    \caption{(a) Equilateral triangular lattice $\Lambda = \Z\bv_1 \oplus \Z\bv_2$. The circles make up the honeycomb structure --
    the union of two interpenetrating triangular sublattices. (b) Dual lattice $\Lambda^* = 2\pi \Z\bk_1\oplus 2\pi\Z\bk_2$, with Brillouin zone $\BBB$ and independent high-symmetry quasimomenta  $\bK+\Lambda^*$ and $\bK^\prime+\Lambda^*$. 
   }\label{fig:3}
\end{figure}

The function $V \in C^\infty(\R^2)$ is a {\it honeycomb  potential} if  $V$ is
 real-valued,  $\Lambda$-periodic, even and $2\pi/3$--rotationally invariant; see \cite[Definition 2.1]{FW:12}: 
\begin{equations}\label{symm-def}
 [\CCC,V(\bx)]=0,\quad [\PP ,V(\bx)]=0,\quad  
  [\RRR,V(\bx)]=0, \ \ \ \  \text{ where} \\
\CC[f](\bx) \de \overline{f(\bx)},\quad \PP[f](\bx) \de f(-\bx),\quad \RRR[f](\bx) \de f(R^*\bx)\ ,
\end{equations}
and $R$ denotes the $2\pi/3$--rotation in the plane.  The single electron model of graphene corresponds to $V$ equal to a sum of ``atomic potential wells'' over the set of  honeycomb vertices; see \cite[Section 2.3]{FW:12}.

 The vertices of the regular hexagon, $\brill$, are the points labeled $\bK$, $\bKp$ in Figure \ref{fig:3} and their rotations about the origin by $2\pi/3$. They play a distinguished role in the spectral theory of $H^0$: the commutator $[H^0_\bk,\RRR]$ vanishes if and only if $\bk$ is equal to $\bK$ or $\bKp$ modulo $\Lambda^*$.  Thus, in addition to $\PP\CC$-symmetry, $H^0_\bK$ and $H^0_{\bKp}$ are $\RRR$-invariant. The vertices $\bK$ and $\bKp$, and their dual lattice translates,
  are often called {\it high-symmetry quasimomenta}. Due to this extra rotational symmetry,  
   $L^2_\bK$ decomposes as an orthogonal sum over the eigenspaces of $\RRR$: 
   \[L^2_\bK=L^2_{\bK,1}\oplus L^2_{\bK,\tau}\oplus L^2_{\bK,\bar\tau}.\] 
Here, $1, \tau$ and $\bar\tau$ are the three cube roots of unity ($\tau=e^{2\pi i/3}$) and $L^2_{\bK,\omega}$ is the subspace of $L^2_\bK$ consisting of functions for which $\RRR f=\omega f$.
  
  Using the above observations, it was proved in \cite{FW:12,FLW-MAMS:17}  that
for generic honeycomb potentials $V$,  the band structure of  $H^0 = -\Delta + V$  has  Dirac points at  
the vertices of $\brill$;  see also \cite{C91,G09,LWZ:18,berkolaiko-comech:18,Aea:18}.
This means that there exist  $E_D$, $\vF>0$ and $b_*\ge1$ such that for $\bKs=\bK, \bKp$, the operator $H^0_\bKs$  has a double eigenvalue at energy $E_D$, at which two dispersion surfaces touch conically:
\begin{align}
\label{cones}
\begin{split}
E_{b_\star+1}(\bk)  &=  E_D + \vF\ | \bk-\bKs | \cdot  \big( 1\ +\ o(| \bk-\bKs | ) \big), \\
E_{b_\star}(\bk)  &=  E_D - \vF \ | \bk-\bKs | \cdot  \big( 1\ +\ o(| \bk-\bKs | ) \big), 
\quad \vF>0, \quad \text{$\bk$ near $\bKs$.}
\end{split}
\end{align}

  For spatially localized initial conditions the Schr\"odinger evolution disperses: $e^{-iH^0t}f$ spreads and decays as $t$ increases. When $f$ is spectrally concentrated in energy / quasi-momentum about a Dirac point, the effective evolution on large, finite time scales follows a time-dependent Dirac equation \cite{FW:14}. This explains the relativistic behavior of quasi-particles (wavepackets) in graphene \cite{RMP-Graphene:09}.  Dirac points persist under small $\Lambda$-periodic $\PP \CCC$-invariant perturbations, see \cite[Section 9]{FW:12}; in this case a tilted Dirac equation governs the character of its wave-packet evolution.

  This paper further explores how perturbations of $H^0$ affect the dynamics of wave-packets which are spectrally concentrated near $E_D$. If  $H^0$ is perturbed to $H=H^0+ Q$, where $Q$ is smooth, 
  real-valued and spatially localized, then the essential spectrum of $H$ is equal to that of $H^0$; this is Weyl's stability theorem \cite{RS4}. Since $E_D\in\sigma_{\rm ess}(H)$, we expect the dispersive character of the dynamics near energy $E_D$ to persist under such perturbations.
In contrast, a class of $\Lambda-$ periodic perturbations of  $H^0$ at spatial infinity which break $\PP\mathcal{C}$--invariance destabilizes Dirac points; see \cite[Remark 9.2]{FW:12}. Such perturbations may open a gap in the essential spectrum about $E_D$ and may produce defect state energies  within this gap. In the following sections we leverage this instability to construct Hamiltonians with edge states with energies near $E_D$.%
 
 \subsection{The edge state eigenvalue problem for rational edges}\label{general}
 
 We are interested in Hamiltonians $H_{\edge}^\delta$ that are line-defect perturbations of the honeycomb  Schr\"odinger operator $H^0$. We  briefly describe the construction of $H_{\edge}^\delta$ and refer to \S\ref{Hdelta} for details. 
 
 We start with the bulk honeycomb operator $H^0$ and we perform small (size $\delta$) $\Lambda-$ periodic deformations of $H^0$ at infinity,
  on either side of the line $\R\tv_1$. This yields perturbed bulk operators $H^\delta_{\bulk,\pm}$ at infinity, which are $\Lambda-$ periodic. In particular,  $H^\delta_{\bulk,\pm}$ commutes with translations in $\Z\tv_1$.
The operators $H^\delta_{\bulk,+}$ and $H^\delta_{\bulk,-}$ act on the spaces: 
 \begin{equation}\label{eq:1b}
 \Ll^2_\kpar \de \left\{ f \in L^2_\loc(\R^2) \ : \ f(\bx+\tv_1) = e^{i\kpar} f(\bx), \ \int_{\R^2/\Z\tv_1} |f(\bx)|^2 \ d\bx < \infty  \right\}, \ 
 \end{equation}
where $0\le\kpar\le2\pi$,   share a common spectral gap   of width $O(\delta)$ about energy $E_D$.  

We introduce an edge operator $H^\delta_{\edge}$, which  interpolates slowly and transversely  to the edge $\R\tv_1$ (length-scale $\delta^{-1}$), between $H^\delta_{\bulk,+}$ and $H^\delta_{\bulk,-}$. The interpolation is implemented via a {\it domain wall function};
  see \S\ref{edge-op}.   $H^\delta_{\edge}$ has only restricted periodicity; it commutes only with translations in $\Z \tv_1$.
  
We consider two types of asymptotic bulk operators, denoted $H^\delta_{\bulk,\pm}=H^\delta_\pm$  and 
$H^\delta_{\bulk,\pm}=\tHd_\pm$; see \cite{HR:07,RH:08,Shvets-PTI:13,LWZ:18}:
 \begin{itemize}
 \item[(i)] $\CCC$ is preserved and $\PPP$  is broken:
 \begin{equation}
  H^\delta_\pm \de -\Delta+V(\bx) \pm \delta \cdot W(\bx), \qquad \  W \textrm{ is odd and $\Lambda$-periodic}.
 \label{CCnoPP} \end{equation}
 \item[(ii)] $\PPP$ is preserved and $\CCC$  is broken;
  \begin{equations}\label{PPnoCC}
  \tHd_\pm  \de - \Delta + V(\bx) \pm \delta \cdot \dive\big( A(\bx) \cdot \nabla \big),\\
     A(\bx)  =\ i\matrice{0 & -a(\bx) \\ a(\bx) & 0}= a(\bx)\sigma_2\ , \qquad \textrm{ even and $\Lambda$-periodic}.
      \end{equations}
 \end{itemize}
The corresponding edge operators are denoted: $H_{\edge}^\delta=H^\delta$ in case (i) and $H_{\edge}^\delta=\tHd$ in case (ii). The global character of their edge state curves ($\kpar$ versus $E$) are quite different. This has dynamical and topological consequences,  see \S\ref{sec:1.3}-\ref{topology} and \S\ref{topology2}. 

\subsection{Summary of main results}\label{sec:1.3} Our main results are Theorem \ref{res-exp} and Corollaries \ref{KKp-eigsZZ} and \ref{KKp-eigsAC}. They fully describe the {\it $\kpar$-edge states} $(\Psi,E)$ of $H^\delta_{\edge}$, i.e. the solutions of 
             \begin{equation}
    H^\delta_{\edge}\Psi\ =\ E\ \Psi,\qquad \Psi\in \Ll^2_\kpar, \ \ \ \ H^\delta_{\edge} = H^\delta \text{ or } \tHd,
    \label{kp-evp1}
    \end{equation}    
    for $\kpar$ near $\bK \cdot \tv_1$ or $\bKp \cdot \tv_1$ and energies near $E_D$. 
Our results cover all rational edges (going beyond \cite{Drouot:18b}) for both $\PP$-- and $\CC$--breaking deformations of the bulk honeycomb operator, $H^0$. Hence this work gives a complete picture of the role in edge state formation and stability played by a) the manner in which symmetries are broken by the edge operator and b) the orientation of the edge, $\R\tv_1$.

Theorem \ref{res-exp} presents an expansion of  the $\Ll^2_0$-resolvent of $H_\edge^\delta$
for small $\delta$  and energies near $E_D$; note from \eqref{eq:1b} that
 $\Ll^2_0=L^2(\R^2/\Z\tv_1)$. 
For simplicity, we specialize  here to an armchair-type edge, $\R\tv_1$, and a $\CC$-preserving deformation ($H^\delta_{\edge} = H^\delta$), with $\kpar = \bK \cdot \tv_1 = \bKp \cdot \tv_1 = 0$. This case contains 
all essential characteristics and technical hurdles encountered in the general rational edge case.

The leading order term in the resolvent expansion is given in terms of the resolvent of an
effective Dirac operator.  Specifically, there exist:
\begin{itemize}
\item explicit bounded operators $J_\delta : \Ll^2_0 \rightarrow L^2(\R,\C^4)$,\ 
$J^*_\delta : L^2(\R,\C^4)\rightarrow \Ll^2_0 $;
\item a $\delta$-independent Dirac operator $\Dirac$ acting on $L^2(\R,\C^4)$;
\end{itemize}
such that if $z \not\in \sigma_{L^2}(\Dirac)$ and $\delta \rightarrow 0$, then 
\begin{equation}\label{eq:1h}
\left. \left( H^\delta-E_D - \delta z \right)^{-1}\right|_{\Ll^2_0} \ \sim \ \dfrac{1}{\delta} J_\delta^* \cdot \left( \Dirac - z\right)^{-1} \cdot J_\delta.
\end{equation}
The equation \eqref{eq:1h}  shows that $J_\delta$ approximately intertwines the operators $\Dirac$ and $H^\delta$ after the energy recentering and rescaling: $E=E_D +\delta z$. 

Corollary \ref{KKp-eigsAC}, a consequence of Theorem \ref{res-exp}, provides a bijection between the eigenvalues and spectral projections of $\Dirac$ and those of $H^\delta$ for energies near $E_D$. The operator $\Dirac$ has a block-diagonal structure:
\begin{equation}\label{eq:1q}
\Dirac = \matrice{\DiK & 0 \\ 0 & \DiKp}\ .
\end{equation}
 The operators $\DiK$ and $\DiKp$ are Dirac operators with spatially varying mass terms.
 They  act in $L^2(\R;\C^2)$, and control, for $\delta$ small,  the bifurcation of edge states from the Dirac points $(\bK,E_D)$ and $(\bKp,E_D)$. By results presented
    in \S\ref{eff-Dirac}, their point spectra are the same.  Hence, the point spectrum of $\Dirac$ consists of double eigenvalues; this explains why armchair-type edge state curves come in pairs; see Figure \ref{CnoP}, the numerical observations in \cite{FLW-2d_materials:15,LWZ:18} and the photonic results \cite{Rechtsman-etal:18}.
    
    \begin{remark}\label{0mode}
  Recall that $H^\delta_\edge$ is constructed as an adiabatic interpolation between asymptotic perturbed bulk operators  $H^\delta_{\bulk,\pm}$ via a {\it domain wall function};
  see \S\ref{edge-op}. It follows that the emergent Dirac operators have mass terms
  with opposite sign at $\pm\infty$. Thus, $\DiK$ and $\DiKp$ have a simple
     $L^2(\R;\C^2)-$ eigenvalue at zero energy; see Proposition \ref{cor-D-spec}.
The persistence of this
     zero mode of $\DiKs$ against localized perturbations guarantees that the spectra of $\DiK$ and $\DiKp$  seed at least one bifurcating edge state curve of $H^\delta_\edge$; see \cite{FLW-2d_edge:16}.
      See \cite{B19a,B19b,B19c} for topological interpretations. 
     \end{remark}

\subsection{Dynamical perspective} Theorem \ref{res-exp} and Corollaries \ref{KKp-eigsZZ}, \ref{KKp-eigsAC} apply to all rational edges, both $\PP$- and $\CC$-breaking perturbations; and quasimomenta $\kpar = \bKs \cdot \tv_1 + \delta \mu$ with $\bKs = \bK$ or $\bKp$ and $\mu = O(1)$. They involve $4 \times 4$ Dirac operator $\Dirac(\mu)$ with the block-diagonal structure \eqref{eq:1q} made up of two $2 \times 2$ Dirac operators $\DiK(\mu)$ and $\DiKp(\mu)$. Their spectra as functions of $\mu$ are represented on the left panels of Figure \ref{fig:7}. We derive the operators $\DiK(\mu)$ and $\DiKp(\mu)$ 
 and their spectra in \S\ref{eff-Dirac}.

 An analogous discussion to that given above for the case $\mu=0$ shows that the spectra of $\DiKs(\mu)$ determine the $\Ll^2_{\bKs \cdot \tv_1+\delta \mu}$-spectrum of $H^\delta_{\edge}$ near $E_D$.  The numerical simulations displayed in Figures \ref{CnoP} and \ref{PnoC} illustrate our main results and their dynamical implications:
\footnote{We are grateful to Y. Zhu and P. Hu for providing  Figures \ref{CnoP} and  \ref{PnoC}. The actual simulations
contain additional numerically-induced boundary modes, which are localized at the artificial computational boundary, which
is distant from the line defect. These spurious modes are not displayed in Figures \ref{CnoP}  and \ref{PnoC}.
The displayed edge state curves in Figures \ref{CnoP}  and \ref{PnoC} do not change as the size of the computational domain is increased.  Figures which display such modes are presented  in \cite{LWZ:18}. }
\begin{itemize}
\item  Figure \ref{CnoP}   displays edge state curves for $H^\delta_{\rm edge}$, which is  $\CCC-$ invariant. 
The two panels display the $(\kpar,E)$-plots of solutions of \eqref{kp-evp} for (a) the zigzag edge and (b) the armchair edge. By $\CCC-$ invariance,  the edge state curves are symmetric about $\kpar = 0$. Thus, about a fixed energy near $E_D$, there are wave-packets solutions of the time-dependent Schr\"odinger equation which travel in either direction along the edge.
Hence, wave-packets designed to propagate unidirectionally along the edge will scatter off localized imperfections and excite waves propagating in the opposite direction.

\item Figure \ref{PnoC}  displays edge state curves when the deformation breaks $\CCC-$ symmetry. Both edge state curves traverse the bulk spectral gap downward. Correspondingly, wave-packet solutions of the time-dependent Schr\"odinger equation travel unidirectionally (with negative group velocity). 
Wave-packets designed to propagate unidirectionally, when encountering localized imperfections, are not expected to backscatter: there is no accessible energy channel for propagation in the opposite direction. This argument does not however rule out some scattering into the bulk.
\end{itemize}

\subsection{A topological perspective}\label{topology}
In this section we explain the sense in which
(a) the family of edge states of Figure \ref{CnoP} is not topologically protected, and 
 (b) the family displayed in Figure \ref{PnoC} is topologically protected.
These conclusions follow from the detailed arguments in \cite{Drouot:18b,Drouot:19},
 applied to the present context.

Consider the  family of Fredholm operators 
\begin{equation}\label{eq:1r}
\kpar\to \left. H^\delta_{\edge}\right|_{\Ll^2_\kpar},  \ \ \ \ \kpar \in \R .
\end{equation}
This family depends periodically on $\kpar$. 
For each $\kpar$, there is a gap in the essential spectrum about
 energy $E_D$. 
 The {\it spectral flow} $\Sf\big(H^\delta_\edge,E_D\big)$
 of the family \eqref{eq:1r} is the signed number of eigenvalues crossing the energy level $E_D$ as $\kpar$ runs through $\R/2\pi\Z$; see \cite{Waterstraat:16} for an introduction. The count is $+1$ if the eigenvalue traverses $E_D$ downwards and $-1$ if it traverses $E_D$ upwards. The spectral flow is an integer-valued  topological invariant: it remains unchanged even against large compact operator perturbations of $H^\delta_{\edge}$ -- and more generally against gap-preserving deformations. 
  
For $\delta$ small,  the operator $\left. H^\delta_{\edge}\right|_{\Ll^2_\kpar}$ has no point spectrum near $E_D$ unless $\kpar$ is  near $\bK\cdot\tv_1$ or $\bKp\cdot\tv_1$. In each of these cases, the spectral characteristics are encoded in  the Dirac operators $\DiK(\mu)$ and $\DiKp(\mu)$. At the level of spectral flow, we have
 \begin{equation}
\Sf\left(H^\delta_\edge,E_D\right) = \Sf\left(\DiK,0\right) + \Sf\big(\DiKp,0\big);
\end{equation}
see \cite[\S1.7 and \S7]{Drouot:18b}. 
Furthermore, we have $\Sf(\DiKs,0)\in\{-1,+1\}$, where the sign is given explicitly in terms of  parameters in $\DiKs$.
This is a consequence of ODE arguments;
 see \cite{DFW:18,Drouot:18b} and the discussion in \S\ref{eff-Dirac}. See also \cite{B19a,B19b,B19c} for an in-depth topological and transport study of models built from these Dirac operators.

For $H^\delta$, the spectral flow vanishes because edge state energy curves are symmetric about $0$; see Figure \ref{CnoP}. One can remove such edge states via specifically designed compact operator perturbations; the family of edge states is not topologically protected.
 %This indicates non-topologically protected transport. In terms of physical mechanisms, specifically designed wave-packets %that ideally propagate unidirectionally must eventually scatter and couple into waves propagating in the opposite direction.
%

 The global character of the edge state energy curves for $\tHd$ is quite different; 
see Figure  \ref{PnoC}. In this case, the spectral flow is equal to $+2$ or $-2$; two families of edge states  persist across the bulk spectral gap and are stable against, for example, compact perturbations.
The family of edge states is topologically protected. We refer to \cite{Drouot:18b} for proofs and to \cite{Drouot:19} for extensions beyond gap preserving perturbations.

The existence (or absence) of topologically protected edge states relates to a principle called the {\it bulk-edge correspondence}; see, {\it e.g.}
 \cite{H93,EG02,KRS02,KS04a,KS04b,EGS05,T14,D:19,BKR17,BR18,B19}.  Since the gap at energy $E_D$ is open, we may associate to $H^\delta$ 
 two smooth vector bundles with base given by the torus $\R^2/\Lambda^*$. The fibers are
 eigenspaces   of $H_{+,\bk}^\delta$ and of  $H_{-,\bk}^\delta$ ($\bk\in\R^2/\Lambda^*$) associated with the bulk spectral band below $E_D$. The Chern number is an integer obtained by integrating a bundle curvature  over the base $\R^2/\Lambda^*$. The bulk-edge correspondence = anticipates  that the difference of Chern numbers associated to $H^\delta_+$ and $H^\delta_-$ is equal to $\Sf\left(H^\delta_\edge,E_D\right)$. The analogous prediction holds for $\tHd$.
 
For $H^\delta$, both Chern numbers vanish, by $\CCC-$ invariance.
Such structures are called topologically trivial. For $\tHd$, the difference of Chern numbers equals $\pm 2$. See  \cite{Drouot:19}, where the calculation is reduced to a standard two-band model.  Therefore, in both types of deformation, the spectral flow (boundary index) is equal to the difference of Chern numbers (bulk indices), a \textit{quantitative} bulk-edge correspondence for Hamiltonians $\tHd$ \cite{Drouot:19}.
From a dynamical point of view, $\tHd$ represents a configuration of materials that is insulating in the bulk but exhibits topologically stable transport along its line defect. It is a non-trivial example of topological insulator in a continuum PDE setting.

\subsection{Organization of the paper}\label{org} In \S\ref{prelim}, we review the Floquet-Bloch theory
 of $\Lambda$- and $\Z\tv_1$-periodic operators on $L^2(\R^2)$.  In \S\ref{Hdelta}
  we construct our class of edge operators $H^\delta_{\edge}$. We fix a rational edge direction and interpolate, 
 via a domain wall, between slightly deformed bulk operators $H^\delta_{\bulk,\pm}$ of the form \eqref{CCnoPP} or \eqref{PPnoCC}.  There are three key hypotheses: 
\begin{itemize}
\item[(H1)] The unperturbed bulk honeycomb operator $H^0=-\Delta+V$ has Dirac points;
\item[(H2)] $H^0$ satisfies the spectral no-fold condition (stated physically, $H^0$ is {\it semi-metallic} at energy $E_D$);
\item[(H3)]  The deformed bulk operators $H^\delta_{\bulk,+}$ and $H^\delta_{\bulk,+}$ satisfy a (generic) non-degeneracy condition.
\end{itemize}  
  
%\footnote{\tg{I shortened this part. I thought it emphasize too much the properties of the Dirac %operators, somewhat obscuring the core of the strategy}}   
%
In \S\ref{ms-Hd} a multiscale expansion \cite{FLW-2d_materials:15,FLW-2d_edge:16,LWZ:18}
 is used to construct approximate edge states as slowly varying linear combinations of Dirac point (energy-degenerate) Floquet--Bloch modes.
  The slowly varying mode-amplitudes are governed by a system of Dirac equations, whose spectral properties are summarized in \S\ref{D-eff-spec}.
This approximate construction implies the existence of genuine edge states with energies in the gap
 about the energy $E_D$; see \S\ref{app2gen}. 
Additionally, the analysis suggests  a classification of rational edges in two types: zigzag-type and  armchair-type. These depend on whether there is coupling between spectral components  near the quasi-momentum sublattices $\bK+\Lambda^*$ and $\bKp+\Lambda^*$; see \S\ref{ZZ-AC}.

 In \S\ref{main-results}  we present our main results: Theorem \ref{res-exp}, a resolvent expansion for energies
  near $E_D$, and Corollaries \ref{KKp-eigsZZ}, \ref{KKp-eigsAC}, a precise characterization of the point spectrum of $H^\delta_{\edge}$ near energy $E_D$. In \S\ref{topology2}-\ref{sec:5.3}, using topological arguments presented in \cite{Drouot:18b,Drouot:19}, we discuss the global properties of edge state curves which bifurcate from the Dirac point energy -- in the $\CC-$ invariant and $\CC-$ breaking cases. 
  The mathematical core of this paper is the proof of the resolvent expansion for general rational edges, Theorem \ref{res-exp}, via a new strategy.

\subsection{Notation and Conventions}\label{notation}
\begin{itemize}
\item We let $\Lambda=\Z\bv_1\oplus\Z\bv_2$ be the equilateral lattice and $\Lambda^*=2\pi\Z\bk_1\oplus2\pi\Z\bk_2$ be its dual:\\
 \begin{equations}\nn \bv_1=a\matrice{\sqrt3/2\\ 1/2},\quad 
  \bv_2=a\matrice{\sqrt3/2\\ -1/2} \ \ \text{ and } \ \ 
  \bk_1= a\matrice{1/2\\ \sqrt3/2},\quad 
  \bk_2=a\matrice{1/2\\ -\sqrt3/2},
  \end{equations}
   where $a^2=2/\sqrt{3}$ so that $|\bv_1\wedge\bv_2|=1$  and $\bk_j\cdot\bv_l=\delta_{jl}$.
   A fundamental period cell and Brillouin zone, $\brill$, are depicted in Figure \ref{fig:3}.
   \item High symmetry quasimomenta: 
   \begin{equation}
   \bK=\frac{2\pi}{3}(\bk_1-\bk_2),\qquad \bKp=-\bK;\label{K-hs}
   \end{equation}
see Figure \ref{fig:3} and \S\ref{HLP}.
   \item We will use $L^2$-based spaces: $L^2 = L^2(\R^2)$; quasi-periodic functions w.r.t. $\Lambda$: $L^2_\bk = L^2_\bk(\R^2 /\Lambda)$; quasi-periodic functions w.r.t. $\Z \tv_1$: $\Ll_\kpar^2 = \Ll^2_\kpar(\Sigma)$, where $\Sigma = \R^2/\Z\tv_1$. We define analogously Sobolev spaces $H^s$, $H^s_\bk$ and $\mathscr{H}^s_\kpar$. 
\item The Pauli matrices are
\begin{equation}\label{eq:1s}
\sigma_1 = \matrice{0 & 1 \\ 1 & 0}, \ \ \sigma_2 = \matrice{0 & -i \\ i & 0}, \ \  \sigma_3 = \matrice{1 & 0 \\ 0 & -1}.
\end{equation}
They satisfy $\sigma_j^2 = \Id$ and $\sigma_j \sigma_m = -\sigma_m \sigma_j$ for $j \neq m$.
\item $\CC$, $\PP$ and $\RRR$ denotes respectively the complex conjugation, parity inversion and $2\pi/3$ rotation; see \eqref{symm-def}. We shall refer to $\CC-$, $\PP-$ and $\PP\CC-$ invariant operators. 
\item If $\HH$ is a Hilbert space and $A$ is selfadjoint  operator on $\HH$, we denote the spectrum of $A$ by $\sigma_\HH(A)$.
\item If $\HH$ and $\HH'$ are Hilbert space and $\psi \in \HH$, we write $|\psi|_\HH$ for the norm of $\HH$; if $A : \HH \rightarrow \HH'$ is a bounded operator, the operator norm of $A$ is
\begin{equation}
\| A \|_{\HH \rightarrow \HH'} \de \sup_{|\psi|_\HH=1} |A\psi|_{\HH'}.
\end{equation}
If $\HH = \HH'$, we simply write $\| A \|_{\HH} = \| A \|_{\HH \rightarrow \HH}$.
\item If $\psi_\epsi \in \HH$ -- resp. $A_\epsi : \HH \rightarrow \HH$ is a linear operator --  and $f : \R \setminus \{0\} \rightarrow \R$, we write $\psi_\epsi = O_{\HH}\big(f(\epsi)\big)$ -- resp. $A_\epsi = \OO_{\HH \rightarrow \HH'}\big(f(\epsi)\big)$ -- when there exists $C > 0$ such that $|\psi_\epsi|_\HH \leq Cf(\epsi)$  -- resp. $\|A_\epsi\|_{\HH \rightarrow \HH'} \leq C f(\epsi)$ -- for $\epsi \in (0,1]$. If $\HH = \HH'$, we simply write $A_\epsi = \OO_\HH\big(f(\epsi)\big)$.
\end{itemize}

\noindent{\bf Acknowledgments.} This research was supported in part by 
National Science  Foundation grants DMS-1800086 (AD),  DMS-1412560, DMS-1620418 (MIW) and Simons Foundation Math + X Investigator Award \#376319 (MIW and AD). AD thanks MSRI for its hospitality and support (through DMS-1440140) during Fall, 2019.  The authors would like to thank Charles Fefferman, James Lee-Thorp, Mikael Rechtsman, Xu Yang and Yi Zhu for many stimulating discussions. 
We thank Yi Zhu and Pipi Hu for providing corroborating numerical simulations based on 
techniques developed in \cite{GYZ:19}.

\section{Preliminaries}\label{prelim}
In this section we review Floquet--Bloch theory for periodic operators. Basic references are \cite[Chapter XIII]{RS4} and \cite{Kuchment:16}.
We discuss decompositions of $L^2=L^2(\R^2)$ and $\Ll^2_\kpar=\Ll^2_\kpar(\R^2/\Z\tv_1)$:
\begin{equations} 
L^2 =  \int^\oplus_{\bk\in\brill} L^2_\bk(\R^2/\Lambda)  \ d\bk
\ \ \ \text{and} \ \ \ 
  \Ll^2_\kpar = \int^\oplus_{[-\pi,\pi]} L^2_{\kpar\ktilde_1+t \ktilde_2}(\R^2/\Lambda) \ dt.
\end{equations} 
These spaces arise in the study of  Hamiltonians  which are  invariant with respect to translations in $\Lambda=\Z\tv_1\oplus\Z\tv_2$ and $\Z\tv_1$, respectively.

\subsection{General Floquet--Bloch theory of operators acting in $L^2$}\label{Floquet}

Let $\Lambda$ be the equilateral lattice $\Z\bv_1\oplus\Z\bv_2\subset \R^2$ with fundamental cell $\Omega$. Let $\Lambda^*= 2\pi\Z\bk_1\oplus 2\pi\Z\bk_2\subset \R^2$ be the lattice dual to $\Lambda$, and $\mathcal{B}$ be the Brillouin zone of $\R^2/\Lambda^*$; see \S\ref{notation}.

We introduce the space $L^2_\bk = L^2_\bk \big(\R^2/\Lambda\big)$ of $\bk$-pseudoperiodic functions 
\begin{equation}\label{L2k}
L^2_\bk \de \left\{ F \in L^2_{\rm loc}(\R^2) : \ F(\bx+\bv) = e^{i\bk\cdot\bv} F(\bx), \ \bv\in\Lambda \right\} 
\end{equation}
equipped with the Hermitian inner product and norm:
\begin{equation}
 \left\langle F,G\right\rangle_{L^2_\bk} \de \int_\Omega \ove{F(\bx)}G(\bx)d\bx,\ \ \ \   
| F |_{L^2_\bk}^2 \de \int_\Omega |F(\bx)|^2 d\bx.
\end{equation}

Given a complex-valued function $f \in C_0^\infty(\R^2)$,  we associate the Gelfand--Bloch transform, the function of $(\bk,\bx)$ defined by:
\begin{equation}\label{ftil}
\widetilde{f}(\bk,\bx) \de  e^{i\bk\cdot \bx}\ \sum_{\bk^\prime \in \Lambda^*} \widehat{f}(\bk-\bk^\prime) e^{- i \bk^\prime\cdot \bx }, \ \ \ \ \widehat{f}(\xi) \de  \int_{\R^2} e^{-i\xi\cdot\by} f(\by)  d\by.
\end{equation}
For each $\bk\in\brill$, $\widetilde{f}(\bk,\bx)\in L^2_\bk$, and by the Fourier inversion formula, $f$ has a $L^2_\bk$-decomposition:
\begin{equation}\label{fL2k}
f(\bx) = \frac{1}{(2\pi)^2}\int_{\mathcal B} \widetilde{f}(\bk,\bx)  d\bk .
\end{equation}
The factor $(2\pi)^2=|2\pi\bk_1\wedge 2\pi\bk_2|$ is equal to the area of $\mathcal{B}$. 
The decomposition \eqref{fL2k} extends by density to all $f \in L^2$ using the Plancherel-like identity:
\begin{equation}
|f|_{L^2}^2= \dfrac{1}{(2\pi)^2} \int_\brill \big|\widetilde{f}(\cdot,\bk)\big|_{L^2_\bk}^2 \ d\bk.
\label{isometry}
\end{equation}

Let $T$ be an operator which is $\Lambda$-periodic, i.e. $(Tf)(\cdot+\bv)=T \big(f(\cdot+\bv)\big)$ for $\bv\in\Lambda$ and  $f$ in the domain of $T$.  Then $T$ maps $L^2_\bk$ to itself and we denote the resulting operator by $T_\bk$. We can study the action of $T$ on $L^2$ fiberwise. We write:
\begin{equation}\label{eq:1y}
 T =  \frac{1}{(2\pi)^2}\int_{\mathcal{B}}^\oplus T_\bk\   d\bk,\qquad\textrm{meaning}\quad  Tf(\bx) = \frac{1}{(2\pi)^2}\int_{\mathcal{B}} \big(T_\bk \widetilde{f}(\bk,\cdot)\big)(\bx)  \ d\bk.
\end{equation} 
Thanks to \eqref{isometry} and \eqref{eq:1y}, we can estimate the operator norm of $T$ from those of $T_\bk$:
\begin{equations}\label{eq:2j}
|Tf|_{L^2}^2  = \dfrac{1}{(2\pi)^2} \int_\brill \big| T_\bk\widetilde{f}(\cdot,\bk) \big|_{L^2_\bk}^2\ d\bk
 \\
 \leq \dfrac{1}{(2\pi)^2} \int_\brill \|T_\bk\|_{L^2_\bk}^2 \cdot |\widetilde{f}(\cdot,\bk)|_{L^2_\bk}^2\ d\bk \leq  \sup_{\bk\in\brill}\|T_\bk\|_{L^2_\bk}^2\cdot |f|_{L^2}^2.
\end{equations}
 By \eqref{eq:1y} the kernel $T(\bx,\by)$ of $T$ is  expressed in  terms of the kernels $T_\bk(\bx,\by)$ of $T_\bk$:
\begin{equation}\label{eq:4f}
T(\bx,\by) = \dfrac{1}{(2\pi)^2} \int_\mathcal{B} T_\bk(\bx,\by) d\bk.
\end{equation}

\subsection{Floquet--Bloch theory of operators acting on $\Ll^2_\kpar$}\label{FB-cyl}

Let $\tv_1, \tv_2$ be any vectors in $\Lambda$ such that $\Lambda=\Z\tv_1\oplus\Z\tv_2$ and define the cylinder $\Sigma=\R^2/\Z\tv_1$. For $\kpar \in [-\pi,\pi]$ we
 introduce the space
\begin{equation}\label{eq:9e}
\Ll^2_\kpar =  \Ll^2_\kpar(\Sigma) \de \left\{ f\in L^2_{\rm loc}(\R^2) :\ e^{-i\kpar\ktilde_1\cdot\bx } f(\bx)\in L^2\left(\Sigma\right) \right\};
\end{equation}
see also \eqref{eq:1b}. 
The Sobolev spaces $\Hh_\kpar^s$ are analogously defined.

In analogy with \eqref{fL2k}, functions in $\Ll^2_\kpar$ have a representation in terms of elements of $L^2_\bk$, where $\bk$ varies over a one dimensional Brillouin zone, the quasimomentum segment:  $\kpar \tk_1 + t \tk_2$, $|t|\le\pi$:
\begin{equation}
f(\bx) = \dfrac{1}{2\pi}\int_{-\pi}^\pi \widetilde{f}(\kpar\ktilde_1+t\ktilde_2,\bx) dt.
\label{slice-rep}
\end{equation}

The decomposition \eqref{slice-rep} implies relations analogous to \eqref{eq:1y}, \eqref{eq:2j} and \eqref{eq:4f}. If $T$ acting on $\Ll^2_\kpar$ commutes with $\Z \tv_1$-translations, then $T$ acts on $L^2_{\kpar \tk_1 + t\tk_2}$ for every $t \in \R$. The resulting operator is denoted $T_{t}$ and we have
\begin{equation}\label{TSig}
T =  \frac{1}{2\pi}\int_{[-\pi,\pi]}^\oplus T_t \ dt
\ \ \text{ meaning } \ \ 
Tf(\bx) = \frac{1}{2\pi}\int_{-\pi}^\pi \big(T_t \widetilde{f}(\kpar\ktilde_1+t\ktilde_2,\cdot)\big)(\bx)  dt.
\end{equation} 
The analog of \eqref{eq:2j} is
\begin{equation}\label{eq:2p}
\|Tf\|_{\Ll_\kpar^2}^2 \le \sup_{|t|\le\pi} \left\| T_t \right\|_{L^2_{\kpar\ktilde_1+t\ktilde_2}}^2\cdot |f|_{\Ll^2_\kpar}^2,
\end{equation}
and the analog of the representation \eqref{eq:4f} is
\begin{equation}\label{eq:9i}
T(\bx,\by) = \dfrac{1}{(2\pi)^2} \int_{-\pi}^\pi T_{t}(\bx,\by)  dt.
\end{equation}
The decomposition \eqref{TSig} allows us to build up  the $\Ll^2_\kpar$-spectrum from the collection of fiber spectra:
\begin{equation}\label{eq:2k}
\sigma_{\Ll^2_\kpar}(T)  = 
\bigcup_{t\in [-\pi,\pi]} \left\{\sigma_{L^2_\bk}(T_\bk) : \ \bk = \kpar \tk_1 + t \tk_2 \right\}\ .
 \end{equation}

\section{Honeycomb medium with line defect}\label{Hdelta}

Let $\Lambda = \Z \bv_1 \oplus \Z\bv_2$ be the equilateral lattice, with area-normalized fundamental cell;
 see \S\ref{notation}.
Let $\tv_1=a_1\bv_1+b_1\bv_2\in\Lambda$, where $a_1$ and $b_1$ are relatively prime, be the direction of the edge. 
There exist $a_2, b_2$ such that $a_1b_2-a_2b_1=1$ and we set $\tv_2=a_2\bv_1+b_2\bv_2$; this implies  $\Lambda=\Z\tv_1\oplus\Z\tv_2$. The dual lattice, $\Lambda^*$ is given by $\Lambda^*=\Z\ktilde_1\oplus\Z\ktilde_2$, 
 where $\ktilde_1=b_2\bk_1-a_2\bk_2$ and $\ktilde_2=-b_1\bk_1+a_1\bk_2$. See Figure \ref{fig:3}, showing the zigzag-edge direction $\tv_1 = \bv_1$ .

We next discuss the honeycomb Schr\"odinger operator $H^0$ and then build up our edge (line-defect)  Hamiltonians $H_\edge^\delta$ in several steps.  We adopt the framework of \cite{FLW-2d_edge:16}.

\subsection{Honeycomb Schr\"odinger operators and Dirac points}\label{HLP}  
Let $H^0=-\Delta+V$, where 
$V$ is a honeycomb lattice potential as defined in  \S\ref{bulk}, {\it i.e.} $V$ is $C^\infty(\R^2)$, real-valued, $\Lambda$-periodic, and invariant under $\RRR$ ($2\pi/3$-rotation) and $\PP$ (spatial inversion). 
Recall that the vertices of the Brillouin zone, $\brill$, are  high-symmetry quasimomenta  generated via $2\pi/3$ rotation of $\bK$ and $\bKp$; see \eqref{K-hs}.

We say that $-\Delta+V$ has a Dirac point at $(E_D,\bk_D)$ if (a) $E_D$ is a $L^2_{\bk_D}$-eigenvalue of $-\Delta+V$ of multiplicity $2$; and (b) there exists $b_*\ge1$  such that the dispersion surfaces $\bk\mapsto E_{b_*}(\bk), \ E_{b_*+1}(\bk)$  touch in isotropic cones: for some $\vF > 0$,
\begin{equations}
\label{cones1}
\begin{split}
E_{b_\star+1}(\bk)  =  E_D + \vF\ | \bk-\bk_D | \cdot  \big( 1\ +\ o(| \bk-\bk_D | ) \big), \\
E_{b_\star}(\bk)  =  E_D - \vF \ | \bk-\bk_D | \cdot  \big( 1\ +\ o(| \bk-\bk_D | ) \big).
\end{split}
\end{equations}

\begin{theorem}\label{FW12}\cite{FW:12,FLW-MAMS:17,LWZ:18}
For a generic choice\footnote{Generic has the following precise meaning. Assume $
\int_{\R^2/\Lambda}e^{-i(\bk_1+\bk_2)\cdot\bx}V(\bx)d\bx\ne0$. Then, for all real $\eps$, except possibly for a discrete set which includes $\eps=0$, the operators $H^{(\eps)}=-\Delta+\eps V$ have Dirac points.} of honeycomb potential $V$, the operator $-\Delta+V$ has Dirac points at $(\bK,E_D)$ and $(\bKp,E_D)$.
\end{theorem}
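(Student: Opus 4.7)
The plan is to exploit the symmetries to reduce the problem to a two-dimensional degenerate perturbation problem and then verify genericity via a shallow-potential computation.

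First I would work at the fiber $L^2_\bK$. Since $R\bK \equiv \bK$ mod $\Lambda^*$, the rotation operator $\RRR$ preserves $L^2_\bK$, and so does the antilinear operator $\PP\CC$. Both commute with $H^0_\bK$. Decompose
\[
L^2_\bK = L^2_{\bK,1} \oplus L^2_{\bK,\tau} \oplus L^2_{\bK,\bar\tau}, \qquad \tau = e^{2\pi i/3},
\]
into $\RRR$-eigenspaces. A direct check shows $\PP\CC$ maps $L^2_{\bK,\tau}$ antilinearly and isometrically onto $L^2_{\bK,\bar\tau}$, and commutes with $H^0_\bK$. Hence, for any simple eigenvalue of $H^0_\bK|_{L^2_{\bK,\tau}}$, there is a matching simple eigenvalue of $H^0_\bK|_{L^2_{\bK,\bar\tau}}$ at the same energy. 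Generically (in a sense made precise below), this pair is disjoint from the spectrum on $L^2_{\bK,1}$, and we obtain an isolated double eigenvalue $E_D$ of $H^0_\bK$ with eigenspace $\textrm{span}(\Phi_1,\Phi_2)$, $\Phi_1\in L^2_{\bK,\tau}$, $\Phi_2 = \PP\CC \Phi_1 \in L^2_{\bK,\bar\tau}$.

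Next I would derive the cone structure \eqref{cones1} by $\bk\cdot p$ perturbation theory. Writing $\psi\in L^2_\bk$ as $\psi=e^{i(\bk-\bK)\cdot\bx}\phi$ with $\phi\in L^2_\bK$, the eigenvalue equation becomes
\[
\bigl(H^0_\bK + 2(\bk-\bK)\cdot(-i\nabla + \bK) + |\bk-\bK|^2\bigr)\phi = E\,\phi.
\]
Standard Rayleigh--Schr\"odinger reduction to the degenerate subspace $\textrm{span}(\Phi_1,\Phi_2)$ yields a $2\times 2$ effective Hamiltonian of the form $E_D\,\Id + M(\bk-\bK) + O(|\bk-\bK|^2)$, with $M$ linear in $\bk-\bK$. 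The matrix elements of $M$ are integrals of $(-i\nabla+\bK)\Phi_j$ against $\Phi_\ell$. Using $\RRR \Phi_1 = \tau\Phi_1$, $\RRR\Phi_2 = \bar\tau\Phi_2$ and the $\RRR$-covariance of $-i\nabla$, one shows the diagonal entries vanish and the off-diagonal entry has the form $\vF(\delta k_1 + i\delta k_2)$ (in appropriate coordinates, up to a unimodular rotation), for some $\vF\in\C$. Thus the two eigenvalues of the effective matrix are $E_D \pm |\vF|\,|\bk-\bK|$, giving the isotropic cone of \eqref{cones1} provided $\vF\ne 0$.

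The key and only quantitative step is to exhibit a honeycomb potential for which $\vF\ne 0$ and the double eigenvalue is isolated. Here I would follow the strategy indicated in the footnote: consider the one-parameter family $H^{(\eps)} = -\Delta+\eps V$. At $\eps=0$, the operator on $L^2_\bK$ acts as a multiplication by $|\bK+\bK'|^2$ in Fourier modes indexed by $\bK'\in\Lambda^*$; the three vectors $\bK$, $\bK-2\pi\bk_1$, $\bK-2\pi\bk_2$ all have equal modulus $|\bK|$ and span a three-dimensional eigenspace on which $\RRR$ acts with eigenvalues $1,\tau,\bar\tau$, each with multiplicity one. First-order degenerate perturbation then produces eigenvalues whose splittings are explicit linear combinations of the Fourier coefficients of $V$; nondegeneracy reduces to $\widehat V(\bk_1+\bk_2)\ne 0$. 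A parallel computation for $\vF$ at $\eps$ small shows $\vF\ne 0$ generically. Analytic perturbation theory in $\eps$ (both the double eigenvalue $E_D(\eps)$ and the coefficient $\vF(\eps)$ depend analytically on $\eps$ away from isolated crossings) then yields the desired conclusion for all $\eps\in\R$ outside a discrete exceptional set. The main obstacle is carrying through the $\RRR$- and $\PP\CC$-symmetry analysis carefully enough to rule out pathological cancellations in $\vF$.

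Finally, the Dirac point at $\bKp=-\bK$ is obtained for free: $\PP$ is a unitary bijection $L^2_\bK\to L^2_\bKp$ that commutes with $H^0$, so the eigenvalue structure at $\bKp$ mirrors that at $\bK$, and \eqref{cones1} transports by the change of variables $\bk\mapsto -\bk$.
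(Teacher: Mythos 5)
The paper itself contains no proof of Theorem \ref{FW12}: it is quoted from \cite{FW:12,FLW-MAMS:17,LWZ:18}, so there is no internal argument to compare against. Your outline is, in substance, the strategy of those cited works: the $\RRR$-decomposition $L^2_\bK=L^2_{\bK,1}\oplus L^2_{\bK,\tau}\oplus L^2_{\bK,\bar\tau}$ with the antilinear $\PP\CC$ pairing of $L^2_{\bK,\tau}$ and $L^2_{\bK,\bar\tau}$ producing a two-fold degeneracy, the $\bk\cdot p$ reduction to a $2\times 2$ matrix whose off-diagonal entry is $\vF(\kappa_1+i\kappa_2)$ by rotational covariance (this is exactly the content of \eqref{1nab1} and \eqref{vF} in the paper), the small-$\eps$ computation on the three-fold degenerate free shell, and analytic continuation in $\eps$ to exclude at most a discrete set.

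Two caveats. First, the step you compress into one sentence ("analytic perturbation theory in $\eps$ \dots yields the conclusion outside a discrete exceptional set") is where essentially all of the work in \cite{FW:12} lies: one must set up globally-in-$\eps$ analytic quantities (an eigenvalue branch in $L^2_{\bK,\tau}$, the Fermi velocity, and a function detecting collision of the paired eigenvalue with the $L^2_{\bK,1}$ spectrum and with other $L^2_{\bK,\tau}$ eigenvalues), show they are real-analytic along the whole family $H^{(\eps)}$, and show they are not identically zero -- which is what the shallow-potential computation with $\widehat V$ at the first dual-lattice shell supplies. As written, your sketch asserts rather than establishes that these obstructions occur only on a discrete set, and it also does not address how the leading $2\times2$ reduction controls the error terms well enough to give the Lipschitz cone \eqref{cones1} (rather than merely the leading eigenvalues of the reduced matrix). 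Second, two small slips: with the paper's conventions the three equal-modulus shell vectors are $\bK$, $\bK-2\pi\bk_1$ and $\bK+2\pi\bk_2$ (your $\bK-2\pi\bk_2$ is not on the shell), and your $\bk\cdot p$ operator carries an extra $2(\bk-\bK)\cdot\bK$ term that is inconsistent with taking $\phi\in L^2_\bK$ pseudo-periodic; with that convention the linear term is $-2i(\bk-\bK)\cdot\nabla$. Neither affects the overall plan, which matches the cited proof.
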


We elaborate further in order to derive some useful relations. The kernel of $H^0_\bK-E_D$ is spanned by two orthonormalized Bloch eigenmodes $\Phi_1^\bK, \ \Phi_2^\bK \in L^2_\bK$, which satisfy
\begin{equation}\label{eq:1d}
\RRR \Phi_1^\bK = \tau \ \Phi_1^\bK,\ \ \ \CC\PP\Phi_1^\bK = \Phi_2^\bK, \ \ \ \ \RRR \Phi_2^\bK = \ove{\tau} \ \Phi_2^\bK\quad (\tau=e^{2\pi i/3}).
\end{equation}
Thus, $\Phi_1\in L^2_{\bK,\tau}$ and $\Phi_1\in L^2_{\bK,\bar\tau}$.\ 
We set $\Phi_1^\bKp = \PPP \Phi_1^\bK$ and $\Phi_2^\bKp = \PPP \Phi_2^\bK$. Since $\PP H^0_\bK = H^0_\bKp \PP$, the pair $\{\Phi_1^\bKp, \Phi_2^\bKp\}$ is a basis of $\ker\left(H^0_\bKp-E_D\right)$. Furthermore, by \eqref{eq:1d}:
\begin{equation}\label{eq:1g}
\RRR \Phi_1^\bKp = \tau \ \Phi_1^\bKp,\ \ \ \CC\PP\Phi_1^\bKp = \Phi_2^\bKp, \ \ \ \ \RRR \Phi_2^\bKp = \ove{\tau} \ \Phi_2^\bKp;
\end{equation}
$\Phi_1^\bKp\in L^2_{\bKp,\tau}$ and $\Phi_2^\bKp\in L^2_{\bKp,\bar\tau}$.
Furthermore, we may take $\Phi^{\bKs+\bq}=\Phi^\bKs$ for all $\bq\in\Lambda^*$ for $\bKs = \bK, \bKp$.

Using properties of $\RRR$ acting on the vectors $\Phi_j^\bKs$ ($j=1,2$, $\bKs=\bK, \bKp$),  we have
\begin{align}
\left\langle \Phi_1^\bKs,\nabla\Phi_1^\bKs\right\rangle\ &=\ 
\left\langle \Phi_2^\bKs,\nabla\Phi_2^\bKs\right\rangle\ 
=\ \begin{pmatrix} 0\\ 0\end{pmatrix},\quad a=1,2 \label{1nab1}
\end{align}
and that $\left\langle \Phi_1^\bK,-2i\nabla\Phi_2^\bK\right\rangle$ 
 is proportional to $(1, i)^\top$, an eigenvector of the $2\pi/3$ rotation matrix, $R$.
 Using that any scalar multiple of $\Phi_j^\bK$ and $\Phi_j^\bKp$ satisfies \eqref{eq:1d}, respectively \eqref{eq:1g},
we may replace $\Phi^\bKs_1$ by $e^{i\varphi}\Phi^\bKs_1$.
For an appropriate choice of phase, $\varphi$, 
we may take the proportionality factor (the {\it Fermi velocity}) to be strictly positive. Hence, 
\begin{equation}
\left\langle \Phi_1^\bK,-2i\nabla\Phi_2^\bK\right\rangle\ =\ \vF\ \begin{pmatrix} 1\\ i\end{pmatrix}\quad
 \textrm{and}\quad 
 \left\langle \Phi_1^\bKp,-2i\nabla\Phi_2^\bKp\right\rangle\ =\ -\vF\ \begin{pmatrix} 1\\ i\end{pmatrix},\  \vF>0\ .
\label{vF}\end{equation}

Going forward we make the following hypothesis on the bulk potential $V$:
\begin{equation}
{\bf (H1)}\ \ H^0\  \textrm{is a generic honeycomb operator so that Theorem \ref{FW12} applies with \eqref{vF}}.
\label{H1}\end{equation}

\subsection{The spectral no-fold condition; $H^0$ models a {\it semi-metal}
 at energy $E_D$.}\label{nf}

A second key hypothesis on $H^0 = -\Delta+V$ states that if $E_D$ is the energy of a Dirac point,
then  the two touching dispersion surfaces at energy $E_D$ do so only at  high symmetry quasimomenta: 

 \nit {\bf (H2)\ The spectral no-fold condition}
 
\nit  Let $\bKs=\bK$ or $\bK_\star=\bKp$. Then,
  for $b=b_*$ and $b_{*}+1$, 
\begin{equation}
{\rm if}\  E_b(\bKs+t\ktilde_2)\ =\ E_D,\ \ {\rm then}\ \ \bKs+t\ktilde_2\in \bK+\Lambda^*\ \  {\rm or}\ \  \bKs+t\ktilde_2\in\bKp+\Lambda^*.
%\systeme{E_j(\bk) = E_D, \\ \bk \in \bKs + \R \ktilde_2} \ \Rightarrow \ j \in \{b_\star, b_\star+1\} \text{ and } %\bk \in \{\bK,\bKp\}.
\label{H2}\end{equation}

An assumption similar to (H2) was introduced in \cite{FLW-2d_edge:16} and is implicit in the physics literature \cite{HR:07,RH:08}. The formulation in \cite{FLW-2d_edge:16} is tailored for zigzag-type rational edges and the generalization (H2) allows for armchair type edges as well; all arbitrary rational edges are now covered. 

Since the density of electronic states is zero at energy $E_D$, graphene is
often called a {\it semi-metal}. The condition (H2) holds in the (explicitly solvable) tight binding model of graphene \cite{RMP-Graphene:09} and by  \cite[Corollary 6.4]{FLW-CPAM:17} the condition (H2) holds in the strong binding regime.
Honeycomb structures that satisfy (H1) but fail to satisfy (H2)  can be thought as {\it metallic} at energy $E_D$. 
 An example where the no-fold condition fails is the case where $V$ is a small amplitude honeycomb potential 
and $\tv_1=\bv_1+\bv_2$, the armchair edge direction; see \cite[\S8]{FLW-2d_edge:16}.

In the following section we shall introduce edge perturbations of $H^0$, which destabilize Dirac points.
 Thus,  for $\bk$
near $\bKs$, there is a spectral gap about energy $E_D$. If (H2) fails, then we expect a wave-packet comprised of Floquet-Bloch modes near $\bKs$ to resonate with Floquet-Bloch modes of energies near $E_D$, but with quasi-momenta
 away from $\bKs$. Such wave-packets are conjectured to slowly radiate their energy into the bulk \cite[\S1.4]{FLW-2d_edge:16}.

\subsection{ Deformed bulk Hamiltonians}\label{dfrm-H0}
Henceforth, we assume that $H^0=-\Delta+V$ satisfies hypotheses (H1) and (H2) of \S\ref{HLP} and \ref{nf}.

In this section we define $\Lambda$-periodic perturbed bulk operators, perturbations of $H^0$: $H^\delta_{{\rm bulk},\pm}$: $H_\pm^\delta$ and  $\tHd_\pm$. Far from the line-defect, $H_\pm^\delta$ corresponds to $\mathcal{P}$-breaking and $\tHd_\pm$ to  $\mathcal{C}$-breaking deformations of $H^0$.

\subsubsection{$\PPP$-breaking and $\CCC$-invariant model}\label{Pbreak}

Let $W \in C^\infty(\R^2)$ be real-valued, $\Lambda$-periodic and odd. We set
\begin{equation}
H^\delta_\pm\ \de\ H^0\pm\delta W(\bx) = -\Delta + V(\bx) \pm \delta W(\bx).
\label{Hpm-bulk}
\end{equation}
We observe that $[\CCC,H^\delta_\pm] = 0$ while $[\PPP, H^\delta_\pm] \neq 0$. Since $\PP H^\delta_+=H^\delta_-\PP$, the operators $H^\delta_+$ and  $H^\delta_-$ have the same spectrum.

In addition to (H1) and (H2) we make the following assumption on $W$:
\begin{enumerate}
\item[{\bf (H3)}] {\bf Non-degeneracy:}  $W$ is smooth, real-valued, odd, $\Lambda$-periodic  and 
\begin{equation}
\var^\bKs =\blr{\Phi_1^\bKs, W(\bx) \Phi_1^\bKs} \neq 0,\ \ \bKs\ =\ \bK, \bKp.
\label{H3}
\end{equation}
Note that since $W(-\bx)=-W(\bx)$, we have $\var^\bKp=-\var^\bK$. Furthermore, $\vartheta^\bKs$ is invariant
under $\Phi_1\mapsto e^{i\theta}\Phi_1$, $\theta\in\R$.
\end{enumerate}

Conditions (H1),  (H2) and (H3) imply that $H^\delta$ has a gap in its  $\Ll^2_{\bKs \cdot \tv_1}$- essential spectrum,  centered at $E_D$ for $\bKs \in \{\bK, \bKp\}$. 
Indeed $H^\delta_+$ and $H^\delta_-$ have a common gap in their 
essential spectra;  
\begin{equation}\label{gap-d}
\left(\textrm{$\Ll^2_{\bKs \cdot \tv_1}$ essential spectrum of $H^\delta_\pm$}\right)  \cap \GG_\delta\ =\ \emptyset,\quad \GG_\delta \de (a_\delta,b_\delta),
\end{equation}
where, 
\begin{equation}
a_\delta =  E_D-\delta \cdot \var_\gap + O(\delta^2),\quad b_\delta =  E_D-\delta \cdot \var_\gap + O(\delta^2),\ \
\var_\gap = \big|\var^\bK\big|.
\label{C1gap-d}
\end{equation}
 This property is implicitly contained in \cite[Section 7.1]{FLW-2d_edge:16}; see \cite[Lemma 4.1 and 4.3]{Drouot:18b}, which imply \eqref{gap-d} via \eqref{eq:2k}.

\subsubsection{$\CCC$--breaking and $\PPP$--invariant model} Let $a \in C^\infty\big(\R^2)$ be real-valued, even and $\Lambda$-periodic. We set
\begin{equation}
\tHd_\pm = - \Delta + V(\bx) \pm  \delta \cdot \dive\big(  A(\bx)  \cdot \nabla \big), \ \ \ \ A(\bx) = i\matrice{0 & -a(\bx) \\ a(\bx) & 0 }\ =\ ia(\bx)\sigma_2.
\end{equation}
We have  $[\PPP,\tHd_\pm] = 0$ while $[\CCC, \tHd] \neq 0$. Furthermore, since $\CCC \tHd_+ = \tHd_- \CCC$,  the operators $\tHd_\pm$ have the same spectrum. In analogy with our discussion for $H^\delta_\pm$, if $H^0$ satisfies (H1), (H2) and the non-degeneracy assumption
\begin{equation}
{\bf (\widetilde{H3})}\qquad   \tvar^\bK \de \blr{\Phi_1^\bK, \dive\big(  A(\bx)  \cdot \nabla \Phi_1^\bK \big)} \neq 0
\end{equation}
then the operators $\tHd_+$ and $\tHd_-$ have a (common)  $\Ll^2_{\bKs \cdot \tv_1}$ gap in their essential spectra;
\begin{equation}\label{tgap-d}
\left(\textrm{$\Ll^2_{\bKs \cdot \tv_1}$ essential spectrum of $\tHd_\pm$}\right)  \cap \tilde{\GG}_\delta\ =\ \emptyset,\quad \tilde{\GG}_\delta \equiv (\ta_\delta,\bt_\delta)\ ,\ \textrm{where}
\end{equation}
\begin{equation}\label{Cgap-d}
\ta_\delta = E_D-\delta \cdot \tvar_\gap + O(\delta^2), \ \ \bt_\delta = E_D + \delta \cdot \tvar_\gap + O(\delta^2), \ \ \ \ \tvar_\gap \de |\tvar^\bK\big|.
 \end{equation}

\subsection{Edge operators and the edge state eigenvalue problem }\label{edge-op}

A {\it domain-wall function} is a real-valued function $\kappa \in C^\infty(\R)$, such that:
\begin{equation}
\lim_{s \rightarrow \pm \infty} \kappa(s) = \pm 1 \ \ \ {\rm and} \ \ \ \kappa^\prime \in L^\infty.
\label{dw-fn}
\end{equation}

We introduce and study Hamiltonians $H^\delta_{\edge}$ which slowly interpolate transversely to the rational line-defect $\R\tv_1$,  between $H^\delta_\pm$ as $\ktilde_2\cdot\bx \rightarrow \pm \infty$. This is realized  via a scaled domain wall function.  
We define the edge Hamiltonian, $H^\delta_{\edge}$,  with $\PPP$--breaking and $\CCC$--invariant bulk as
 \begin{equation}
  H^\delta \ =\   -\Delta  +  V(\bx)  +  \delta \cdot \kappa(\delta\ktilde_2\cdot \bx)W(\bx),\label{H_delta}
  \end{equation} 
and define the edge Hamiltonian, $H^\delta_{\edge}$, with $\CCC$--breaking and $\PPP$--invariant bulk as 
\begin{equation}
\tHd \ =\  - \Delta + V(\bx) + \delta \cdot \dive\big( \kappa(\delta \tk_2 \cdot \bx) A(\bx) \nabla \big)\label{tH_delta}
\end{equation}
Since $\tk_2 \cdot \tv_1 = 0$, the operators $H^\delta_{\edge}$ commute with translations in  $\Z\tv_1$.

Hamiltonians of the type $H^\delta$ were introduced in \cite{FLW-PNAS:14,FLW-MAMS:17,FLW-2d_edge:16}.  Those of the type $\tHd$ are closely related to photonic settings studied in \cite{HR:07,RH:08,Shvets-PTI:13,LWZ:18}. In \cite{LWZ:18} a broader class of anisotropic honeycomb photonic (unperturbed) media is introduced; the analysis of the present paper extends to such operators and their perturbations.
 Both $H^\delta$ and $\tHd$ incorporate the essential features of key physical models \cite{HR:07,RH:08}. A magnetic perturbation (breaking $\CCC$-invariance) of $H^0$ was investigated in \cite{Drouot:18b,Drouot:19}. 
 
 In the remainder of the paper, we focus on edge states with energy near $E_D$. These are the  $\Ll^2_\kpar$-eigenvalues of $H_{\edge}^\delta$:
 \begin{equation}
\label{kp-evpp} 
H_{\edge}^\delta\Psi\ =\ E\  \Psi,\ \ \ \ \Psi \in \Hh^2_\kpar, \ \ \ \  \kpar \in [-\pi,\pi), \ \ \ \ E \in \GG^\delta\quad (\textrm{resp.}\ \tilde{\GG}^\delta\ ). 
\end{equation}

%
%
%\footnote{\tr{I removed \S3.5 for mathematical reasons:}
%\begin{itemize}
%\item (3.20) in kkp20 \tr{was a decomposition of the resolvent of $H^\delta|_{\Ll^2_\kpar}$ as a fiberwise integral of resolvents of  $H^\delta|_{\l^2_\bK+t\tk_2}$. though $H^\delta$ does not act on those spaces.}
%\item In kkp20, "For $z=O(1)$, sufficiently small and not eigenvalue of $H^\delta\big|_{\Ll^2_\kpar}$, $E_D+\delta z$ is $O(\delta)$-distant from the spectrum" \tr{Even if you correct to "fix $z$; if for all $\delta$ sufficiently small, $E_D + \delta z$ is not an eigenvalue of $H^\delta|_{\Ll^2_\bk}$, then $E_D + \delta z$ is $\delta$-distant
%from the spectrum", I do not see a basis for this observation. It needs that the spectrum of $H^\delta|_{\Ll^2_\bk}$ is made of points of the form $E_D + \delta \var + o(\delta)$, where $\var$ belongs to a discrete set. Of course our theorem proves this but we do not know it yet. In other scalings (e.g. semiclassical) eigenvalues can be much closer than $\delta$.}
%\end{itemize}
%}
%
%
%\footnote{\tr{I moved the section classifying edges to \S4.4. I simplified it:
%\begin{itemize}
%\item[1] I thought we should state the results before trying to explain the strategy. At this point we are lacking most of the objects -- Dirac ops etc. 
%\item[2] I incorporated the part describing AC versus ZZ to \S4.4. It is a more natural place as we already know that  Dirac points contribute to the most mass of approx. edge states. Using \eqref{slice-rep} alone allows to distinguish AC versus ZZ. The resolvent is not needed.
%\end{itemize}}
%}
%
%

\section{Multiscale analysis and effective Dirac operators}\label{eff-Dirac}
%\label{sec:4}  

We review the multi-scale construction of approximate edge states \cite{FLW-PNAS:14,FLW-2d_edge:16,FLW-2d_materials:15,FLW-MAMS:17}. 
Effective Dirac operators  emerge as determining the transverse localization of these states. Since edge states eigenvalues are poles of a resolvent they therefore also emerge in the leading order term in the resolvent expansion of $H^\delta_\edge$ near energy $E_D$; see Theorem \ref{res-exp}.  Before embarking on this expansion, we make a convenient choice of basis for the eigenspace associate with the Dirac points $(\bK,E_D)$ and  $(\bKp,E_D)$. 
Let $\bl$ be such that $\bl\cdot\tv_1=1$ and $\tk_2\cdot\bl=0$, thus 
\begin{equation}
\bl\ =\ \ktilde_1\ -\ \left[\ \frac{\ktilde_2}{|\ktilde_2|}\cdot\ktilde_1\right]\ \frac{\ktilde_2}{|\ktilde_2|}.
\label{bl-def}
\end{equation}

\begin{proposition}\label{ips}  Let $\Phi_j^\bKs$, $j=1,2$,  with $\bKs = \bK$ or $\bKp$, be given as required in hypothesis (H1); see Theorem \ref{FW12} and \eqref{vF}.  \\
(a) There exists $\theta\in\R$ such that if we redefine  $\Phi_1^\bKs$ as $e^{i\theta}\Phi_1^\bKs$, and $\Phi_2^\bKs$ as $\PPP\CCC [e^{i\theta}\Phi_1^\bKs]$, then 
we obtain an orthonormal basis for $L^2_\bKs-{\rm kernel}(H^0-E_D)$, $\{\Phi_1^\bKs,\Phi_2^\bKs\}$ with $\Phi_1^\bKs\in L^2_{\bKs,\tau}$, $\Phi_2^\bKs\in L^2_{\bKs,\bar\tau}$ and such that for all  $\br \in \R^2$:
\begin{equation}\label{lip-ip} \ 
   \blr{ (\Phi^\bKs)^\top, -2i\br\cdot\nabla\Phi^\bKs  }
    \ =\  \frac{\vF^\bKs}{|\tk_2|}\ \Big(\ \left(\ktilde_2\cdot\br\right)\  \sigma_1 + \det[\br, \ktilde_2] \  \sigma_2\Big)\ .
\end{equation} 
Here,  $\vF^\bK=\vF$ and $\vF^\bKp=-\vF$. 

\nit (b) Furthermore, if  $\bl$ is defined as in \eqref{bl-def},  then 
\begin{equation}\label{lip-ipA}
   \blr{ (\Phi^\bKs)^\top, -2i\bl\cdot\nabla\Phi^\bKs  }
    \ =\  \frac{\vF^\bKs}{|\tk_2|}\   \sigma_2,
\end{equation} 
\end{proposition}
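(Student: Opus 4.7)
The approach is to reduce (a) to a short algebraic computation: the matrix $M(\br) \de \blr{(\Phi^\bKs)^\top,-2i\br\cdot\nabla \Phi^\bKs}$ is linear in $\br$, and the statement amounts to matching its entries against the Pauli decomposition. First I would use \eqref{1nab1} to conclude that the diagonal entries $M_{11}(\br)=M_{22}(\br)=0$. Selfadjointness of the real operator $-i\br\cdot\nabla$ gives $M_{21}=\ove{M_{12}}$, so $M(\br)=\alpha(\br)\sigma_1+\beta(\br)\sigma_2$ with real $\alpha,\beta$ and $M_{12}(\br)=\alpha(\br)-i\beta(\br)$. What remains is to identify $(\alpha,\beta)$.

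\textbf{Main computation and gauge normalization.} From \eqref{vF} at $\bKs=\bK$, the original choice of basis gives $M_{12}(\br)=\vF(r_1+ir_2)$ where $\br=(r_1,r_2)$. The key algebraic identity, with $\ktilde_2=(\ktilde_2^{(1)},\ktilde_2^{(2)})$, is
\begin{equation*}
(\ktilde_2\cdot\br) - i\det[\br,\ktilde_2] \;=\; (r_1+ir_2)\bigl(\ktilde_2^{(1)} - i\ktilde_2^{(2)}\bigr),
\end{equation*}
whose right factor has modulus $|\ktilde_2|$; write it as $|\ktilde_2|e^{-i\phi}$. Under a gauge change $\Phi_1^\bK\mapsto e^{i\theta}\Phi_1^\bK$, the $\CCC\PPP$ constraint in \eqref{eq:1d} forces $\Phi_2^\bK\mapsto e^{-i\theta}\Phi_2^\bK$ (this preserves the $\RRR$-isotypic decomposition $L^2_{\bK,\tau}\oplus L^2_{\bK,\bar\tau}$ since $\RRR$ acts by scalars on each summand), and $M_{12}$ gets multiplied by $e^{-2i\theta}$. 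Choosing $\theta=\phi/2$ therefore converts $M_{12}(\br)$ into $\tfrac{\vF}{|\ktilde_2|}\bigl((\ktilde_2\cdot\br) - i\det[\br,\ktilde_2]\bigr)$, which is (a) at $\bKs=\bK$.

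\textbf{Transfer to $\bKp$ and part (b).} Because $\Phi_j^\bKp=\PPP\Phi_j^\bK$ and $\PPP$ commutes with $\RRR$ and with scalar multiplication, the same phase $\theta$ automatically normalizes the basis at $\bKp$. The second identity in \eqref{vF}, $M_{12}^\bKp(\br)=-\vF(r_1+ir_2)$, is then recast by the identical algebraic step into $\tfrac{\vF^\bKp}{|\ktilde_2|}\bigl((\ktilde_2\cdot\br)-i\det[\br,\ktilde_2]\bigr)$, with $\vF^\bKp=-\vF$. Part (b) is a direct substitution: $\bl\cdot\ktilde_2=0$ by construction in \eqref{bl-def}, which annihilates the $\sigma_1$ coefficient; and the dual-basis relations $\ktilde_j\cdot\tv_l=\delta_{jl}$ force $\bl$ to be parallel to $\tv_1$ with $\bl=\tv_1/|\tv_1|^2$ and $|\ktilde_2|=|\tv_1|$, from which $\det[\bl,\ktilde_2]=1$ in the paper's orientation, giving $M(\bl)=\tfrac{\vF^\bKs}{|\tk_2|}\sigma_2$. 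The only real obstacle is bookkeeping: a single gauge angle $\theta=\phi/2$ must simultaneously normalize the four basis vectors across both Dirac fibers, and parity symmetry between the $\bK$- and $\bKp$-eigenspaces is exactly what guarantees this compatibility.
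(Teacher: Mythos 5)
Your proposal is correct and follows essentially the paper's own argument: vanishing diagonal entries via \eqref{1nab1}, Hermiticity for the off-diagonal structure, the identity $(\ktilde_2\cdot\br)-i\det[\br,\ktilde_2]=(r_1+ir_2)(\ktilde_2^{(1)}-i\ktilde_2^{(2)})$, and a gauge phase $e^{-2i\theta}$ (the paper's $\overline{\omega}^{\,2}$) chosen to absorb the phase of $\ktilde_2^{(1)}-i\ktilde_2^{(2)}$, with the $\bKp$ case handled by the second relation in \eqref{vF} and part (b) by substituting $\ktilde_2\cdot\bl=0$, $\det[\bl,\ktilde_2]=1$. The only cosmetic differences (treating general $\br$ at once rather than first $\br=\ktilde_2$, and identifying $\bl=\tv_1/|\tv_1|^2$ rather than computing $\rho$) do not change the substance.
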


\begin{proposition}\label{thetaKs} Let $W$ be as in $(H3)$ and $A$ be as in $(\widetilde{H3})$
 below.  For $\bKs = \bK,\ \bKp$:
\begin{align}
 &\blr{(\Phi^\bKs)^\top, W \Phi^\bKs} 
  = \var^\bKs \sigma_3, \quad \textrm{where}\quad  \var^\bK\ =\ -\var^\bKp. 
  \label{varKs}\\
&\left\langle (\Phi^\bKs)^\top, \dive\left( A \cdot \nabla \Phi^\bKs \right)\right\rangle = \tvar^\bKs \sigma_3,
 \quad \textrm{where}\quad \tvar^\bK\ =\ \tvar^\bKp.
 \label{tvar-def}
\end{align}
\end{proposition}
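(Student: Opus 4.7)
The plan is to establish \eqref{varKs} and \eqref{tvar-def} by systematically exploiting three symmetries: the $\RRR$-isotypic decomposition of $\ker(H^0_\bKs - E_D)$, the antiunitary identification $\Phi_2^\bKs = \CCC\PPP\Phi_1^\bKs$ coming from Proposition \ref{ips}, and the relation $\Phi_j^\bKp = \PPP\Phi_j^\bK$. I treat the multiplication operator $W$ and the first-order operator $L \equiv \dive(A\nabla\cdot)$ in parallel, noting that a brief computation with $A=a\sigma_2$ (and cancellation of the second-order term) gives
\begin{equation}
L\ =\ -i\big[(\partial_1 a)\partial_2 - (\partial_2 a)\partial_1\big]. \nonumber
\end{equation}

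First I would show the off-diagonal entries vanish. Both $W$ and $L$ commute with $\RRR$: for $W$ this follows from the $\RRR$-invariance of the deformation potential; for $L$ it follows from the $\RRR$-invariance of $a$ together with the identity $R^\top\sigma_2 R = \sigma_2$ valid for $R\in SO(2)$. Consequently $T\Phi_2^\bKs\in L^2_{\bKs,\bar\tau}$ for $T\in\{W,L\}$, while $\Phi_1^\bKs\in L^2_{\bKs,\tau}$, and orthogonality of the distinct $\RRR$-isotypic subspaces forces $\langle\Phi_1^\bKs, T\Phi_2^\bKs\rangle=0$.

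Next I would establish the $\sigma_3$-structure of the diagonal. The key algebraic identity is the anticommutation $\{T,\CCC\PPP\}=0$ for both $T$: for $W$ this is immediate from its real-valuedness and oddness; for $L$ one checks separately that $\PPP L = L\PPP$ (using that $\partial_j a$ is odd, together with the sign from the chain rule on $\nabla$) and that $\CCC L = -L\CCC$ (visible from the overall $-i$ in the formula for $L$). Combining $\Phi_2^\bKs = \CCC\PPP\Phi_1^\bKs$ with the antiunitarity identity $\langle\CCC\PPP f,\CCC\PPP g\rangle = \overline{\langle f,g\rangle}$ yields
\begin{equation}
\langle\Phi_2^\bKs, T\Phi_2^\bKs\rangle\ =\ \langle\CCC\PPP\Phi_1^\bKs,-\CCC\PPP T\Phi_1^\bKs\rangle\ =\ -\overline{\langle\Phi_1^\bKs, T\Phi_1^\bKs\rangle}. \nonumber
\end{equation}
Since $W$ and $L$ are both self-adjoint ($L$ because $\tHd$ is, equivalently because $L = -i\vec{v}\cdot\nabla$ with divergence-free real $\vec{v}$), the numbers $\var^\bKs$ and $\tvar^\bKs$ are real, so the $(2,2)$-entry equals minus the $(1,1)$-entry, as claimed.

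Finally, the relations $\var^\bKp=-\var^\bK$ and $\tvar^\bKp=\tvar^\bK$ follow from $\Phi_1^\bKp=\PPP\Phi_1^\bK$ and the unitarity of $\PPP$: the anticommutation $\PPP W=-W\PPP$ (oddness of $W$) produces the sign flip for $\var$, while the commutation $\PPP L=L\PPP$ (already recorded above) produces the equality for $\tvar$. The main technical point deserving care is the verification of $[\RRR,L]=0$ and $[\PPP,L]=0$ for the bianisotropic operator, which requires carefully tracking how the chain rule interacts with the matrix structure of $A=a\sigma_2$; once these commutations are in hand, the rest of the argument reduces to formal antilinear algebra.
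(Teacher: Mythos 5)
Your treatment of the diagonal entries, of the reality of $\var^\bKs,\tvar^\bKs$, and of the relations $\var^\bKp=-\var^\bK$, $\tvar^\bKp=\tvar^\bK$ is correct (and your explicit formula $\dive(A\nabla\cdot)=-i\big[(\partial_1 a)\partial_2-(\partial_2 a)\partial_1\big]$ with its $\PPP$-commutation and $\CCC$-anticommutation is right). But the off-diagonal step has a genuine gap: you annihilate $\blr{\Phi_1^\bKs,T\Phi_2^\bKs}$, $T\in\{W,\dive(A\nabla\cdot)\}$, by asserting $[\RRR,W]=0$ and $[\RRR,\dive(A\nabla\cdot)]=0$, i.e.\ $2\pi/3$-rotation invariance of $W$ and of $a$. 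That is not among the hypotheses: (H3) only requires $W$ smooth, real, odd and $\Lambda$-periodic, and $(\widetilde{H3})$ only requires $a$ real, even and $\Lambda$-periodic; only the unperturbed $V$ is assumed $\RRR$-invariant in \eqref{symm-def}. For a perturbation that is odd but not $\RRR$-invariant, $W\Phi_2^\bKs$ need not lie in $L^2_{\bKs,\bar\tau}$, so the isotypic-orthogonality argument collapses, and the same objection applies to $a$.

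The conclusion survives, and in fact follows from the symmetry machinery you already set up for the diagonal: since $T$ anticommutes with the antiunitary $\CCC\PPP$, is self-adjoint, and $\blr{f,\CCC\PPP g}=\blr{g,\CCC\PPP f}$, one gets $\blr{\Phi_1,T\Phi_2}=-\blr{\Phi_1,\CCC\PPP T\Phi_1}=-\blr{T\Phi_1,\CCC\PPP\Phi_1}=-\blr{\Phi_1,T\Phi_2}$, hence zero; for $W$ this is just the change of variables $\bx\mapsto-\bx$ together with oddness, which is exactly what the paper invokes (its proof of \eqref{varKs} is the one line ``follows from $W$ being odd,'' and \eqref{tvar-def} is delegated to \cite[Proposition 5.1 and Section 7.1]{LWZ:18}). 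So: repair the off-diagonal step by replacing the $\RRR$-isotypic argument with the $\CCC\PPP$/parity argument (or add $\RRR$-invariance of $W$ and $a$ as an explicit extra hypothesis, which the paper does not make); with that repair your proof is a valid, self-contained alternative to the paper's citation-based one.
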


 \nit We remark first on the proof of Proposition \ref{thetaKs}, and then give the proof of Proposition 
 \ref{ips}.
 \medskip
 
 \nit{\it Proof of Proposition \ref{thetaKs}:} The relation \eqref{varKs} follows from $W$ being odd.
Relation \eqref{tvar-def} is proved in  \cite[Proposition 5.1 and Section 7.1]{LWZ:18}. 
\medskip

\nit{\it Proof of Proposition \ref{ips}:} We begin by proving \eqref{lip-ip} for the case $\br=\tk_2$. Start with the basis $\{\Phi_1^\bKs,\Phi_2^\bKs\}$  of
 $\textrm{kernel}(H^0-E_D)$ for which \eqref{vF} and \eqref{1nab1} hold.  Let $\ktilde_2=(\ktilde_2^{(1)},\ktilde_2^{(2)})$. 
  First consider $\bKs=\bK$. By \eqref{1nab1}, the diagonal entries of \eqref{lip-ip} vanish.
 Concerning the off-diagonal elements, we note first that
 $\blr{\Phi^\bK_2, -2i\ktilde_2\cdot\nabla \Phi^\bK_1}= \overline{\blr{\Phi^\bK_1, -2i\ktilde_2\cdot\nabla \Phi^\bK_2}}$; the matrix is Hermitian.  By \eqref{vF}, 
 $\blr{\Phi^\bK_2, -2i\ktilde_2\cdot\nabla \Phi^\bK_1}= \vF\ (\ktilde_2^{(1)}+i\ktilde_2^{(2)})$. Define 
  $\hPhi_1=\omega\Phi_1$ and $\hPhi_2=\overline\omega\Phi_2$, where $|\omega|=1$ and is to be determined.  We have
  $ \blr{\hPhi^\bK_2, -2i\ktilde_2\cdot\nabla \hPhi^\bK_1}= \vF\ \overline{\omega}^2\ (\ktilde_2^{(1)}+i\ktilde_2^{(2)})$. Choose $\omega$ so that $\overline{\omega}^2=(\ktilde_2^{(1)}-i\ktilde_2^{(2)})/|\ktilde_2|$.  Then,
 $ \blr{\hPhi^\bK_2, -2i\ktilde_2\cdot\nabla \hPhi^\bK_1}= \vF\ |\ktilde_2|$.
This implies \eqref{lip-ip} for the special case $\tk=\tk_2$ and $\bKs=\bK$. The relation for $\tk=\tk_2$ and $\bKs=\bKp$ follows by the same argument and using the second relation in \eqref{vF}.   

   To prove \eqref{lip-ip} for general $\br\in\R^2$, we
   first fix $\bKs=\bK$ and let $\omega$ and $\hPhi_j$ be chosen as above. Again the diagonal elements vanish and the matrix is Hermitian. Furthermore 
$   \left\langle \hPhi_1^\bK, -2i\br\cdot\hPhi_2^\bK\right\rangle\ =\ \overline{\omega}^2\ 
    \left\langle \Phi_1^\bK, -2i\br\cdot\Phi_2^\bK\right\rangle
$ and by \eqref{vF} we have 
$
\left\langle \hPhi_1^\bK, -2i\br\cdot\hPhi_2^\bK\right\rangle\ = \vF\ \overline{\omega}^2\ (r^{(1)}+ir^{(2)})=\ \frac{\vF}{|\tk_2|}\ (\ \ktilde_2\cdot\br\ +\ i\ \det[\ktilde_2,\br]\ )$.
Since the (2,1) entry of the matrix is the complex conjugate of this expression, we have proved that the pair $\{\hPhi^\bK_1,\hPhi^\bK_2\}$ satisfies \eqref{lip-ip} for the case $\bKs=\bK$. The case $\bKs=\bKs$ is proved the same way; we need only apply the second 
relation in \eqref{vF}. This completes the proof of \eqref{lip-ip}.

Finally, we verify \eqref{lip-ipA}. Suppose that $\bl\cdot\tv_1=1$ and $\tk_2\cdot\bl=0$. 
 Since $\bl\cdot\tv_1=1$, we have
 $\bl=\ktilde_1+\rho\ktilde_2$, for some $\rho\in\R$ (recall $\tk_2\cdot\tv_1=0$). Furthermore, 
  $\tk_2\cdot\bl=0$ means $\ktilde_2\cdot(\ktilde_1+\rho\ktilde_2)=0$ and hence
   $\rho = -(\ktilde_1\cdot\ktilde_2)/|\ktilde_2|^2$. Therefore, $\bl\ =\ \ktilde_1\ -\ [(\ktilde_2/|\ktilde_2|)\cdot\ktilde_1]\ (\ktilde_2/|\ktilde_2|)$, the projection of $\ktilde_1$ on the orthogonal complement of $\ktilde_2$.
  For this choice of $\bl$, we have $\det[\bl, \tk_2,]=\det[\ktilde_1,\ktilde_2]=1$. By the choice of $\ktilde_1$ and $\ktilde_2$ in Section \ref{Hdelta}. Thus we have 
   $
\left\langle \hPhi_1^\bK, -2i\bl\cdot\hPhi_2^\bK\right\rangle\ =  \frac{\vF}{|\tk_2|}| \sigma_2$;
 the pair $\{\hPhi^\bK_1,\hPhi^\bK_2\}$ satisfies \eqref{lip-ipA}. Finally, we drop the hats  and simply write $\Phi_j^\bKs$ instead of $\hPhi_j^\bKs$ ( $ j=1, 2$ and $\bKs=\bK, \bKp$). 
    This completes the proof of Proposition \ref{ips}.

 \subsection{Multiscale analysis}
 \label{ms-Hd}
  In this section we review the construction of approximate solutions of the eigenvalue problem  \eqref{kp-evpp} with $\kpar = \bKs \cdot \tv_1 + \delta \mu$, for small $\delta$; see \cite{FLW-2d_edge:16,FLW-2d_materials:15,LWZ:18} ($\mu = 0$) and \cite{Drouot:18b} ($\mu \neq 0$). In our discussion, we fix  $H^\delta_\edge = H^\delta$; the procedure is analogous
  for $\tHd$.  We seek a solution of  \eqref{kp-evpp} in the form $\Psi(\bx)  =  e^{i\delta\mu\bl\cdot\bx}\cdot \varphi_0(\bx) \in L^2_{\bKs \cdot \tv_1 + \delta \mu}$. We let  $\bl$ be such that  $\bl\cdot\tv_1=1$  so
   that $\varphi_0$ has fixed psedo-periodicity;  $\varphi_0 \in L^2_{\bKs \cdot \tv_1}$. Below, we shall 
   settle on $\bl$ given by the expression in \eqref{bl-def} as an optimal choice.
 The discussion below will motivate the choice of $\bl$ in the statement of Proposition \ref{ips}.
 
 Substitution into eigenvalue problem \eqref{kp-evpp} yields
\begin{equation}\label{vphi0-eq}
\left( -(\nabla+i\delta\mu\bl)^2 +  V(\bx) +\ \delta \cdot \kappa(\delta \tk_2 \cdot \bx) W(\bx) \right)\varphi_0  =  E \cdot \varphi_0,\quad 
 \varphi_0\in \Ll^2_{\bKs \cdot \tv_1}.
\end{equation}
The form of  \eqref{vphi0-eq} suggest an expansion of its solutions  $\varphi_0(\bx) = \Psi_0(\bx, \delta \tk_2 \cdot \bx)$, depending on, $\bx$, the {\it fast} scale of the periodic structure, and on $s=\delta\ktilde_2\cdot\bx$, the {\it slow} scale  of the domain wall.
The eigenvalue problem \eqref{vphi0-eq} for $\Psi_0(\bx,s)$ is 
 \begin{equations}\label{ms-evp}
 \left( -(\nabla_\bx  +  \delta\ktilde_2\D_s  + i\delta\mu\bl)^2\ +  V(\bx)\ +\ \delta\kappa(s) W(\bx)  \right) \Psi_0(\bx,s) =  E\ \Psi_0(\bx,s)\\
 \qquad \Psi_0(\bx+\tv_1,s) = e^{i\bKs \cdot \tv_1}\ \Psi_0(\bx,s),\qquad \Psi_0(\bx,s)\ \ \to\ 0\quad \textrm{as}\ \ |s|\to\infty.
 \end{equations}
We next expand $(\Psi_0,E)$ in powers of $\delta$: $\Psi_0(\bx,s;\mu)=\Psi_0^{(0)}(\bx,s;\mu)+\delta \cdot \Psi_0^{(1)}(\bx,s;\mu)+\dots$ and 
   $E(\mu)=E_D+\delta\ E_1(\mu)+\dots$, and substitute these expansions into \eqref{ms-evp}. Grouping terms according to their order in $\delta$  yields a hierarchy of equations  for $(\Psi_j(\bx,s,\mu),E_j)$,  which can be solved recursively.    The PDEs in this hierarchy are each  solved  subject to the conditions: 
   \begin{equation} \Psi_0^{(j)}(\bx+\tv_1,s)=e^{i \bKs \cdot \tv_1} \cdot \Psi_0^{(j)}(\bx,s),\quad \Psi_0^{(j)}(\bx,s)\ \ \to\ 0\quad \textrm{as}\ \ |s|\to\infty,\quad j\ge0.
   \label{ms-bc}\end{equation}

\nit  At order $\delta^0=1$, we have $\left(H^0- E_D\right)\Psi^{(0)}_0=0$. We solve it with 
  $\Psi_0^{(0)}(\bx,s)= \Phi^\bKs(\bx)^\top\ \alpha(s)$, where 
 $\alpha(s)=(\alpha_1(s),\alpha_2(s))^\top$ is to be determined such that  $|\alpha(s)|\to0$
 as $|s|\to\infty$.\\
  At order $\delta$, we obtain:
   \begin{align}
  \left(H^0 - E_D\right)\Psi_0^{(1)}(\bx,s) \ &=  2\sum_{j=1}^2\big(\ktilde_2  \D_s\alpha_j(s)  +     i\mu  \bl \alpha_j(s) \big)\cdot\nabla_\bx\Phi_j^\bKs(\bx)\nn\\
    &\quad -\kappa(s)W(\bx) \cdot \sum_{j=1}^2\Phi_j^\bKs(\bx)  \alpha_j(s)  +  E_1 \cdot \sum_{j=1}^2\Phi_j(\bx)  \alpha_j(s).
 \label{delta1}   \end{align}
Equation  \eqref{delta1} has a solution satisfying the pseudo-periodicity condition of \eqref{ms-bc} if and only if the  right hand side
is $L^2_{\bK_\star}$-orthogonal to $\Phi^\bKs_1$ and $\Phi^\bKs_2$. That is, for $m=1,2:$
\begin{align}
\sum_{j=1}^2 \blr{ \Phi_m^\bKs,-2i\ktilde_2 \nabla\Phi^\bKs_j} & \frac{1}{i}\D_s\alpha_j(s)\  +\   \mu\ \sum_{j=1}^2 \blr{ \Phi^\bKs_m,  -2i\bl\nabla\Phi^\bKs_j(\bx) } \alpha_j(s)\nn \\
&    +  \sum_{j=1}^2 \left\langle\Phi^\bKs_m,W\Phi^\bKs_j\right\rangle \ \kappa(s)\alpha_j(s)\ -\ E_1 \alpha_m(s)\ =\ 0.
\label{solvty}\end{align}
The system \eqref{solvty} can be simplified using Propositions \ref{ips} and \ref{thetaKs}.
 The first inner product in \eqref{solvty} is evaluated using \eqref{lip-ip} using $\br=\tk_2$.
To evaluate the second inner product in \eqref{solvty}, we use second part of Propositions \ref{ips} and note (from \eqref{lip-ip}) that we can eliminate the $\sigma_1$ dependence by choosing $\bl$ to satisfy, in addition to $\bl\cdot\tv_1=1$, the condition $\tk_2\cdot\bl=0$;
 see \eqref{lip-ipA}. 
The third inner product  in \eqref{solvty} is evaluated using \eqref{varKs} in Proposition \ref{thetaKs}. We summarize:

\begin{proposition}[$\CCC$ invariant, $\PPP$ breaking case]\label{H-eff}
For $H^\delta_{\rm edge}=H^\delta$, the slowly varying amplitudes $\alpha(s)=(\alpha_1(s),\alpha_2(s))^\top$ are governed by the eigenvalue problem: $ \DiKs(\mu)\ \alpha\ =\ E_1\ \alpha$, 
where $\DiKs(\mu)$  is the effective Dirac operator
\begin{equation}
\DiKs(\mu)\ =\ \vF^\bKs\ |\tk_2|\ \sigma_1\ \frac{1}{i} \frac{\D}{\D s}\ +\ \frac{\vF^\bKs}{|\tk_2|}\ \mu\  \sigma_2\ +\ \var^\bKs\ \sigma_3\ \kappa(s)\ .
\label{D-eff}
\end{equation}
Here, $\vF^\bK=\vF$ and $\vF^\bKp=-\vF$ and $\var^\bKp=-\var^\bK$.
\end{proposition}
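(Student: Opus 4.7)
The plan is to recognize that Proposition~\ref{H-eff} is essentially a bookkeeping consequence of the solvability condition \eqref{solvty} combined with the closed-form inner-product identities already established in Propositions \ref{ips} and \ref{thetaKs}. No new analytic ingredient is needed; the argument is a matrix-valued calculation that groups the three terms in \eqref{solvty} into the three Pauli-matrix building blocks of $\DiKs(\mu)$.

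First I would organize \eqref{solvty} as a $2\times 2$ matrix equation for $\alpha(s) = (\alpha_1(s),\alpha_2(s))^\top$, namely
\begin{equation*}
M_{\tk_2}\,\tfrac{1}{i}\partial_s \alpha(s) \;+\; \mu\, M_{\bl}\, \alpha(s) \;+\; \kappa(s)\, M_W\, \alpha(s) \;-\; E_1\, \alpha(s) \;=\; 0,
\end{equation*}
where the entries of $M_{\br}$ (for $\br=\tk_2,\bl$) and $M_W$ are, respectively, $\langle \Phi_m^\bKs, -2i\,\br\!\cdot\!\nabla \Phi_j^\bKs\rangle$ and $\langle \Phi_m^\bKs, W\, \Phi_j^\bKs\rangle$. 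The entire proof reduces to evaluating these three $2\times 2$ matrices and observing that the outcome matches \eqref{D-eff}.

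Next I would compute each matrix. For $M_{\tk_2}$ I apply Proposition~\ref{ips}(a) with $\br=\tk_2$: since $\det[\tk_2,\tk_2]=0$, the $\sigma_2$ term drops out and I obtain $M_{\tk_2} = (\vF^\bKs/|\tk_2|)\,|\tk_2|^2\,\sigma_1 = \vF^\bKs|\tk_2|\,\sigma_1$. For $M_{\bl}$, the key point is that $\bl$ was chosen in \eqref{bl-def} precisely so that $\tk_2\cdot \bl = 0$, killing the $\sigma_1$ contribution in \eqref{lip-ip}; Proposition~\ref{ips}(b) then yields $M_{\bl} = (\vF^\bKs/|\tk_2|)\,\sigma_2$. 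Finally, $M_W = \var^\bKs\,\sigma_3$ is exactly \eqref{varKs} in Proposition~\ref{thetaKs}. Substituting these three expressions gives
\begin{equation*}
\Bigl(\vF^\bKs |\tk_2|\,\sigma_1\,\tfrac{1}{i}\partial_s \;+\; \tfrac{\vF^\bKs}{|\tk_2|}\,\mu\,\sigma_2 \;+\; \var^\bKs\,\kappa(s)\,\sigma_3\Bigr)\alpha(s) \;=\; E_1\,\alpha(s),
\end{equation*}
which is precisely $\DiKs(\mu)\alpha = E_1\alpha$ with $\DiKs(\mu)$ as defined in \eqref{D-eff}. The sign conventions $\vF^\bK = \vF$, $\vF^\bKp = -\vF$, and $\var^\bKp = -\var^\bK$ are carried through automatically from Propositions~\ref{ips} and \ref{thetaKs}.

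There is no genuine obstacle here once the orthonormal basis is normalized as in Proposition~\ref{ips} and once $\bl$ is taken as in \eqref{bl-def}: the role of both choices is explicitly to diagonalize the appearance of $\sigma_1$ vs. $\sigma_2$ in the kinetic terms. The only mildly delicate point is conceptual rather than computational, namely the observation that the freedom in choosing $\bl$ (subject to $\bl\cdot\tv_1 = 1$) is used to eliminate a spurious $\mu\,\sigma_1$ term that would otherwise couple to the $\partial_s$-derivative and obscure the Dirac structure; hence the specific formula \eqref{bl-def} is essential, not cosmetic. Once this point is recorded, the proof is complete.
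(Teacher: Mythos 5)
Your proposal is correct and follows essentially the same route as the paper: the paper also obtains \eqref{D-eff} by evaluating the three inner products in the solvability condition \eqref{solvty} via Proposition \ref{ips}(a) with $\br=\tk_2$, Proposition \ref{ips}(b) (exploiting the choice \eqref{bl-def} of $\bl$ with $\tk_2\cdot\bl=0$ to eliminate the spurious $\mu\,\sigma_1$ term), and \eqref{varKs} of Proposition \ref{thetaKs}. Nothing is missing.
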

%
%
%  \begin{equation}\label{D-eff}
%  \DiKs(\mu)  \de   \dfrac{\vF^\bKs  |\ktilde_2| \cdot \sigma_1  }{i} \frac{\D}{\D s}  +  \vF^\bKs |\bl| \mu \cdot  \sigma_2  
%  +  \var^\bKs \sigma_3 \cdot \kappa(s).
%\end{equation}

 If we now let $(E_1,\az)$ be an eigenpair of $\DiKs(\mu)$, {\it i.e.} $\DiKs(\mu) \az = E_1 \az$, $\alpha\in L^2(\R)$,  then \eqref{delta1} has a solution,  $\Psi_0^{(1)}(\bx,s)$,  with the required pseudoperiodicity. It follows that  $\Phi^\bKs(\bx)^\top\ \alpha(s) + \delta \Psi_0^{(1)}(\bx,s)$ solves \eqref{ms-evp} modulo $O(\delta^2)$.

When considering $\tHd$ instead of $H^\delta$, the perturbation of $H^0=-\Delta+V$ is the operator  $\dive(\kappa(s)  A(\bx) \cdot \nabla)$ instead of $\kappa(s)W(\bx)$. The same analysis as for $H^\delta$, now using the relation \eqref{tvar-def} of Proposition \ref{thetaKs},  yields:
\begin{proposition}[$\CCC$ breaking, $\PPP$ invariant case]\label{H-eff}
For $H^\delta_{\rm edge}=\tHd$, the slowly varying amplitudes $\alpha(s)=(\alpha_1(s),\alpha_2(s))^\top$ are governed by the eigenvalue problem: $ \tDiKs(\mu)\ \alpha\ =\ E_1\ \alpha$, 
where $\tDiKs(\mu)$  is the effective Dirac operator
\begin{equation}
\tDiKs(\mu)\ =\ \vF^\bKs\ |\tk_2|\ \sigma_1\ \frac{1}{i} \frac{\D}{\D s}\ +\ \frac{\vF^\bKs}{|\tk_2|}\ \mu\  \sigma_2\ +\ \tvar^\bKs\ \sigma_3\ \kappa(s)\ .
\label{tD-eff}
\end{equation}
Here, $\vF^\bK=\vF$ and $\vF^\bKp=-\vF$ and 
 $\tvar^\bKp=\tvar^\bK$.
\end{proposition}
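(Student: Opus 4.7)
The plan is to run the multiscale derivation that produced \eqref{D-eff} from \eqref{H_delta} verbatim, replacing the $\PP$-breaking potential $\delta\kappa(\delta\tk_2\cdot\bx)W(\bx)$ by the $\CC$-breaking divergence-form perturbation $\delta\,\dive(\kappa(\delta\tk_2\cdot\bx)A(\bx)\nabla)$, and at the solvability stage substituting the second identity \eqref{tvar-def} of Proposition \ref{thetaKs} in place of the first identity \eqref{varKs}.

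Concretely, I would set $\Psi(\bx) = e^{i\delta\mu\bl\cdot\bx}\Psi_0(\bx,s;\delta)\big|_{s=\delta\tk_2\cdot\bx}$, with $\bl$ as in \eqref{bl-def}, insert into $\tHd\Psi=E\Psi$, and separate the fast and slow variables. The Laplacian generates $-(\nabla_\bx+\delta\tk_2\D_s+i\delta\mu\bl)^2+V$ exactly as in \eqref{ms-evp}, while the new perturbation produces $\delta\,\dive_\bx^{(\delta)}\!\bigl(\kappa(s)A(\bx)\nabla_\bx^{(\delta)}\Psi_0\bigr)$ with $\nabla_\bx^{(\delta)}=\nabla_\bx+\delta\tk_2\D_s+i\delta\mu\bl$. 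Expanding $\Psi_0=\Psi_0^{(0)}+\delta\Psi_0^{(1)}+\cdots$ and $E=E_D+\delta E_1+\cdots$ yields a hierarchy. The order-$\delta^0$ equation is $(H^0-E_D)\Psi_0^{(0)}=0$, solved by $\Psi_0^{(0)}(\bx,s)=\Phi^\bKs(\bx)^\top\alpha(s)$ with $\alpha=(\alpha_1,\alpha_2)^\top$ decaying at $\pm\infty$. The order-$\delta$ equation then becomes the exact analog of \eqref{delta1}, with the term $-\kappa(s)W(\bx)\sum_j\Phi_j^\bKs\alpha_j$ replaced by $-\kappa(s)\sum_j\dive_\bx\!\bigl(A(\bx)\nabla_\bx\Phi_j^\bKs\bigr)\alpha_j$; all cross contributions arising from the $\delta$-corrections inside $\nabla_\bx^{(\delta)}$ are $O(\delta^2)$ and enter only at the next order of the hierarchy.

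Imposing the $L^2_\bKs$ Fredholm solvability condition---orthogonality of the right-hand side to $\Phi_m^\bKs$ for $m=1,2$---produces a $2\times 2$ system for $\alpha(s)$, the analog of \eqref{solvty}. The two kinetic contributions are evaluated via Proposition \ref{ips}: the $\tk_2\D_s$-term yields $\vF^\bKs|\tk_2|\sigma_1\cdot\tfrac{1}{i}\D_s$ by \eqref{lip-ip} with $\br=\tk_2$, while the $\mu\bl$-term yields $(\vF^\bKs/|\tk_2|)\mu\sigma_2$ by \eqref{lip-ipA} (precisely the reason $\bl$ was chosen orthogonal to $\tk_2$). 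The new potential contribution is, by \eqref{tvar-def}, the matrix $\tvar^\bKs\sigma_3\kappa(s)$, replacing $\var^\bKs\sigma_3\kappa(s)$ from the $H^\delta$ case. Combining these three matrices with the $E_1\,\alpha$ term gives $\tDiKs(\mu)\alpha=E_1\alpha$ with $\tDiKs(\mu)$ as in \eqref{tD-eff}. The sign relations $\vF^\bKp=-\vF$ and $\tvar^\bKp=\tvar^\bK$---the latter contrasting with $\var^\bKp=-\var^\bK$ and reflecting the parity of $A(\bx)$ versus the odd parity of $W$---are already encoded in Propositions \ref{ips} and \ref{thetaKs}.

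The only genuinely novel point relative to the $H^\delta$ analysis is verifying that the $\delta$-corrections appearing inside the \emph{two} copies of $\nabla_\bx^{(\delta)}$ in $\dive_\bx^{(\delta)}\!\bigl(\kappa A\nabla_\bx^{(\delta)}\bigr)$ really do postpone to order $\delta^2$. This is the main technical hurdle and is straightforward by a careful direct expansion: each such cross term carries an extra factor of $\delta$, and the associated slow derivatives on $\Psi_0^{(0)}=\Phi^\bKs{}^\top\alpha$ produce either boundary-type pieces in the $s$-variable (integrable against $\alpha\in L^2$) or $\bx$-derivatives of $\Phi^\bKs$ that are absorbed into $\Psi_0^{(1)}$ at the next stage. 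Everything else in the derivation is a term-by-term transcription of the $H^\delta$ computation carried out just above.
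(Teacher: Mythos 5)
Your proposal is correct and follows essentially the same route as the paper: the paper's own proof simply repeats the multiscale derivation done for $H^\delta$ with the perturbation $\kappa(s)W(\bx)$ replaced by $\dive\big(\kappa(s)A(\bx)\cdot\nabla\big)$, and evaluates the potential-term inner product in the solvability condition via \eqref{tvar-def} of Proposition \ref{thetaKs} instead of \eqref{varKs}, exactly as you do. Your extra remark that the cross terms from the slow derivatives inside the divergence-form perturbation only enter at order $\delta^2$ is a correct (and implicitly assumed) detail at this formal level of the expansion.
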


%
%shows that the eigenvalue problem for $\tHd$ reduces to an eigenvalue problem for 
%  \begin{equation}
%  \tDiKs(\mu)  \de  \dfrac{\vF^\bKs\ |\ktilde_2| \cdot \sigma_1}{i} \cdot \frac{\D}{\D s}  +  \vF^\bKs |\bl| \mu \cdot \sigma_2  
%  +  \tvar^\bKs  \sigma_3 \cdot \kappa(s).
%  \label{tD-eff}\end{equation}

  \subsection{Spectra of effective Dirac operators}\label{D-eff-spec}
  
The properties of Pauli matrices (see \eqref{eq:1s}) imply the following relations among the Dirac operators and their spectra:
\begin{itemize}
 \item $\DiKp(\mu)=-\DiK(\mu)$ and hence $\sigma_{L^2}\big(\DiKp(\mu)\big)=-\sigma_{L^2}\big(\DiK(\mu)\big)$. 
 \item At $\mu=0$, $\sigma_2\DiKs(0)\sigma_2=-\DiKs(0)$ and hence 
 $\sigma_{L^2}(\DiKs(0))$ is symmetric about zero energy.
 \item $\sigma_3\tDiKp(\mu)\sigma_3= \tDiK(\mu)$ and therefore
  $\sigma_{L^2}\big(\tDiK(\mu)\big)= \sigma_{L^2}\big(\tDiKp(\mu)\big)$.
 \end{itemize}

The following result summarizes the spectral properties of $\DiKs(\mu)$ and $\tDiKs(\mu)$.
Recall first that $\theta_{\rm gap}=|\var^\bK|=|\var^\bKp|$ and  $\widetilde\theta_{\rm gap}=|\tvar^\bK|=|\tvar^\bKp|$ ; see \eqref{C1gap-d} and \eqref{Cgap-d}.
\begin{proposition}\label{cor-D-spec} Let $\bKs = \bK, \bKp$. 
\begin{enumerate}
\item The $L^2(\R)$ spectrum of $\DiKs(0)$ is real and symmetric about zero energy. Its essential spectrum in the set $\R\setminus(-\theta_{\rm gap},\theta_{\rm gap})$.  
The point spectrum of $\DiKs(0)$ contains $z_0=0$. Moreover, 
for some $N\ge0$, the point spectrum of $\DiKs(0)$ consists of $2N+1$ discrete simple eigenvalues, symmetric about zero, in the gap $(-\theta_{\rm gap},\theta_{\rm gap})$:
 \[ -\theta_{\rm gap}<-z_{-N}<\cdots<z_{-1}<z_0=0<z_{1}<\cdots<
 z_{N}<\theta_{\rm gap}.\]
\item For $\mu\in\R$,  $\DiKs(\mu)$ acting on $L^2(\R)$ has essential spectrum given by
\[ \sigma_{\rm ess}(\DiKs(\mu))\ =\ \R\setminus \big( -\theta_\gap(\mu), \theta_\gap(\mu) \big),\ 
\textrm{where}\quad 
\theta_{\rm gap}(\mu)\ =\ 
\sqrt{\theta_{\rm gap}^2 + \frac{\vF^2}{|\tk_2|^2}\ \mu^2 }.\]
Let $N$ be as in part (1). There are $2N+1$ eigenvalues $\var_{-N}^\bKs(\mu) < \dots < \var_N^\bKs(\mu)$ in the gap $\big( -\theta_\gap(\mu), \theta_\gap(\mu) \big)$.  For $j\ne0$ these eigenvalues are given by the expressions:
\begin{equation}
\var_{\pm j}^\bKs(\mu)\ =\  \pm\sqrt{ z_j^2\ +\ \frac{\vF^2}{|\tk_2|^2}\ \mu^2 }, \ \ \ \ 1\le j\le N,\label{var-j}\end{equation}
and for $j=0$ we have 
%\footnote{\tb{need to check the sign here}}
\[
\var_0^\bKs(\mu) = - \mu\ \frac{\vF}{|\tk_2|}\ \rm{\sgn}\left(\var^\bKs\right).
\]
\item The results of parts (1) and (2) apply to the operator $\tDiKs(\mu)$ where
we replace $\var_j^\bKs(\mu)$ by $\tvar_j^\bKs(\mu)$,  $\theta_{\rm gap}$ by $\widetilde\theta_{\rm gap}$
 and $\theta_{\rm gap}(\mu)$ by $\widetilde\theta_{\rm gap}(\mu)$.
\end{enumerate}
\end{proposition}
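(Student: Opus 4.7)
The plan is to treat $\DiKs(\mu)$ uniformly; the statement for $\tDiKs(\mu)$ in part (3) follows from the identical algebraic form with $\var^\bKs$ replaced by $\tvar^\bKs$, with the replacement of $\theta_\gap$, $\theta_\gap(\mu)$, $\var_j^\bKs$ by the tilded analogues. Throughout I abbreviate
\begin{equation*}
D(\mu) \de A\sigma_1 \frac{\partial_s}{i} + C\sigma_2 + B\kappa(s)\sigma_3,
\qquad A = \vF^\bKs|\tk_2|,\ B = \var^\bKs,\ C = \frac{\vF^\bKs \mu}{|\tk_2|}.
\end{equation*}

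For the essential spectrum, I observe that since $\kappa(s)\to\pm 1$ at $s\to\pm\infty$, the multiplication $B(\kappa-\sgn(s))\sigma_3$ is a relatively compact perturbation (in the half-line decomposition) of the two constant-coefficient Dirac operators $D_\pm(\mu) = A\sigma_1\partial_s/i + C\sigma_2 \pm B\sigma_3$. A Weyl singular-sequence argument then yields $\sigma_\ess(D(\mu)) = \sigma(D_+(\mu))\cup\sigma(D_-(\mu))$, and Fourier diagonalization of the matrix symbol $Ak\sigma_1+C\sigma_2\pm B\sigma_3$ shows each piece equals $\R\setminus(-\sqrt{B^2+C^2},\sqrt{B^2+C^2}) = \R\setminus(-\theta_\gap(\mu),\theta_\gap(\mu))$.

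For the in-gap spectrum at $\mu=0$, the anticommutation $\sigma_2 D(0)\sigma_2 = -D(0)$ forces $\sigma_{L^2}(D(0))$ to be symmetric about $0$. Squaring $D(\mu)$ and using that $C\sigma_2$ anticommutes with both $A\sigma_1\partial_s/i$ and $B\kappa\sigma_3$:
\begin{equation*}
D(\mu)^2 = D(0)^2 + C^2,
\qquad D(0)^2 = -A^2\partial_s^2 + B^2\kappa(s)^2 - AB\kappa'(s)\sigma_2.
\end{equation*}
In the $\sigma_2$-eigenbasis $\{(1,\pm i)^\top/\sqrt 2\}$, $D(0)^2$ decouples into $\operatorname{diag}(QQ^*,Q^*Q)$ with $Q = A\partial_s - B\kappa$. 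These are Darboux-conjugate supersymmetric partners: $Q$ intertwines their strictly positive spectra, and exactly one of $Q,Q^*$ has an $L^2(\R)$ kernel $\exp(\mp(B/A)\int_0^s\kappa)$, which is integrable at both $\pm\infty$ precisely when the sign agrees with $\sgn(B/A) = \sgn(\var^\bKs)\sgn(\vF^\bKs)$. This furnishes the simple zero eigenvalue of $D(0)$ via the lifted $\psi_0 = (1,\pm i)^\top g(s)$. Since $B^2\kappa^2 \mp AB\kappa' \to B^2 = \theta_\gap^2$ at infinity, each SUSY partner has only finitely many discrete eigenvalues in $[0,\theta_\gap^2)$, enumerated as $z_1^2 <\cdots< z_N^2$, and by the spectral symmetry they lift to the pairs $\pm z_j$ of simple eigenvalues of $D(0)$, completing the enumeration $-z_N<\cdots<0<\cdots<z_N$.

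For $\mu \neq 0$, the identity $D(\mu)^2 = D(0)^2 + C^2$ gives the discrete squared spectrum $\{z_j^2 + C^2:0\le j\le N\}$ in the $\mu$-gap. For each $j\ne 0$, $\sigma_2$ swaps the $D(0)$-eigenspaces at $\pm z_j$, so on $\operatorname{span}(\psi,\sigma_2\psi)$ the operator $D(\mu)=D(0)+C\sigma_2$ acts as $z_j\sigma_3+C\sigma_1$, whose eigenvalues are the stated $\pm\sqrt{z_j^2+C^2}$. For $j=0$, $\psi_0$ is itself a $\sigma_2$-eigenvector with eigenvalue $\varepsilon\in\{\pm 1\}$ determined by $\sgn(\var^\bKs)$ as above, so $D(\mu)\psi_0 = \varepsilon C\psi_0$ yields the single signed value for $\var_0^\bKs(\mu)$ linear in $\mu$. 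The principal technical obstacle is the rigorous supersymmetric pairing and the finite count of in-gap eigenvalues of $QQ^*$, $Q^*Q$; both reduce to standard one-dimensional Sturm--Liouville / min-max arguments because the Schr\"odinger potentials $B^2\kappa^2\mp AB\kappa'$ are bounded and converge to a common positive constant $B^2$ at $\pm\infty$, placing their essential spectra above the gap and making their restrictions to $[0,\theta_\gap^2)$ trace-class resolvent perturbations of each other via $Q$.
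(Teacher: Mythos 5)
Your proposal is, in structure, exactly the argument used in the references to which the paper defers: the paper gives no self-contained proof of this proposition, citing \cite{DFW:18} for part (1) and \cite[Lemma 3.1]{Drouot:18b} for part (2), and those proofs run through the same ingredients you use --- the anticommutation $\sigma_2\DiKs(0)\sigma_2=-\DiKs(0)$, the exact identity $\DiKs(\mu)^2=\DiKs(0)^2+C^2$ with $C=\vF^\bKs\mu/|\tk_2|$, the supersymmetric factorization of $\DiKs(0)^2$ into $QQ^*$ and $Q^*Q$ on the $\sigma_2$-eigenspaces (which also yields the simple zero mode), and the $2\times 2$ reduction $z_j\sigma_3+C\sigma_1$ on each invariant plane ${\rm span}(\psi,\sigma_2\psi)$. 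So your route is the intended one, and most steps are sound.

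Two points need repair. First, the finiteness of the in-gap point spectrum (``for some $N\ge 0$'') does not follow from the reason you give. Boundedness of the SUSY potentials $B^2\kappa^2\mp AB\kappa'$ and their convergence to $B^2$ is neither guaranteed by the paper's hypotheses on the domain wall (only $\kappa'\in L^\infty$ is assumed, not $\kappa'\to 0$) nor sufficient: a one-dimensional Schr\"odinger operator whose potential approaches its limiting value from below with a slowly decaying (e.g.\ Coulomb-like) tail has infinitely many eigenvalues accumulating at the band edge, so convergence alone cannot bound the count; the ``trace-class resolvent perturbation via $Q$'' remark is likewise unsubstantiated and not needed. This finite count, with simplicity, is precisely the content for which the paper leans on \cite{DFW:18}, and you must either import that argument (with whatever decay it uses) or prove a bound on the number of bound states. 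Second, for $j=0$ you stop at $D(\mu)\psi_0=\varepsilon C\psi_0$ with ``$\varepsilon$ determined by ${\rm sgn}(\var^\bKs)$'' and never pin the sign down, although the sign is part of the statement (it is what drives the spectral-flow and valley discussion). Completing your computation with the paper's conventions ($\kappa\to\pm1$ at $\pm\infty$, $\sigma_2$ as in \eqref{eq:1s}) gives $\varepsilon={\rm sgn}\big(\var^\bKs\vF^\bKs\big)$ and hence $\var_0^\bKs(\mu)=+\mu\,\vF\,{\rm sgn}(\var^\bKs)/|\tk_2|$; for instance with $\kappa=\tanh$, $\vF^\bKs=|\tk_2|=\var^\bKs=1$ the zero mode is ${\rm sech}(s)\,(1,i)^\top$, a $\sigma_2$-eigenvector with eigenvalue $+1$, so the branch is $+\mu$. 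This is opposite to the minus sign displayed in the proposition, so you need either to find the slip in your bookkeeping or to reconcile the conventions here with those of \cite{Drouot:18b}; as written, the stated formula for $\var_0^\bKs(\mu)$ is simply not established by your argument.
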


\nit Part (1) of Proposition \ref{cor-D-spec} was proved in \cite{DFW:18}; 
  Part (2) appears in \cite{Bal:17,Drouot:18b,Faure:19} and is proved using part (1) in \cite[Lemma 3.1]{Drouot:18b}. 
 
 %%%
 \subsection{From approximate edge states to genuine edge states }\label{app2gen}
 In this section we use Theorem \ref{res-exp} together with the construction of approximate edge states to obtain a complete characterization of all point spectrum in the gap about energy $E_D$. 
 
 Consider the Hamiltonian $H^\delta$ and the associated effective Dirac operators 
 $\DiKs(\mu)$,\ $\bKs=\bK, \bKp$. (The discussion of this subsection applies as well to $\tHd$.)
Fix $\mu\in\R$,  a positive integer $N$ as in Proposition \ref{cor-D-spec} and any integer $j$ with $|j|\le N$.  Finally, fix $\bKs=\bK$ or $\bKp$.  Let $\left(\var^\bKs_j(\mu),\az^{\bKs,j}(\cdot;\mu)\right)$ denote the simple $L^2(\R)$- eigenpair of $\DiKs(\mu)$ given by Proposition \ref{cor-D-spec}:
\begin{equation}\nn
\Big(  \DiKs(\mu) -\var^\bKs_j(\mu)  \Big) \alpha^{\bKs,j}(s;\mu)\ =\ 0, \ \ \ \ \alpha^{\bKs,j}(s;\mu) \in L^2(\R,\C^2).
\end{equation}
 The multiple scale expansion approach of \S\ref{ms-Hd} can be continued to arbitrary fixed order in $\delta$ and yields, for any fixed $M\ge1$ and $\delta$ sufficiently small,   a construction of an $\mathcal{O}(\delta^M)$ approximate $\Ll^2_{\bKs\cdot\tv_1+\delta\mu}$ eigenpair $\big(E_{j,M}^{\bKs,\delta}(\mu),\Psi_{j,M}^{\bKs,\delta}(\bx,s;\mu)\big)$  of \eqref{kp-evpp} with 
 \begin{equation}\nn
  \left(H_{\edge}^\delta-E_{j,M}^{\bKs,\delta}(\mu)\right)\Psi_{j,M}^{\bKs,\delta}(\bx,\delta\tk_2\cdot\bx;\mu) \ = \ \OO_{\Hh^s_{\bKs\tv_1+\delta\mu}}(\delta^M)\ .
\end{equation}
For $j=-N,\dots,N$, these approximate   eigenpairs of  \eqref{kp-evpp} have expansions in powers of $\delta$: 
\begin{align}
\Psi_{j,M}^{\bKs,\delta}(\bx,s;\mu)\  &=\   \Phi^\bKs (\bx)^\top\ \alpha^{\bKs,j}(s;\mu)\  +\  \delta^2\ \psi_{2}^{\bKs,j} (\bx,s;\mu) + \dots + \delta^M\ \psi_{M}^{\bKs,j} (\bx,s;\mu) \nn\\
E_{j,M}^{\bKs,\delta}(\mu)\ &=\ E_D\ +\ \delta\  \var^\bKs_j(\mu)\ +\ \delta^2\ e_{2}^{\bKs,j}(\mu) \ +\  \dots\ +\ \delta^M\ e_{M}^{\bKs,j}(\mu).\label{quasi-m}
\end{align}
Since the eigenvalues $\var^\bKs_j(\mu)$  of the effective operator $\DiKs(\mu)$ are distinct,
the approximate $\Ll^2_{\bKs\cdot\tv_1+\delta\mu}$ eigenvalues are  $\mathcal{O}(\delta)-$ separated. 

Fix an arbitrary $\mu_0>0$ and let $|\mu|\le\mu_0$. Basic general properties of self-adjoint operators 
together with precise information about the location of approximate eigenvalues of $H^\delta$ within the spectral gap  about $E_D$ enable us to conclude the existence of genuine eigenvalues of  $H^\delta$  in this gap; see  \cite[\S3 and Appendix]{DFW:18}. 

From the multiple scale procedure above one can conclude 
for the case of zigzag type edges the existence of $(2N+1)$ $\Ll^2_{2\pi/3+\delta\mu}-$ eigenvalues and $(2N+1)$ $\Ll^2_{-2\pi/3+\delta\mu}-$ eigenvalues of $H^\delta$ which are located within the order $\mathcal{O}(\delta)$ gap in the essential spectrum, 
 $(a_\delta(\mu),b_\delta(\mu))$, about $E_D$  \cite{FLW-2d_edge:16}.
 The resolvent expansion \eqref{zz-res} of Theorem \ref{res-exp}
  ensures, for $\delta$ sufficiently small,  that these  eigenvalues are simple and that they are the only eigenvalues in 
   this spectral gap, located an arbitrarily small fixed distance from the boundary of the gap;   see \cite{Drouot:18b} and Corollary \ref{KKp-eigsZZ}.  This is key to the topological arguments of \cite{Drouot:18b,Drouot:19} and to the  perspective outlined in \S\ref{topology}.
   
The  situation is different for the case of armchair-type edges. In this case,  we have  $\bK\cdot\tv_1=\bKp\cdot\tv_1=0$.  Hence, via the multiple  scale expansion procedure we produce
 $2\times(2N+1)$ $\Ll^2_{0+\delta\mu}-$  approximate (and then genuine) eigenpairs of $H^\delta$: $(2N+1)$ are generated by $\sigma_{\rm pp}(\DiK(\mu))$ and $(2N+1)$ by $\sigma_{\rm pp}(\DiKp(\mu))$. Since the $2N$ elements of  $\sigma_{\rm pp}(\DiK(\mu))$ and  $\sigma_{\rm pp}(\DiKp(\mu))$, corresponding to $j\ne0$, are equal (see \eqref{var-j}),  the corresponding $4N$ branches of eigenvalues of $H^\delta$ acting in $\Ll^2_{0+\delta\mu}-$ are degenerate 
   through order $\delta$. (We are as yet unable to detect a splitting at finite order.)  Therefore, in Theorem \ref{res-exp}, we give an expansion of  the  rank-two projector associated with each pair eigenvalues of $H^\delta$ generated by the pairs of approximate eigenvalues 
    $\{E_{j,M}^{\bK,\delta}(\mu),E_{j,M}^{\bKp,\delta}(\mu)\}$, $0<|j|\le N$; see Corollary \ref{KKp-eigsAC}.
     The topological consequences nonetheless persist.
\subsection{Classification of rational edges}\label{ZZ-AC}
Theorem \ref{res-exp} and Corollaries \ref{KKp-eigsZZ}-\ref{KKp-eigsAC} show
the dependence of spectral properties of $H^\delta_\edge$ on whether the edge is of zigzag or armchair type.
The purpose of this section is to motivate this classification of rational edges.
 
Consider solutions of the eigenvalue problem \eqref{kp-evpp}. Since an eigenfunction $\Psi$ is in $\Ll^2_\kpar$, it has a decomposition \eqref{slice-rep}:
 \begin{equation}\label{slice-rep1}
 \Psi(\bx) = \frac{1}{2\pi}\int_{-\pi}^\pi \widetilde{\Psi}(\kpar\ktilde_1+t\ktilde_2,\bx) dt.
 \end{equation}
 
 From the discussion of \S\ref{app2gen} an eigenfunction $\Psi$ with energy near $E_D$  has an approximation to arbitrary order in $\delta$ of multiscale type. Its dominant spectral contributions
  come from Floquet-Bloch modes with quasi-momenta near high symmetry quasi-momenta. Hence, the dominant spectral contributions to the representation \eqref{slice-rep1} is from those values of $t\in[-\pi,\pi]$ for which 
 $\kpar\tk_1+t\tk_2=(\bKs\cdot\tv_1+\delta\mu)\tk_1+t\tk_2$ is near the sublattice $\bK+\Lambda^*$ or near the sublattice
  $\bKp+\Lambda^*$.

% 
%In addition, the multiscale analysis of \S\ref{ms-Hd} indicates that approximate edge states are concentrated spectrally near Dirac point momenta. We expect edge states to be similarly concentrated. This means that the dominant part in the RHS of \eqref{slice-rep1} arises when $\kpar \tk_1 + \tk_2 \R$ passes near $\{\bK,\bKp\} \mod \Lambda$. 

Thus, we distinguish two cases where, for $|t|\le\pi$:
\begin{itemize}
\item[(a)] $\kpar \tk_1 + t \cdot \tk_2$ passes near $\bK + \Lambda^*$ or near $\bKp + \Lambda^*$, but not near both, or 
\item[(b)] $\kpar \tk_1 + t \cdot \tk_2$ passes near both $\bK+ \Lambda^*$ and $\bKp+ \Lambda^*$. 
\end{itemize}
These cases are in fact characterized by a simple arithmetic relation on the relatively prime integers, $a_1, b_1$, which define the edge direction: $\tv_1=a_1\tv_1+b_1\tv_2$.
To derive this relation, assume that (a) or (b) holds. Taking the scalar product of vectors on the segment $\kpar \tk_1 + t \cdot \tk_2$ with $\tv_1$, we deduce that $\kpar$ is near $\bK \cdot \tv_1 = 2\pi(a_1-b_1)/3\mod 2\pi$ or $\bKp \cdot \tv_1= - 2\pi(a_1-b_1)/3$ modulo  $2\pi$. Since $a_1$ and $b_1$ are integers, $\kpar$ is near $-2\pi/3$, $0$ or $+2\pi/3$ modulo  $2\pi$. Now (b) holds if and only if  $\bK \cdot \tv_1\approx\bKp \cdot \tv_1$ modulo $2\pi$ and hence 
$a_1\approx b_1$ mod $3$.  Again since $a_1$ and $b_1$ are integers, case (b) implies  $a_1=b_1$ modulo $3$. Thus, we classify rational edges according to whether or not $a_1$ and $b_1$ are equivalent modulo $3$:

\begin{definition}\label{KKp-def} Let $\tv_1=a_1\bv_1+b_1\bv_2$ where $a_1$ and $b_1$ are relatively prime. 
\begin{itemize}
\item[(a)] We say that $\R\vtilde_1$ is a zigzag-type edge if $a_1\neq b_1$ mod $3$. In this case,
 $\{ \bK \cdot \tv_1, \bKp \cdot \tv_1 \} = \{-2\pi/3,2\pi/3\} \mod 2\pi$.
\item[(b)] We say that $\R\vtilde_1$ is a armchair-type edge if $a_1 = b_1$ mod $3$. In this case, $\{ \bK \cdot \tv_1, \bKp \cdot \tv_1 \} = \{0\} \mod 2\pi$.
\end{itemize}
\end{definition}
\nit This  terminology is motivated by the most commonly studied cases in the chemistry and physics literature: the armchair edge -- where $\vtilde_1=\bv_1+\bv_2$ ($a_1=b_1=1$); and the zigzag edge -- where $\vtilde_1=\bv_1$ ($a_1=1, b_1=0$). 
  
\section{Main results: resolvent expansion and edge states}\label{main-results}

This section contains our main results:
\begin{itemize}
\item Theorem \ref{res-exp}: the resolvent expansion of $H_{\edge}^\delta$ acting on $\Ll^2_\kpar$.
\item Corollaries \ref{KKp-eigsZZ} and \ref{KKp-eigsAC}: identifications of all possible edge states, whose energies are in the spectral gap about the Dirac point energy, $E_D$.
\end{itemize}

\subsection{Resolvent expansion for $H_{\edge}^\delta$}\label{sec:5a}
We first provide some setup for the resolvent expansion. From the $L^2_\bKs$- kernel of $H^0-E_D$ (see \S\ref{HLP} and Proposition \ref{ips}), 
we form $\C^4-$ and $\C^2-$ valued Floquet-Bloch modes:
\begin{equation}\label{eq:3g}
\Phi = \matrice{\Phi^\bK  \\ \Phi^\bKp } \in C^\infty(\R^2,\C^4), \ \ \ \ \text{where } \
\Phi^\bKs = 
\matrice{\Phi_1^\bKs \\ \Phi_2^\bKs } \in C^\infty(\R^2,\C^2).
\end{equation}
Zigzag- versus armchair-type edges are defined in Definition \ref{KKp-def}. If $\tv_1$ is a zigzag-type edge then $\{ \bK\cdot \tv_1, \bKp \cdot \tv_1\} = \{ -2\pi/3,2\pi/3 \} \mod 2\pi$; thus
 $\Ll^2_{\bK\cdot\tv_1}\ne \Ll^2_{\bKp\cdot\tv_1}$.  If $\tv_1$ is an armchair-type edge then $\{ \bK\cdot \tv_1, \bKp \cdot \tv_1\} = \{0 \} \mod 2\pi$; thus
 $\Ll^2_{\bK\cdot\tv_1}= \Ll^2_{\bKp\cdot\tv_1}=\Ll^2_{0}$. 
 
Corresponding to the two edge-types,  we define the (averaging) operator \begin{equations}\label{TT}
\TT \ : \  \Dom(\TT) \rightarrow L^2(\R^2,\C^4), \ \ \ \ 
\TT u(t) = \int_{\R/\Z} \left( \ove{\Phi} u\right)(s\tv_1+t\tv_2) ds
\ \ \ \
 \text{where} 
 \\
\Dom(\TT) \de \systeme{
\Ll^2_{2\pi/3} \oplus  \Ll^2_{-2\pi/3}  &  \text{if $\tv_1$ is a zigzag-type edge;}
\\ 
 \Ll_0^2  &  \text{if $\tv_1$ is an armchair-type edge.}}
\end{equations}
 Its adjoint, $\TT^* :  L^2(\R,\C^4) \rightarrow \Dom(\TT)$, is the restriction operator  given by 
 \begin{equation} \TT^* v(\bx) = \Phi(\bx)^\top v(\tk_2\cdot \bx).\label{TT*}\end{equation} 
 $\TT$ and $\TT^*$ are bounded linear operators.
 
We also introduce unitary dilations on $L^2(\R)$: 
\begin{equation}
U_\delta g(s) \de \delta^{-1/2} g\left(\delta^{-1} s \right)\quad {\rm and}\quad U^*_\delta g(s) \de \delta^{1/2} g\left(\delta s \right)
.\label{dilate}
\end{equation}

Finally, we form a $4 \times 4$ Dirac operator using $\DiK(\mu)$ and $\DiKp(\mu)$, arising in \S\ref{ms-Hd}:
\begin{equation}
\Dirac(\mu) \de 
\matrice{\DiK(\mu) & 0 \\ 0 & \DiKp(\mu)} \ : \ H^1(\R,\C^4) \rightarrow L^2(\R,\C^4).
\label{Di-mu}\end{equation}
The operator $\Dirac(\mu)$, acting in $L^2(\R)$, has spectrum equal to $\sigma_{L^2}\big(\DiK(\mu)\big)\cup\sigma_{L^2}\big(\DiKp(\mu)\big)$. In particular, it has a gap $\big( -\theta_\gap(\mu), \theta_\gap(\mu) \big)$ in its essential spectrum; see Proposition \ref{cor-D-spec}.

The mathematical core of this paper is the following:

\begin{theorem}[Resolvent expansion]\label{res-exp}
 Let $\tv_1$ denote a rational edge, {\it i.e.} $\tv_1=a_1\bv_1+b_1\bv_2$, with $a_1$ and $b_1$ relatively prime integers. Assume $\rm{(H1)-(H3)}$ of \S\ref{HLP}. Fix $\epsilon, \ \mu_0 > 0$ and $\bKs \in \{\bK,\bKp\}$. There exists $\delta_0 > 0$ such that if
\begin{equation}
0<\delta<\delta_0, \ \ \ |\mu|<\mu_0,\ \ \ 
|z|<\var_\gap(\mu)-\epsilon, \ \ \ {\rm dist}\Big( z ,\ \sigma_{L^2}\big( \Dirac(\mu) \big) \Big) \ge\epsilon
\end{equation} 
then $H^\delta - E_D - \delta z:\Hh^2_{\bKs \cdot \tv_1 + \delta \mu}\to\Ll^2_{\bKs \cdot \tv_1 + \delta \mu}$  is invertible.  Moreover, the following expansions hold according
 to whether $\R\tv_1$ is a zigzag-type or armchair-type edge:
\begin{itemize}
\item \underline{\it Zigzag case:}\ When $\tv_1$ is a zigzag-type edge:
\begin{equations}\label{zz-res}
\left( H^\delta -E_D-\delta z  \right)^{-1}\Big|_{\Ll^2_{2\pi/3+\delta\mu}\oplus\ \Ll^2_{-2\pi/3+
\delta\mu} } 
\\
 =   \
 \frac{1}{\delta}\cdot  \left( U_\delta \TT e^{-i\mu \lr{\bl,\bx}} \right)^* \cdot \left(\Dirac(\mu)- z\right)^{-1}
 \cdot  U_\delta \TT e^{-i\mu \lr{\bl,\bx}}+  \OO_{\Ll^2_{2\pi/3+\delta\mu}\oplus  \Ll^2_{-2\pi/3+
\delta\mu}}\left(\delta^{-2/3}\right).
\end{equations}
\item \underline{\it Armchair case:} When $\tv_1$ is an armchair-type edge:
\begin{equations}\label{ac-res}
\left( H^\delta -E_D-\delta z  \right)^{-1}\Big|_{\Ll^2_{\delta\mu} }  \ = \
 \frac{1}{\delta}\cdot \left( U_\delta \TT e^{-i\mu \lr{\bl,\bx}} \right)^* \cdot \left(\Dirac(\mu)- z \right)^{-1}
 \cdot U_\delta \TT e^{-i\mu \lr{\bl,\bx}} +  \OO_{\Ll^2_{\delta\mu}}\left(\delta^{-2/3}\right).
\end{equations} 
\end{itemize}
The analogous statements hold for $\tHd$, with $\Dirac(\mu)$ replaced by $\widetilde\Dirac(\mu)$ etc.
\end{theorem}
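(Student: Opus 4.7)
My plan is to carry out the Lyapunov--Schmidt / Schur reduction strategy advertised in the introduction: isolate a \emph{near} sector of Floquet--Bloch modes concentrated at the Dirac points $\bK,\bKp$, on which $H^\delta$ reduces to an effective version of the Dirac operator $\Dirac(\mu)$, from a \emph{far} sector on which $H^0-E_D$ is uniformly invertible thanks to the no-fold condition (H2). First I would conjugate by $e^{i\delta\mu\bl\cdot\bx}$ (using $\bl\cdot\tv_1=1$, cf.\ \eqref{bl-def}) to identify $\Ll^2_{\bKs\cdot\tv_1+\delta\mu}$ with the fixed space $\Ll^2_{\bKs\cdot\tv_1}$, at the cost of replacing $-\Delta$ by $-\Delta-2i\delta\mu\bl\cdot\nabla+\delta^2\mu^2|\bl|^2$. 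I would then apply the Floquet--Bloch slice decomposition of \S\ref{FB-cyl}: $\Ll^2_{\bKs\cdot\tv_1}=\int^\oplus_{|t|\le\pi}L^2_{\bKs\cdot\tv_1\tk_1+t\tk_2}\,dt$.

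Next I would introduce the near/far splitting. Fix a cutoff parameter $\eta>0$ (to be optimized) and let $\chi(t)$ be a smooth cutoff supported in a window of width $\eta$ around those values of $t$ for which $\bKs\cdot\tv_1\tk_1+t\tk_2\in(\bK\cup\bKp)+\Lambda^*$. Define $\Pi_\near$ as the direct integral over $t$ of $\chi(t)$ times the $L^2_{\bKs\cdot\tv_1\tk_1+t\tk_2}$-orthogonal projector onto the perturbed eigenspace continuing from $\{\Phi_1^\bKs,\Phi_2^\bKs\}$, and set $\Pi_\far:=I-\Pi_\near$. By (H2), on the support of $1-\chi$ the band functions $E_{b_\star}(\bk),E_{b_\star+1}(\bk)$ stay a fixed positive distance from $E_D$; setting $M^\delta:=H^\delta-E_D-\delta z$, the far block $D:=\Pi_\far M^\delta \Pi_\far$ is therefore invertible on $\mathrm{range}(\Pi_\far)$, with $\|D^{-1}\|$ dominated by $C/\eta$ via the conical behavior \eqref{cones1} of the dispersion surfaces at the boundary of the near window.

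Third, the block-matrix formula reduces the problem to inverting the Schur complement
\begin{equation*}
S^\delta := \Pi_\near M^\delta \Pi_\near \;-\; \Pi_\near M^\delta \Pi_\far \cdot D^{-1} \cdot \Pi_\far M^\delta \Pi_\near
\end{equation*}
on $\mathrm{range}(\Pi_\near)$. In the slow variable $s=\delta\tk_2\cdot\bx$ and the $\Phi^\bKs$-basis, the multiple scale computation of \S\ref{ms-Hd}, together with \eqref{lip-ipA} and \eqref{varKs}, identifies the leading part of $S^\delta/\delta$ with $\Dirac(\mu)-z$: the $-i\tk_2\partial_s$ term arises from the slow gradient applied to $\Phi^\bKs$-amplitudes, the $\mu\sigma_2$ term from the phase-conjugation contribution $-2i\mu\bl\cdot\nabla$ evaluated via \eqref{lip-ipA}, and the $\kappa(s)\sigma_3$ mass from inner product against $W$ via \eqref{varKs}. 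Tracking the remaining errors -- multiscale approximation, slow variation of $\kappa(\delta\tk_2\cdot\bx)$, conical corrections within the near window, and the Schur cross-term -- with an optimal balance of the cutoff $\eta$ yields $S^\delta/\delta = \Dirac(\mu)-z + O(\delta^{1/3})$; inverting and undoing the dilation $U_\delta$ and the phase conjugation reproduces \eqref{zz-res}--\eqref{ac-res} with the advertised $\OO(\delta^{-2/3})=\delta^{-1}\cdot O(\delta^{1/3})$ remainder.

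The zigzag/armchair dichotomy enters only through the domain of $\TT$: in the zigzag case $\bK\cdot\tv_1\ne\bKp\cdot\tv_1\mod 2\pi$, so the $\bK$- and $\bKp$-near sectors are orthogonal in $\Ll^2_{2\pi/3}\oplus\Ll^2_{-2\pi/3}$ and the block-diagonal form \eqref{eq:1q} of $\Dirac(\mu)$ is automatic; in the armchair case both sectors share $\Ll^2_0$ and block-diagonality must be verified by showing that cross inner products such as $\blr{\Phi^\bK_j,W\Phi^\bKp_m}$ vanish, which follows from the $\RRR$-representation assignments \eqref{eq:1d}--\eqref{eq:1g} together with $\Lambda$-periodicity of $W$ (and $a$). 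The statement for $\tHd$ is proved identically, with $\Dirac(\mu)$ replaced by $\widetilde\Dirac(\mu)$ via the $\CCC$-breaking form of Proposition \ref{thetaKs}. I expect the main obstacle to be the simultaneous calibration of the three scales $\delta$, $\eta$, and the inverse far-block gap $\sim 1/\eta$ needed to extract the sharp $\delta^{-2/3}$ rate: this requires harnessing the conical dispersion \eqref{cones1}, the $O(\delta)$-smallness of derivatives of $\kappa(\delta\tk_2\cdot\bx)$, and (in the armchair case) the symmetry selection rules that suppress $\bK$--$\bKp$ coupling in the near block.
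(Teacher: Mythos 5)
Your overall architecture does match the paper's: a near/far splitting along the one-dimensional Brillouin zone, invertibility of the far block via the no-fold condition with norm of order $\eta^{-1}$, a Schur complement whose rescaled near block is identified with $\Dirac(\mu)-z$, and a balance of scales giving the relative error $\delta^{1/3}$ (the paper fixes your $\eta$ at $\delta^{2/3}$ and uses sharp frequency cutoffs together with the \emph{unperturbed} Riesz projectors $\Pi^0_{\bKs+t\tk_2}$; the passage from the near-block expansion to the resolvent statement further uses $U_\delta\TT\TT^*U_\delta^*=\Id+\OO_{H^1\to H^{-1}}(\delta)$ and removal of the cutoffs, steps you leave implicit but which are of a routine nature).

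The genuine gap is in the armchair case, which is precisely the new content of the theorem beyond \cite{Drouot:18b}. You assert that block-diagonality of the effective operator follows because cross inner products such as $\blr{\Phi^\bK_j,W\Phi^\bKp_m}$ vanish by the $\RRR$-representation assignments and $\Lambda$-periodicity of $W$. This is not correct, and it is not the mechanism used in the paper: (i) hypothesis (H3) does not assume $W$ is $\RRR$-invariant, so no representation-theoretic selection rule is available; (ii) the relevant object is not a cell average but the operator $\Pi^\bK_\near\cdot\kappa_\delta W\cdot\Pi^\bKp_\near$, whose kernel involves $G(s)=\int_{\R/\Z}\ove{\Phi^\bK}\,W\,(\Phi^\bKp)^\top\,dt$, an average only along the edge direction, which is generically \emph{nonzero}. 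What makes this coupling negligible (Proposition \ref{lem:3e}) is non-resonance rather than vanishing: $G(s+1)=e^{i(\bK-\bKp)\cdot\tv_2}G(s)$, so the Fourier frequencies of $G$ lie in $(\bK-\bKp)\cdot\tv_2+2\pi\Z$, a set at distance $2\pi/3$ from $0$ because $\bK-\bKp\notin\Lambda^*$ while $(\bK-\bKp)\cdot\tv_1=0\bmod 2\pi$ for an armchair edge; combining this with the localization of the near projector to $|D_s|\lesssim\delta^{2/3}$ and a commutator bound using $\kappa'\in L^\infty$ gives $\Pi^\bK_\near\cdot\kappa_\delta W\cdot\Pi^\bKp_\near=\OO_{\Ll^2_0}(\delta^{2/3})$, hence an $\OO(\delta^{5/3})$ contribution after the overall factor $\delta$ --- small but not zero. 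The same Fourier/commutator argument is also what kills the nonzero Fourier modes of the \emph{diagonal} $\bKs$-to-$\bKs$ coupling (Proposition \ref{lem:3c}), leaving only $\var^\bKs\sigma_3\,\kappa$; your appeal to "slow variation of $\kappa$" gestures at this but does not carry it out. Without the non-resonance argument the armchair expansion \eqref{ac-res} is not established.
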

%
%\footnote{\tr{There were mistakes in the statement: the error term cannot map $\Ll^2 \to \Hh^2$ as $\left( \Dirac-z \right)^{-1}$ does not range to $H^2$ (one cannot even hope for $\Ll^2 \rightarrow \Hh^1$ as we need to loose some additional regularity to make the homogenization arguments work); and some exponential factors were missing.}}
%

The expansion for zigzag edges appears in \cite[Theorem 1 and 3]{Drouot:18b}. 
The current work extends the resolvent expansion to the more subtle case of armchair edges.
The current work combines of techniques from \cite{FLW-2d_edge:16} and \cite{Drouot:18b} to provide a unified treatment of all rational edges by a more direct and transparent strategy  .
Note that the expansion \eqref{zz-res} combines expansions of $(H^\delta-E_D-\delta z)^{-1} $ in the two spaces  $\Ll^2_{2\pi/3+\delta\mu}$ and $\Ll^2_{-2\pi/3+\delta\mu}$, respectively in terms of the two effective Dirac resolvents $\left(\DiK(\mu)- z \right){}^{-1}$ and $\left(\DiKp(\mu)- z \right){}^{-1}$. The expansion \eqref{ac-res} is an expansion of $(H^\delta-E_D-\delta z)^{-1} $ in the single space  $\Ll^2_{0+\delta\mu}$ in terms of the block-diagonal resolvent 
$\left(\Dirac(\mu)- z \right){}^{-1}$.

We next discuss a key consequence of the resolvent expansion. Edge states energies are $\Ll^2_{\bKs\cdot\tv_1+\delta\mu}$-eigenvalues of $H_{\edge}^\delta$.  The method of \cite{FLW-2d_edge:16} shows that each of the $2N+1$ point eigenvalues of  $\Di^\bKs(\mu)$ generates a point eigenvalue of $H_{\edge}^\delta$ in the spectral gap about $E_D$, $|\mu|\le\mu_0$ and $\delta$ sufficiently small.   Leading order expressions for the eigenvectors were also  constructed.  These eigenvalues are poles of the resolvent of $H^\delta$ in the spectral gap. Using Theorem \ref{res-exp} together with the arguments of \cite{Drouot:18b,DFW:18}, we can, for $\delta$ small:
\begin{itemize}
\item[(a)] locate these eigenvalues/poles to arbitrary finite order in $\delta$;
\item[(b)] show that all corresponding eigenvectors have a multiple scale structure;
\item[(c)] expand the edge-state  eigenprojectors to arbitrary finite order in $\delta$; and 
\item[(d)] explain the simulations of edge state curves displayed in Figures \ref{CnoP} and \ref{PnoC}; see also  \cite{FLW-2d_materials:15} and aspects of the experimental study \cite{Rechtsman-etal:18}.
\end{itemize}
The results are detailed in the following two corollaries to Theorem \ref{res-exp}.
For zigzag-type edges, we recover \cite[Corollary 4]{Drouot:18b}:
%\footnote{\tg{Splitting the corollary in AC versus ZZ does not make the paper longer. I think it is better this way (and I  am guessing that it is also your prefered way)}}
%\footnote{\tr{The corollary had typos. }}

\begin{corollary}
\label{KKp-eigsZZ} Let $\tv_1$ be a zigzag-type edge and consider the setting  of Theorem \ref{res-exp}. Fix $\epsilon$ arbitrarily small and positive. For $\delta$ sufficiently small, the operator  $H^\delta$ acting in $\Ll^2_{\bKs\cdot \tv_1 + \delta\mu}$  has precisely $2N+1$ eigenvalues in the gap
\begin{equation}\label{eq:3f}
\Big( E_D - \delta \big(\theta_\gap(\mu) - \epsilon\big),  E_D + \big(\theta_\gap(\mu)-\epsilon\big) \Big).
\end{equation}
in its essential spectrum.
The associated eigenpairs $\big(E_{j}^{\bKs,\delta}(\mu), \Psi_{j}^{\bKs,\delta}(\cdot,\cdot;\mu)\big)$, $j =-N, \dots, N$ can be expanded to arbitrary finite order in $\delta$: 
\begin{equation}
E_{j}^{\bKs,\delta}(\mu) \ = \ E^{\bKs,\delta}_{j,M}(\mu) + \ O\left(\delta^{M+1}\right),\ \ \ 
\Psi_{j}^{\bKs,\delta}(\bx;\mu) \ = \  \Psi_{j,M}^{\bKs,\delta}(\bx, \delta \tk_2 \cdot \bx;\mu) \ + \   O_{\Hh^s_{\bKs \cdot \tv_1 + \delta \mu}}(\delta^{M+1}), \label{Hedge-eig}
\end{equation}
where the expansion of  $(E_{j,M}^{\bKs,\delta}(\mu), \Psi_{j,M}^{\bKs,\delta})$ is displayed in \eqref{quasi-m}.
Analogous statements hold for $\tHd$.
\end{corollary}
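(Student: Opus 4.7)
The plan is to combine the resolvent expansion of Theorem~\ref{res-exp} with the multiscale quasi-mode construction of \S\ref{ms-Hd}--\S\ref{app2gen}, following the overall strategy developed in \cite{DFW:18,Drouot:18b}. For the existence of at least $2N+1$ eigenvalues, I would start from the order-$M$ quasi-eigenpairs $(E_{j,M}^{\bKs,\delta}(\mu),\Psi_{j,M}^{\bKs,\delta})$, $j=-N,\ldots,N$, delivered by the multiscale procedure. Their residuals are $O_{\Hh^s_{\bKs\cdot\tv_1+\delta\mu}}(\delta^{M+1})$, while the $L^2(\R,\C^2)$-normalization of the leading Dirac amplitude $\alpha^{\bKs,j}$ in \eqref{quasi-m} gives a uniform $\Ll^2$-lower bound on the quasi-modes. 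Self-adjointness of $H^\delta|_{\Ll^2_{\bKs\cdot\tv_1+\delta\mu}}$ then forces a genuine eigenvalue within distance $O(\delta^{M+1})$ of each $E_{j,M}^{\bKs,\delta}(\mu)$, and by Proposition~\ref{cor-D-spec} each such true eigenvalue lies inside the gap \eqref{eq:3f} for $\delta$ small enough.

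The exhaustion and simplicity claims come from the resolvent expansion \eqref{zz-res}. Restricted to the subspace $\Ll^2_{\bKs\cdot\tv_1+\delta\mu}$, only the $\DiKs(\mu)$-block of the block-diagonal operator $\Dirac(\mu)$ contributes to the principal term, so its singularities in $(-\theta_\gap(\mu),\theta_\gap(\mu))$ are exactly the $2N+1$ simple poles at $z=\var_j^\bKs(\mu)$ supplied by Proposition~\ref{cor-D-spec}. The bound in \eqref{zz-res} then shows that $(H^\delta-E_D-\delta z)^{-1}$ is itself bounded whenever $z$ is kept at distance $\ge \epsilon$ from $\sigma_{L^2}(\DiKs(\mu))$ and $|z|\le \theta_\gap(\mu)-\epsilon$. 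This forces every eigenvalue of $H^\delta|_{\Ll^2_{\bKs\cdot\tv_1+\delta\mu}}$ in the gap \eqref{eq:3f} to lie in an $\epsilon\delta$-neighbourhood of some $E_D+\delta\var_j^\bKs(\mu)$. Combined with the $O(1)$-separation of the $\var_j^\bKs(\mu)$ on compact $\mu$-ranges, this implies that, for $\epsilon$ and $\delta$ small, each neighbourhood contains exactly one simple eigenvalue.

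The expansion \eqref{Hedge-eig} then follows from combining these two inputs: the quasi-modes provide candidates accurate to order $M$, and once simplicity of each eigenvalue $E_j^{\bKs,\delta}(\mu)$ is known, the Riesz projector formula $\Pi_j^{\bKs,\delta}=-\tfrac{1}{2\pi i}\oint(H^\delta-E)^{-1}\,dE$, evaluated using \eqref{zz-res} on a small contour around $E_j^{\bKs,\delta}(\mu)$, pins the true eigenpair to the quasi-eigenpair with the same accuracy; the leading residue recovers the profile $\Phi^\bKs(\bx)^\top \alpha^{\bKs,j}(\delta\tk_2\cdot\bx;\mu)e^{i\mu\lr{\bl,\bx}}$ appearing in \eqref{quasi-m}. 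The main obstacle is that the $O(\delta^{-2/3})$ error in \eqref{zz-res} is \emph{not} uniformly negligible compared to the $O(\delta^{-1})$ principal term: the exhaustion and pole-counting arguments rely on the fact that the principal term blows up like $(\delta(z-\var_j^\bKs(\mu)))^{-1}$ while the error remains of size $\delta^{-2/3}$. Hence on an annulus of radius $\sim \delta^{1/3}$ about $\var_j^\bKs(\mu)$---much wider than the $O(\delta^M)$ refined location of the eigenvalue, and much narrower than the $O(1)$ spacing of the Dirac eigenvalues---the principal term dominates, and a winding-number argument captures exactly one pole. Establishing this dominance uniformly in $\mu\in[-\mu_0,\mu_0]$ is the heart of the argument.
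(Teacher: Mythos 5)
Your overall architecture is the same as the paper's (which defers details to \cite{FLW-2d_edge:16,DFW:18,Drouot:18b}): multiscale quasimodes plus self-adjointness for existence of $2N+1$ eigenvalues, the resolvent expansion \eqref{zz-res} for exhaustion, and then the expansions \eqref{Hedge-eig}. The existence and localization steps are fine. The gap is in how you get the \emph{exact} count with simplicity. Localization of all gap eigenvalues into $\epsilon\delta$-neighbourhoods of $E_D+\delta\var_j^\bKs(\mu)$ together with the $O(1)$ separation of the $\var_j^\bKs(\mu)$ does not by itself yield ``exactly one simple eigenvalue per cluster''; a counting argument is needed, and the one you propose --- a winding/pole-dominance argument on an annulus of radius $\sim\delta^{1/3}$ about $\var_j^\bKs(\mu)$ --- applies \eqref{zz-res} outside its range of validity. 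Theorem \ref{res-exp} fixes $\epsilon>0$ \emph{first} and then supplies $\delta_0(\epsilon)$ and the $\OO(\delta^{-2/3})$ remainder only for ${\rm dist}\big(z,\sigma_{L^2}(\Dirac(\mu))\big)\ge\epsilon$; no bound is claimed, uniformly or otherwise, for $z$ at distance $\delta^{1/3}$ from the Dirac eigenvalues, and establishing one there would be a separate task. The standard route (the one behind the paper's citation of \cite{Drouot:18b,DFW:18}) keeps the contour at \emph{fixed} distance: on $|z-\var_j^\bKs(\mu)|=\epsilon'$ the Riesz projector of $H^\delta$ in the variable $E=E_D+\delta z$ (so $dE=\delta\,dz$) equals the contour integral of the leading term of \eqref{zz-res} plus $\OO(\delta^{1/3})$, and that leading term is norm-close to a rank-one orthogonal projector because $U_\delta\TT\TT^*U_\delta^*-\Id$ is $\OO_{H^1\to H^{-1}}(\delta)$ and the $\DiKs(\mu)$-eigenfunctions are smooth and decaying. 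Rank stability of projectors then gives exactly one simple eigenvalue in each cluster, with no dominance estimate near the pole required.

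A second misstep is the claim that the Riesz projector ``evaluated using \eqref{zz-res}'' pins the eigenpair to the quasi-eigenpair \emph{with the same accuracy}. The expansion carries an $\OO(\delta^{-2/3})$ remainder, so a contour integral of it controls the eigenprojector only to leading order (error $\OO(\delta^{1/3})$); it cannot produce the $\OO(\delta^{M+1})$ statements in \eqref{Hedge-eig}. The high-order accuracy is obtained the other way around: once simplicity and an $O(\delta)$ separation of $E_j^{\bKs,\delta}(\mu)$ from the remaining $\Ll^2_{\bKs\cdot\tv_1+\delta\mu}$-spectrum are secured (by the projector-rank argument above and by Proposition \ref{cor-D-spec}), apply the spectral theorem for the self-adjoint operator $H^\delta$ to the order-$(M+2)$ quasimode of \eqref{quasi-m}: the component orthogonal to the one-dimensional eigenspace is bounded by the residual divided by the spectral distance, i.e. $\OO(\delta^{M+2})/\OO(\delta)=\OO(\delta^{M+1})$, which yields both expansions in \eqref{Hedge-eig}; this is the mechanism of \cite{DFW:18} that the corollary rests on.
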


The next result describes  the spectrum of $H^\delta_\edge$ for armchair-type edges.  

\begin{corollary}\label{KKp-eigsAC} Fix $\epsilon>0$ and let $\tv_1$ be an armchair-type edge. Then,  $H^\delta$ has precisely $4N+2$ eigenvalues in the $\Ll^2_{\delta\mu}$-essential spectral gap 
\begin{equation}\label{eq:3f}
\Big( E_D - \delta \big(\theta_\gap(\mu) - \epsilon\big),  E_D + \big(\theta_\gap(\mu)-\epsilon\big) \Big).
\end{equation}
Moreover:
\begin{itemize}
\item The eigenvalues -- denoted $\{ E_{j}^{\bKs,\delta}(\mu) \big\}$ for $j =-N, \dots, N$ and $\bKs = \bK, \ \bKp$ -- are of multiplicity at most $2$ and may be expanded to arbitrary finite order in $\delta$
\begin{equation}
E_{j}^{\bKs,\delta}(\mu) \ = \ E^{\bKs,\delta}_{j,M}(\mu) \ + \ O\left(\delta^{M+1}\right);\ \ \textrm{see \eqref{quasi-m}}
\label{Hedge-eig}
\end{equation}
\item For each $j \in [-N,N]$, the rank-two eigenprojectors $\Pi_{j,\mu}^\delta$ associated with the pair of eigenvalues $\{E_{j}^{\bK,\delta}(\mu),E_{j}^{\bKp,\delta}(\mu)\}$ have expansions, to arbitrary finite order in $\delta$, in terms of the projectors:
%
%\footnote{\tb{check this statement}}
%
\begin{equation}
 \Psi_{j,M}^{\bKs,\delta}(\mu) \otimes \Psi_{j,M}^{\bKs,\delta}(\mu) \ + \ \OO_{\Hh^s_{\delta \mu}}\left(\delta^{M+1}\right),\quad \bKs=\bK, \bKp; \ 
\textrm{see \eqref{quasi-m}.} \label{Hedge-eig}
\end{equation}
\end{itemize}
The analogous statements hold for $\tHd$.
\end{corollary}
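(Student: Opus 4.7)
The plan is to combine the resolvent expansion \eqref{ac-res} with the Riesz projector formula and the multiscale construction recalled in \S\ref{app2gen}. First I would establish the \emph{lower bound} on the count of eigenvalues. By Proposition \ref{cor-D-spec}, each of $\DiK(\mu)$ and $\DiKp(\mu)$ has $2N+1$ simple eigenvalues $\var_j^\bKs(\mu)$ in $\big(-\theta_\gap(\mu),\theta_\gap(\mu)\big)$. The multiscale procedure of \S\ref{ms-Hd}--\ref{app2gen}, pushed to order $M$, produces $4N+2$ approximate eigenpairs $\bigl(E_{j,M}^{\bKs,\delta}(\mu),\Psi_{j,M}^{\bKs,\delta}\bigr)$ in $\Ll^2_{\delta\mu}$, with separation of order $\delta$ for distinct values of $\var_j^\bKs(\mu)$. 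Standard self-adjoint perturbation arguments (see \cite[\S3 and Appendix]{DFW:18}) convert these into at least $4N+2$ genuine eigenvalues of $H^\delta$ in the gap \eqref{eq:3f}, counted with multiplicity, each lying within $O(\delta^{M+1})$ of the corresponding $E_{j,M}^{\bKs,\delta}(\mu)$.

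Second, I would show these are \emph{all} the eigenvalues. For $z$ in the rescaled gap that is at distance at least $\epsilon$ from $\sigma_{L^2}(\Dirac(\mu))$, Theorem \ref{res-exp} gives
\begin{equation*}
\bigl\|(H^\delta-E_D-\delta z)^{-1}\bigr\|_{\Ll^2_{\delta\mu}}\ \le\ \frac{C}{\delta},
\end{equation*}
so $E_D+\delta z$ is not an eigenvalue. Letting $\epsilon\to 0$ along a suitable sequence forces every eigenvalue of $H^\delta$ in \eqref{eq:3f} to lie within $o(\delta)$ of a rescaled Dirac eigenvalue $E_D+\delta\var_j^\bKs(\mu)$. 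Combined with the lower bound and the $O(\delta)$-separation of distinct rescaled Dirac eigenvalues, this yields exactly $4N+2$ eigenvalues and confirms they are labeled by pairs $(j,\bKs)$. Note in particular that for $j\neq 0$ one has $\var_j^\bK(\mu)=\var_j^\bKp(\mu)$ by \eqref{var-j}, so a single point on the line $\R$ is seeded by two approximate eigenstates; this forces the \emph{multiplicity at most two} statement.

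Third, I would establish the projector expansion via contour integration. Fix $j$ and let $\Gamma_j^\delta$ be a positively oriented circle of radius $r\delta$ (with $r$ a fixed small constant) in the $\lambda$-plane enclosing both $E_j^{\bK,\delta}(\mu)$ and $E_j^{\bKp,\delta}(\mu)$ and no other eigenvalue. By the Riesz formula and the change of variables $\lambda=E_D+\delta z$,
\begin{equation*}
\Pi^\delta_{j,\mu}\ =\ \frac{1}{2\pi i}\oint_{\Gamma_j^\delta}\bigl(H^\delta-\lambda\bigr)^{-1}d\lambda
\ =\ \frac{\delta}{2\pi i}\oint_{\gamma_j}\bigl(H^\delta-E_D-\delta z\bigr)^{-1}dz,
\end{equation*}
with $\gamma_j$ a fixed ($\delta$-independent) small contour in the $z$-plane around $\{\var_j^\bK(\mu),\var_j^\bKp(\mu)\}$. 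Substituting the expansion \eqref{ac-res}, the leading contribution is
\begin{equation*}
(U_\delta\TT e^{-i\mu\lr{\bl,\bx}})^*\cdot\left(\frac{1}{2\pi i}\oint_{\gamma_j}(\Dirac(\mu)-z)^{-1}dz\right)\cdot U_\delta\TT e^{-i\mu\lr{\bl,\bx}},
\end{equation*}
while the remainder contributes $O_{\Ll^2_{\delta\mu}}(\delta^{1/3})$. The inner contour integral is exactly the rank-two Riesz projector of $\Dirac(\mu)$ onto its eigenspaces at $\var_j^\bK(\mu)$ and $\var_j^\bKp(\mu)$; by the block diagonal structure \eqref{eq:1q} this projector equals the sum of the rank-one projectors onto $\alpha^{\bK,j}$ and $\alpha^{\bKp,j}$. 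Unfolding $(U_\delta\TT e^{-i\mu\lr{\bl,\bx}})^*$ identifies the leading term with $\sum_{\bKs}\Psi_{j,0}^{\bKs,\delta}(\mu)\otimes\Psi_{j,0}^{\bKs,\delta}(\mu)$. The same scheme carried out on the resolvent of $H^\delta$ expanded to order $\delta^M$---which requires iterating the multiscale solvability analysis of \S\ref{ms-Hd} to order $M$ and refining the error term of \eqref{ac-res}---gives the $O(\delta^{M+1})$ expansion of both the eigenvalues and the projector. The scalar eigenvalue expansion \eqref{Hedge-eig} then follows, e.g., by computing $\operatorname{tr}(H^\delta\Pi^\delta_{j,\mu})$ and $\operatorname{tr}(\Pi^\delta_{j,\mu})$ order by order. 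The proof for $\tHd$ is identical with $\Dirac(\mu)$ replaced by $\widetilde\Dirac(\mu)$.

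The main obstacle I anticipate is controlling the $O(\delta^{-2/3})$ remainder under contour integration: one must verify that this error, integrated on a contour of length $O(\delta)$ in $\lambda$, produces only an $O(\delta^{1/3})$ correction to the leading rank-two projector---and more delicately, that the remainder retains this gain uniformly as the contour is shrunk. This uniformity is what ultimately forbids any undetected splitting of the doubled eigenvalue beyond what the multiscale hierarchy can resolve; as noted in the text, the possibility of a splitting at some finite order is not ruled out, and the statement of the corollary is phrased accordingly.
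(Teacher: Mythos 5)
Your proposal is correct and follows essentially the same route the paper intends: the multiscale quasimodes of \S\ref{app2gen} give existence and location of the eigenvalues to arbitrary order, while Theorem \ref{res-exp} (via invertibility away from $\sigma_{L^2}(\Dirac(\mu))$ and Riesz contour integration of the expansion \eqref{ac-res}) gives completeness, the multiplicity-at-most-two statement, and the rank-two projector expansion, exactly as the paper indicates by invoking the arguments of \cite{Drouot:18b,DFW:18}. The only quibble is the ``$\epsilon\to0$'' step: since $\delta_0$ depends on $\epsilon$ in Theorem \ref{res-exp}, one should instead fix $\epsilon$ smaller than half the minimal separation of the rescaled Dirac eigenvalues, which already suffices for the exact count.
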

\nit In Corollary \ref{KKp-eigsAC}, the two-dimensional eigenprojectors for the eigenvalue pairs 
 $E_{j}^{\bK,\delta}(\mu)$, $E_{j}^{\bKp,\delta}(\mu)$ is expanded, since 
we only know, from perturbation theory,  that the splitting is $\mathcal{O}(\delta^3)$ or higher.

\subsection{Global character of edge state curves}\label{topology2}
 Figures \ref{CnoP} and \ref{PnoC} display edge state curves for zigzag-type and armchair-type edges for $H^\delta$ ($\CCC-$ invariant, $\PPP-$ breaking) and for $\tHd$ ($\CCC-$ breaking, $\PPP-$ invariant).     In this section we use Theorem \ref{res-exp}, and Corollaries \ref{KKp-eigsZZ} and \ref{KKp-eigsAC} to  explain these bifurcation curves in terms of the properties of effective Dirac operators. Indeed,   these corollaries imply that all $\Ll^2_{\bKs+\delta\mu}$ eigenvalues, $E$, 
 of $H^\delta_{\rm edge}$ in the spectral gap about $E_D$ satisfy:
$ \left(E - E_D\right)/\delta\ =\ \var + \mathcal{O}(\delta)$, 
 where $\var$ is an $L^2(\R)-$ eigenvalue of $\DiKs(\mu)$ (respectively  $\tDiKs(\mu)$), $\bKs=\bK, \bKp$. That is, a magnification of the $\Ll^2_{\bKs+\delta\mu}$ edge state eigenvalue curves of $H^\delta_{\rm edge}$ near $(E,\kpar)=(E_D,\bKs\cdot\tv_1)$ gives the eigenvalue curves of effective Dirac operators. The spectra of these effective operators are completely described in \S\ref{D-eff-spec}.
Furthermore,  that this local picture determines the essential characteristics of edge state curves over the full 
 range $0\le\kpar<2\pi$ is implied by the {\it a priori} information of \cite[Lemma 7.2]{Drouot:18b}, which implies that for $|\kpar-\bKs\cdot \tv_1|\ge C\delta$ (for some $C>0$ fixed), no edge state can have energy near $E_D$ .   We now proceed with a discussion of Figures \ref{CnoP} and \ref{PnoC};  the zigzag subcases was covered in \cite{Drouot:18b}.
 
 \subsubsection{Zigzag-type edge, $H^\delta$ ($\CCC-$ invariant, $\PPP-$ breaking),  Figure \ref{CnoP}, left panel}\label{zz-CnoP}
 
See also the schematic in Figure \ref{fig:7}.
 The effective Dirac operator $\DiK(\mu)$ determines the $\Ll^2_{2\pi/3+\delta\mu}$ spectrum of $H^\delta$ and $\DiKp(\mu)$ determines the $\Ll^2_{-2\pi/3+\delta\mu}$ spectrum of $H^\delta$.
 The simulations show a single edge state curve in the spectral gap varying linearly near $\kpar=\pm2\pi/3$. This local behavior is described by the spectra of effective Dirac operators
  (Proposition \ref{cor-D-spec}),  which have a single edge state curve ($N=0$), varying linearly with $\mu$. The slopes of the curves near $\kpar=2\pi/3$ and $\kpar=-2\pi/3$
  are equal and opposite since $\sigma_{L^2}(\DiKp)=-\sigma_{L^2}(\DiK)$; see \S\ref{D-eff-spec}. 
 
% The discrete spectrum of $\DiKs(\mu)$ is rather explit: Proposition \ref{cor-D-spec} shows that it is formed by $2N+1$ eigenvalue, prescribed by $\DiKs(0)$. This allows to describe edge state bifurcations  curves of $H^\delta$ with $\kpar$ near $\bKs \cdot \tv_1$. Since $\DiK(\mu)$ and $\DiKp(\mu)$ have opposite discrete spectra, these bifurcation curves have oppposite slopes near $\bK$ and $\bKp$. An additional result 
%This analysis yields a full description of edge states curves for TRS deformations and zigzag-type edges. Figure \ref{fig:7} is a pictorial representation of the argument when $\DiK(\mu)$ has one eigenvalue ($N=1$); see also \cite[Figure 6 and 7]{Drouot:18b}. This explains the numerical plots of Figure \ref{CnoP}(a).

 \subsubsection{Armchair-type edge, $H^\delta$ ($\CCC-$ invariant, $\PPP-$ breaking),  Figure \ref{CnoP}, right panel}\label{ac-CnoP} See also the schematic in Figure \ref{fig:8}. For armchair-type edges
 $\bK\cdot\tv_1=\bKp\cdot\tv_1=0$, and hence $\Ll^2_{\bK\cdot\tv_1}=\Ll^2_{\bKp\cdot\tv_1}=\Ll^2_{0}$. Theorem \ref{res-exp} implies that the resolvent and character of the spectrum in the spectral gap about energy $E_D$
  is determined by the block-diagonal Dirac operator $\Dirac(\mu)$. This effective operator has a two-fold degenerate eigenvalue for $\mu=0$ (corresponding to $\kpar=0$) --- a multiplicity one eigenvalue contributed by each of the two blocks
   $\DiK(0)$ and $\DiKp(0)$ --- each of which departs from zero linearly in $\mu$ with equal and opposite slope; $\sigma_{L^2}(\DiKp)=-\sigma_{L^2}(\DiK)$, Proposition \ref{cor-D-spec} and Figure \ref{fig:8}, left panel.

Two scenarios are possible for $\delta\ne0$:
\begin{itemize}
\item[(i)] either the two perturbed eigenvalue curves of $H^\delta$ cross near $\kpar = 0$, or
\item[(ii)] they split and $H^\delta$ acquires a full $L^2$-gap near energy $E_D$. 
\end{itemize}
While  the numerical simulations plotted in Figure \ref{PnoC}(b) favor (ii), our results do not preclude  either possibility. Analytic calculations show that the splitting is at most $\mathcal{O}(\delta^3)$ and we conjecture that it is $O(\delta^\infty)$.

 \subsubsection{Zigzag-type edge, $\tHd$ ($\CCC-$ breaking, $\PPP-$ invariant),  Figure \ref{PnoC}, left panel}\label{zz-PnoC}
 
  The effective Dirac operator $\tDiK(\mu)$ determines the $\Ll^2_{2\pi/3+\delta\mu}$ spectrum of $\tHd$ and $\tDiKp(\mu)$ determines the $\Ll^2_{-2\pi/3+\delta\mu}$ spectrum of $\tHd$.
  In contrast to the $\CCC-$ invariant, $\PPP-$ breaking case (\S\ref{zz-CnoP}), the  protected eigenvalue curves of $\tDiK(\mu)$ and $\tDiKp(\mu)$, which pass through zero energy, have the same slope since $\sigma_{L^2}\big(\tDiK(\mu)\big)=\sigma_{L^2}\big(\tDiKp(\mu)\big)$; see \S\ref{D-eff-spec}. This accounts for the behavior in Figure \ref{PnoC}, left panel.
 \subsubsection{Armchair-type edge, $\tHd$ ($\CCC-$ breaking, $\PPP-$ invariant),  Figure \ref{PnoC}, right panel}\label{ac-CnoP}
  See also the schematic in Figure \ref{fig:9}. In contrast to 
  the $\CCC-$ invariant, $\PPP-$ breaking case (\S\ref{ac-CnoP}), the  protected eigenvalue curves of $\tDiK(\mu)$ and $\tDiKp(\mu)$ through  zero energy are identical. As in \S\ref{ac-CnoP}, for $\delta\ne0$ the splitting is at high order as reflected in the right panel of  Figure \ref{PnoC}; see also Figure \ref{fig:9}.
  
% For armchair-type edges,  $\bK \cdot \tv_1 = \bKp \cdot \tv_1 = 0$. Thus it suffices to superpose zooms over the eigenvalue curves of $\DiK(\mu)$  and $\DiKp(\mu)$ to obtain those of $H^\delta_{\edge}$ near $\kpar = 0$. See Figures \ref{fig:8} and \ref{fig:9}. When preserving TRS, the eigenvalues curves of $\Dirac(\mu)$ cross at $(0,0)$. 
%

%Corollaries \ref{KKp-eigsZZ} and \ref{KKp-eigsAC} imply that the $L^2$-eigenvalue curves of $\DiKs(\mu)$ are blow-up of the $\Ll^2_\kpar$-eigenvalues curves of $H^\delta$. The energies / momenta are in correspondence via respective maps
%\begin{equation}
%E \mapsto \dfrac{E-E_D}{\delta} + O(1), \ \ \  \ \kpar \mapsto \dfrac{\kpar-\bKs \cdot \tv_1}{\delta} + O(1).
%\end{equation}
%In other words, a zoom on the eigenvalue curves of $H^\delta$ within a $\delta$-window of $(E_D,\bKs \cdot \tv_1)$ produces eigenvalue curves of $\DiKs$ modulo $O(\delta)$. 

 \begin{figure}
          \includegraphics{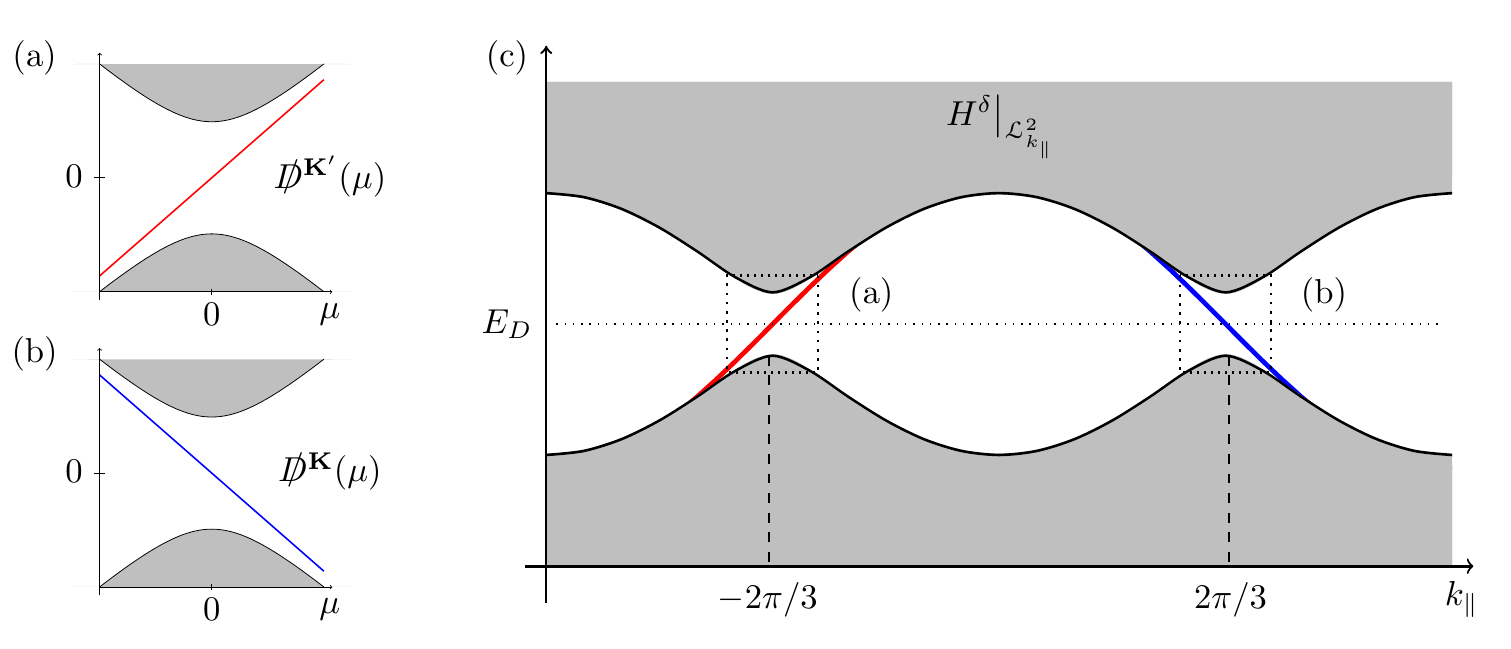}   
    \caption{{\sl Zigzag-type edge, $H^\delta$}: Spectrum of (a) $\DiKp(\mu)$ (red) and (b) $\DiK(\mu)$ (blue) as functions of $\mu$; and (c) $\Ll^2_\kpar$-spectrum of $H^\delta$ as a function of $\kpar$. Zooming at scale $\delta^{-1}$ on $(c)$ near $(-2\pi/3,E_D)$ (resp. near $(2\pi/3,E_D)$) produces (a) (resp. (b)).}
   \label{fig:7}
\end{figure}

\begin{figure}
 %       \includegraphics[height=1.5in]{dirac}   
 %     \includegraphics[height=1.5in]{dirac2}
 %  \\ 
      \includegraphics{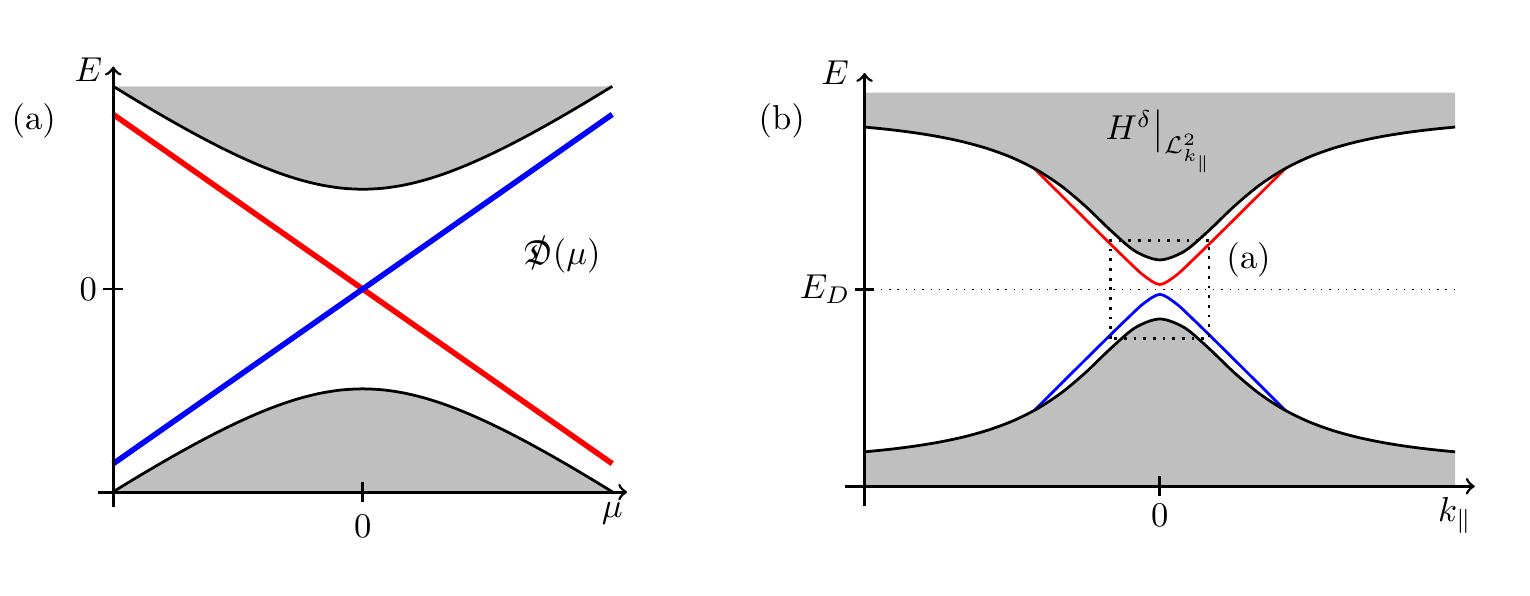}
    \caption{{\sl Armchair-type edge, $H^\delta$:} (a) Spectrum of  $\mathfrak{\Dirac}(\mu)$ as a function of $\mu$ for armchair-type edges with TRS:  superposition of the left pannels of Figure \ref{fig:7} $\mathcal{C}$-invariant case. (b) $\Ll^2_\kpar$-spectrum of $H^\delta$ as a function of $\kpar$, with anticipated splitting of edge state curves.}
   \label{fig:8}
\end{figure}

\begin{figure}
        \includegraphics{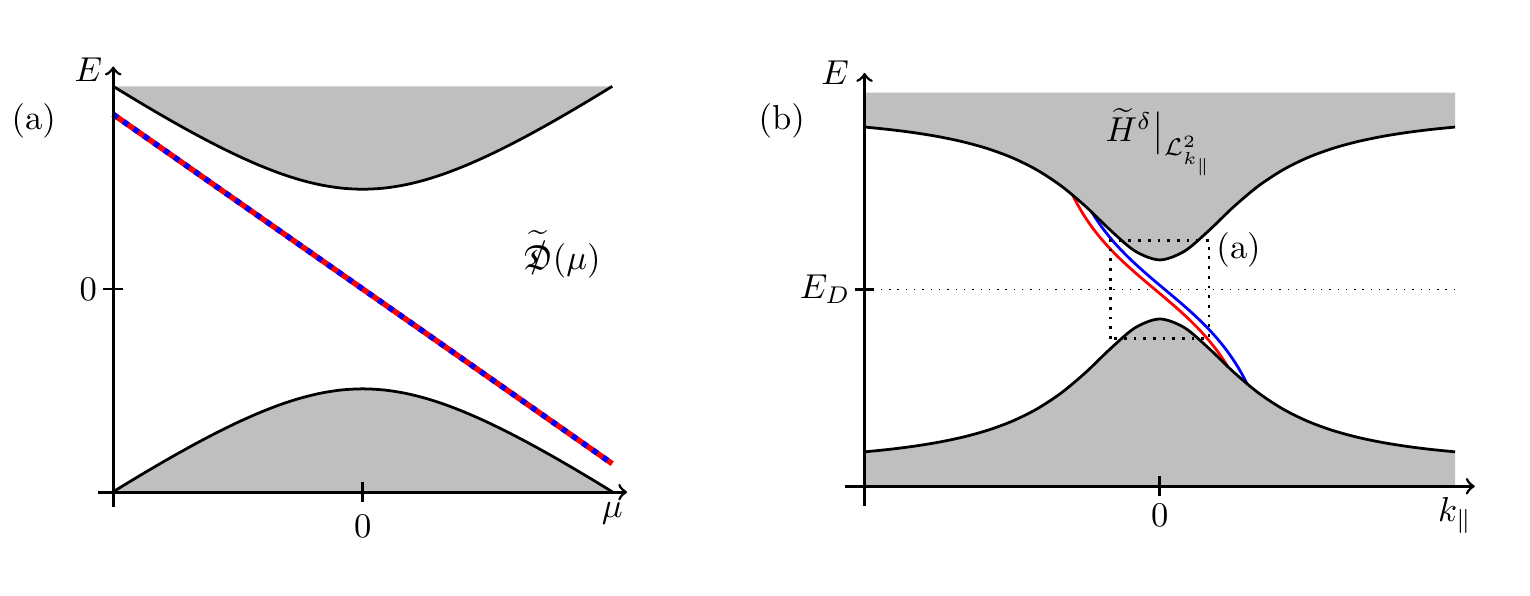}
   \caption{{\sl Armchair-type edge, $\tHd$:} (a) Spectrum of  $\widetilde{\Dirac}(\mu)$ as a function of $\mu$ for armchair-type edges with broken TRS:  two superposed copies of Figure \ref{fig:7}(b). (b) $\Ll^2_\kpar$-spectrum of $\tHd$ as a function of $\kpar$.}
   \label{fig:9}
\end{figure}

\subsection{A remark on the Valley Hall effect}\label{vhall}

Using our results, one can construct fully localized edge wave-packets  spectrally concentrated near $\bK$ or $\bKp$. In condensed matter physics, $\bK$ and $\bKp$ are referred to as 
{\it valley} degrees of freedom. 
The line-defect orientation has physical implications for energy propagation along the edge.
For $\CC-$ (time-reversal) invariant systems and a zigzag-type edge, wave-packets concentrated near $\bK$ or $\bKp$ 
 travel in opposite directions since the {\it valley indices} $\text{sgn}\big(\var^\bK\big)$ and  $\text{sgn}\big(\var^\bKp\big)$ have opposite sign; see Proposition \ref{H-eff} .
Therefore, in a non-magnetic honeycomb system, there is quantum Hall-like energy transport
 whose direction is controlled by exciting the appropriate high-symmetry sublattice. This is closely related to the {\it Valley Hall Effect}
 \cite{Mak-etal:14,QAP18}. This effect is not present for armchair-type edges.

\subsection{Topological considerations}\label{sec:5.3} We provide a topological perspective of the bifurcation curves using arguments from \cite{Drouot:18b}.
 The spectral flow of $H_{\rm edge}^\delta$ counts the signed number of $\Ll^2_\kpar$-eigenvalues
  of $H_{\rm edge}^\delta$ intersecting $E_D$ as $\kpar$ runs through $[0,2\pi]$;  see \S\ref{topology}. 
  
  The operator  $H^\delta$ breaks $\PP$ but not $\CCC$. The $\CCC$-invariance implies that edge state curves are symmetric about $\kpar = 0$. It follows that the spectral flow is equal to $0$. This is seen in the simulation results displayed in Figure \ref{CnoP}. 
  By adding suitably designed (compact) perturbations, one can deform $H^\delta$ to an operator with no edge states;   the edge modes in $\CCC-$ invariant systems are not topologically protected. 
  \footnote{We believe such non-protected states can be continuously deformed away through a family of spatially localized perturbations of $H^\delta_\edge$; this however is not immediately implied by general topological arguments.}

%   \footnote{\tg{Some comments regarding dynamical implications were moved to \S1.5. I think they are too important to only appear page 28. We could also repeat them here.}}
%\footnote{ \tg{No longer needed here:} Indeed, by the spectral localization analysis presented in \cite[Lemma 7.2]{Drouot:18b} shows
%   that for some $C>0$,  if $|\kpar-\widehat{\kpar}|\ge C\delta$ and  $\Ll^2_\kpar-\sigma(H^\delta) \cap \mathcal{G}_\delta\ =\ \{\ \}$; 
%    $(\ H^\delta-E\ )^{-1}$ acting in $\Ll^2_\kpar$ is invertible. 
%On the other hand,  the edge states energies, $E(\kpar)$,  for  $|\kpar-\widehat{\kpar}|\le C\delta$ 
%are derived from the scaled eigenvalue curves of
%    $\Dirac(\mu)$, whose  spectrum consists of two linear curves with opposite slopes. Hence,
%    the spectral flow is zero; see Figure \ref{fig:7} in the zigzag case and Figure \ref{fig:8} in the armchair case. (More generally,
%    recall $\sigma_{L^2}(\DiKp(\mu))=-\sigma_{L^2}(\DiK(\mu))$.)
%     This is a consequence
%     of the identity $\var^\bKp=-\var^\bK$ (Proposition \ref{ips}), which is a consequence of $\CCC$-invariance.}
     
The operator $\tHd$  breaks $\CCC$, but not $\PPP$. The spectral flow is here equal to $\pm2$: see the explanations  of \S\ref{topology2}, the rigorous discussion in  \cite{Drouot:19} and the numerical simulations displayed in Figure \ref{PnoC}. The edge states curves cannot be continuously removed through compact operator perturbations: edge states of $\tHd$ are topologically protected. 

In the context of our operators, $H^\delta_{\rm edge}$, see  \cite{Drouot:19} for relations between spectral flow and the topology of the bulk operator band spectrum (bulk-edge correspondence), as well as extensions beyond gap-preserving perturbations.

\section{The resolvent expansion}\label{resolvent}

In this section we prove Theorem \ref{res-exp}. Symmetry properties of the Hamiltonian $H^\delta_{\rm edge}$ (whether $\PPP$- or $\CCC$- breaking)  play a role in the global character of edge state curves (\S\ref{topology2}-\ref{sec:5.3} ). 
However symmetry does not play a role in technical details of the resolvent expansion for energies near $E_D$.  

Our method of proof unifies the approaches of  \cite{FLW-2d_edge:16,FLW-MAMS:17} and \cite{Drouot:18b} to obtain the different resolvent expansions for zigzag-type and armchair-type edge orientations. For ease of presentation, we focus on the particular case of armchair-type edge states of the $\CCC-$ invariant Hamiltonian $H_{\edge}^\delta=H^\delta$. This case presents new hurdles due to the coupling of spectral components of the two high-symmetry quasi-momentum sublattices $\bK+\Lambda^*$ and $\bKp+\Lambda^*$. 

For armchair-type edges ($\R\tv_1$ with $\bK\cdot\tv_1=\bKp\cdot\tv_1=0$), the multiscale analysis of \S\ref{eff-Dirac} indicates that  the 
 point spectrum of $H^\delta_{\rm edge}$ acting in $\Ll^2_{\delta\mu}$, with energies uniformly  within spectral gap about $E_D$, is determined by the block-diagonal effective Dirac operator, 
 $\Dirac(\mu):H^1(\R;\C^4)\to L^2(\R;\C^4)$;  one block determining the $\bK$ sublattice contribution 
 and the other block determining the $\bKp$ sublattice contribution.

The operator, $\Dirac(\mu)$, has degenerate (multiplicity two) point spectrum associated
with $\DiK(\mu)$ and $\DiKp(\mu)$; see \S\ref{D-eff-spec}. Furthermore, the spectral components of the two high-symmetry ($\bK$ and $\bKp$) quasi-momentum lattices are coupled by the line-defect perturbation. Our analysis will show that this coupling is non-resonant. Therefore, the resolvent of  $H^\delta_{\rm edge}$ acting in $\Ll^2_{\delta\mu}$ is, to dominant order as $\delta\downarrow0$, given in terms of the resolvent of  $\Dirac(\mu)$ acting in $L^2(\R;\C^4)$.
   We carry out  the expansion in $\Ll^2_0=L^2(\R^2/\Z\tv_1)$. This approach also gives the result for $\kpar = \delta \mu$ stated in Theorem \ref{res-exp}.

We will proceed along three steps:
\begin{itemize}
\item[1.] The multiscale construction of \S\ref{eff-Dirac}  \cite{FLW-2d_edge:16,FLW-MAMS:17} suggests that edge states bifurcate from the Dirac point  $(E_D,\bK_\star)$; the approximate edge mode
is spectrally concentrated near the energy / quasimomentum pair $(E_D,\bK_\star)$.  Thus  in \S\ref{sec:4.1} we introduce orthogonal projectors $\Pi_\near$ and $\Pi_\far = \Id - \Pi_\near$ that localize in energy and quasi-momentum near and far from the Dirac points. Thus,
\begin{equation}\label{eq:1a}
\Ll^2_0 = \HH_\near \oplus \HH_\far, \ \ \  \HH_\near = \operatorname{Range}(\Pi_\near), \ \ \ \HH_\far = \operatorname{Range}(\Pi_\far).
\end{equation}
\item[2.]   In \S\ref{sec:4.2}, we express $H^\delta$ on $\Ll^2_0$ in terms of its action on the orthogonal summands $\HH_\near$ and $\HH_\far$  in \eqref{eq:1a}:
\begin{equation}\label{eq:1t}
\left. \left(H^\delta - E_D - \delta z\right) \right|_{\Ll^2_0} = \matrice{H^\delta_\near - E_D-\delta z & 0
\\
0 & H^0_\far - E_D - \delta z} + \OO_{\Ll^2_0}(\delta),
\end{equation}
where $H^\delta_\near=\Pi_\near\ H^\delta\ \Pi_\near $ and $H^\delta_\far = \Pi_\far\ H^\delta\ \Pi_\far$. 
By  the spectral no-fold condition (H2) (see \eqref{H2}) the operator $H^0_\far$  has spectrum which is bounded away from $E_D$. Hence, the invertibility of \eqref{eq:1t} is controlled by that of $H^\delta_\near - E_D-\delta z$ on $\HH_\near$. 
\item[3.]  The core of the proof of Theorem \ref{res-exp} is in \S\ref{sec:4.3}-\ref{sec:4.4}. There,   $H^\delta_\near$ is related to the Dirac operators $\DiKs$, first for $H^\delta_\near$ itself (Proposition \ref{HDi}), then at  the level of the resolvent (Proposition \ref{prop:3}). A Schur complement  / Lyapunov-Schmidt reduction strategy applied to \eqref{eq:1t} yields $\Ll^2_0$-invertibility of $H^\delta - E_D - \delta z$ and  an expansion of the resolvent provided $z$ is bounded away from  $\sigma_{L^2}(\DiK)\cup\sigma_{L^2}(\DiKp)$. \end{itemize}

\subsection{Projector near Dirac points}\label{sec:4.1} Since we have assumed that $\R\tv_1$ is an armchair-type edge,$\bK+t\R$ intersects both $\bK$ and the sublattice $\bKp+\Lambda^*$.  By $\Lambda^*$ periodicity of the Floquet-Bloch modes
 we may restrict attention to the segment (1D Brillouin zone) 
 $t\in[-\pi,\pi]\mapsto\bK + t\tk_2$; see \eqref{slice-rep}.
  By the no-fold condition (H2) (see \eqref{H2}), along this quasi-momentum segment $H_\bk$ is ``gapped '' away from $\bk=\bK$ and $\bk=\bKp$.
  
   In this section, we construct an operator $\Pi_\near$ that projects simultaneously near energy $E_D$ and quasimomenta $\bK$ and $\bKp$ along $\bK + t\tk_2$, $|t|\le\pi$. We first construct operators $\Pi_\near^\bKs$ that projects near $\bKs \in \{\bK,\bKp\}$ along $\bK + t\tk_2$, $|t|\le\pi$. 

We recall that $H^0_\bk$ is the operator $H^0=-\Delta+V$ acting on $L^2_\bk$. This operator is unitarily equivalent to $H^0(\bk) = e^{-i\bk \cdot \bx} \cdot H^0 \cdot e^{i\bk \cdot \bx}$ acting on $L^2_{\bf 0}=L^2(\R^2/\Lambda)$. Since $H^0(\bk)$ varies smoothly with $\bk$ and $E_D$ is an eigenvalue of $H^0_\bKs$ of multiplicity precisely two, there exist $\varepsilon, \eta > 0$ such that
\begin{equation}
|\bk-\bKs| \leq \eta \ \ \Rightarrow \ \ \text{ $H^0_\bKs$ has no eigenvalues but $E_D$ in $[E_D-\varepsilon,E_D+\varepsilon]$. };
\label{EDonly}\end{equation}
see, for example, \cite[Section 8]{Kato}. For fixed $\bk$, satisfying $ |\bk-\bKs| \leq \eta$, we may define the orthogonal projector
\begin{equation}\label{eq:2a}
\Pi_\bk^0 = \dfrac{1}{2\pi i}\int_{|z-E_D|=\varepsilon} \big(z-H^0_\bk\big)^{-1} dz  : \ L^2_\bk \rightarrow L^2_\bk\ .
\end{equation}
Note by\eqref{EDonly} that
\begin{equation}
\label{PiKs}
\Pi^0_\bKs= \Phi^\bKs_1\otimes\Phi^\bKs_1+\Phi^\bKs_2\otimes\Phi^\bKs_2\ \ 
{\rm and}\ \ \Pi^0_\bKs(\bx,\by)=\Phi^\bKs(\bx)^\top\Phi^\bKs(\by).
\end{equation}

   We introduce $\Pi_\near^\bKs$ (see \cite{FLW-2d_edge:16}), a direct integral of operators $\Pi^0_\bk$ of quasi-momenta near $\bKs$:
\begin{equations}\label{eq:2c}
\Pi_\near^\bKs \de \dfrac{1}{2\pi}\int_{[-\pi,\pi]}^\oplus \chi \left(\delta^{-2/3} t\right) \cdot \Pi^0_{\bKs+ t \tk_2} \ dt :  \ \Ll^2_0 \rightarrow \Ll^2_0.
\end{equations}
Here, $\chi$ is the characteristic function of the interval $[-1,1]$.  $\Pi_\near^\bKs$ is well-defined for $\delta^{2/3} \leq \eta$. Note that we are suppressing the dependence of 
$\Pi_\near^\bKs $ on $\delta$.

The operator $\Pi_\near^\bKs$  projects to components with quasi-momenta in  $\bK + t\tk_2$, $|t|\le\pi$, and at most $\delta^{2/3}$ distant from $\bKs$.
Since for armchair-type edges,  the segment 
 $\bK + t\tk_2$, $|t|\le\pi$, intersects both $\bK+\Lambda$ and $\bKp+\Lambda^*$,
 along this segment the operator $H_\bk^0$ is ``gapped''  only for 
quasimomenta bounded away from $\bK$ and $\bKp$. Thus we introduce a projector which 
excludes spectral components which are near either $\bK+\Lambda^*$ or $\bKp+\Lambda^*$
\begin{equation}
\Pi_\near \de \Pi_\near^\bK + \Pi_\near^\bKp.
\end{equation}
Components in the range of  $\Pi_\near^\bK$ and range of $\Pi_\near^\bKp$ are coupled by 
the defect perturbation. We show however that such coupling is non-resonant and
 prove that its effect is negligible;  see Proposition \ref{lem:3e}.

We next provide an expansion of $\Pi_\near^\bKs$.
\begin{proposition}\label{lem:2m} As $\delta \rightarrow 0$,
\begin{equations}
\Pi_\near^\bKs = \Pi_{\near}^{\bKs,(0)} + \OO_{\Ll^2_0}(\delta^{2/3}), \ \ \ \  \textrm{where:}
\\
 \Pi_{\near}^{\bKs,(0)} \de \dfrac{1}{2\pi} \int_{[-\pi,\pi]}^\oplus \chi(\delta^{-2/3} t)\  \ e^{i t \tk_2\cdot \bx}\ \Pi^0_\bKs\ e^{-i t \tk_2\cdot \bx} \  dt,
 \label{PiKs0}\end{equations}
 where $\Pi^0_\bKs : \ L^2_\bKs \rightarrow L^2_\bKs$ is the projector to $\C \Phi_1^\bKs \oplus \C \Phi_2^\bKs$; see \eqref{PiKs}.
\end{proposition}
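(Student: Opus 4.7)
\textbf{Proof proposal for Proposition \ref{lem:2m}.} The strategy is to estimate the difference $\Pi_\near^\bKs - \Pi_{\near}^{\bKs,(0)}$ at the level of Floquet fibers via standard perturbation theory for the Riesz projector, then aggregate the fiber-wise bounds by invoking the direct-integral norm inequality \eqref{eq:2p}. Since both $\Pi_\near^\bKs$ and $\Pi_{\near}^{\bKs,(0)}$ are direct integrals in $t\in[-\pi,\pi]$ against the same cutoff $\chi(\delta^{-2/3}t)$, it suffices to prove the fiber-wise statement
\[
\Pi^0_{\bKs+t\tk_2}\ =\ e^{it\tk_2\cdot\bx}\,\Pi^0_{\bKs}\,e^{-it\tk_2\cdot\bx}\ +\ \OO_{L^2_{\bKs+t\tk_2}}(|t|)
\]
uniformly for $|t|\le \delta^{2/3}$, and then apply \eqref{eq:2p} to conclude $\|\Pi_\near^\bKs-\Pi_{\near}^{\bKs,(0)}\|_{\Ll^2_0}=\OO(\delta^{2/3})$.

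First I would unitarily transfer both operators on the fiber $L^2_{\bKs+t\tk_2}$ to operators on $L^2_{\mathbf 0}=L^2(\R^2/\Lambda)$ via conjugation by $e^{-i(\bKs+t\tk_2)\cdot\bx}$. Under this conjugation $H^0_{\bKs+t\tk_2}$ becomes $H^0(\bKs+t\tk_2)=-(\nabla+i(\bKs+t\tk_2))^2+V$, and the Riesz projector \eqref{eq:2a} becomes
\[
\widetilde\Pi(t)\ \de\ \dfrac{1}{2\pi i}\int_{|z-E_D|=\varepsilon}\big(z-H^0(\bKs+t\tk_2)\big)^{-1}dz.
\]
A direct computation shows that $e^{i(\bKs+t\tk_2)\cdot\bx}\widetilde\Pi(0)e^{-i(\bKs+t\tk_2)\cdot\bx}$ is exactly the operator $e^{it\tk_2\cdot\bx}\Pi^0_{\bKs}e^{-it\tk_2\cdot\bx}$ appearing in \eqref{PiKs0}, so the fiber-wise claim reduces to $\widetilde\Pi(t)-\widetilde\Pi(0)=\OO_{L^2_{\mathbf 0}}(|t|)$.

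For the latter, I would expand $H^0(\bKs+t\tk_2)-H^0(\bKs)=-2it\,\tk_2\cdot\nabla+t(2\bKs+t\tk_2)\cdot\tk_2$, which is a relatively $H^0(\bKs)$-bounded perturbation of size $\OO(|t|)$. Hypothesis (H1) (specifically the isolation of $E_D$ in the spectrum of $H^0_{\bKs}$ ensured by \eqref{EDonly}) guarantees that for $|t|$ small enough the contour $|z-E_D|=\varepsilon$ stays at a uniform positive distance from $\sigma(H^0(\bKs+t\tk_2))$, so $\bigl\|(z-H^0(\bKs+t\tk_2))^{-1}\bigr\|_{L^2_{\mathbf 0}}$ is bounded uniformly in $z$ on the contour and in $t$ small. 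The second resolvent identity then yields
\[
(z-H^0(\bKs+t\tk_2))^{-1}-(z-H^0(\bKs))^{-1}\ =\ \OO_{L^2_{\mathbf 0}}(|t|)
\]
uniformly on the contour; integrating along the contour of length $2\pi\varepsilon$ produces $\widetilde\Pi(t)-\widetilde\Pi(0)=\OO_{L^2_{\mathbf 0}}(|t|)$.

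Since conjugation by the unitary multiplier $e^{\pm i(\bKs+t\tk_2)\cdot\bx}$ preserves operator norm between $L^2_{\mathbf 0}$ and $L^2_{\bKs+t\tk_2}$, the fiber bound transfers back, and \eqref{eq:2p} gives the claimed $\OO_{\Ll^2_0}(\delta^{2/3})$ estimate because the support of $\chi(\delta^{-2/3}t)$ restricts $|t|\le\delta^{2/3}$. The main technical care needed is in handling the unboundedness of $H^0(\bk)$ when quantifying the relative bound: I would use that the perturbation $-2it\tk_2\cdot\nabla+\OO(|t|)$ is $H^0(\bKs)$-bounded with relative bound $\OO(|t|)$, so a standard Neumann series argument (applied to $(z-H^0(\bKs))^{-1}\cdot(H^0(\bKs+t\tk_2)-H^0(\bKs))$) gives the uniform resolvent estimate on the contour. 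Everything else is bookkeeping.
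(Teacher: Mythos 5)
Your proposal is correct and follows essentially the same route as the paper: a fiber-wise comparison of $\Pi^0_{\bKs+t\tk_2}$ with the conjugated projector $e^{it\tk_2\cdot\bx}\Pi^0_\bKs e^{-it\tk_2\cdot\bx}$, uniform in $|t|\le\delta^{2/3}$, followed by aggregation through the direct-integral bound \eqref{eq:2p} using the support of $\chi(\delta^{-2/3}t)$. The only difference is cosmetic: where the paper invokes Kato's analytic perturbation theory for the smoothness (hence Lipschitz dependence) of the conjugated projector family $N(\bk)$, you establish the $\OO(|t|)$ fiber estimate directly via the second resolvent identity and a Neumann series on the Riesz contour, which is an equivalent, slightly more self-contained justification of the same step.
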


\begin{proof}
1. By \eqref{eq:2a}, the projector
\begin{equation}\label{eq:2i}
N(\bk) \de e^{-i(\bk-\bKs)\cdot \bx}\cdot \Pi^0_\bk\cdot e^{i(\bk-\bKs)\cdot\bx}\ :\ L^2_\bKs\to L^2_\bKs, 
\end{equation}
 associated to the operator $H^0(\bk-\bKs) = e^{-i(\bk-\bKs)\cdot \bx} \cdot H^0 \cdot e^{-i(\bk-\bKs)\cdot \bx}$,   varies smoothly with $\bk$ \cite[Section 8]{Kato} . Hence, $N(\bk)$ has a Taylor expansion:
\begin{equation}\label{eq:2f}
N(\bk) = 
N(\bKs) + \OO_{L^2_\bKs}(\bk-\bKs) = \Pi^0_\bKs + \OO_{L^2_\bKs}(\bk-\bKs).
\end{equation}
The leading order term $\Pi^0_\bKs$  is the projection to $\C \Phi_1^\bKs \oplus \C \Phi_2^\bKs$
 displayed in \eqref{PiKs}.

2. Using \eqref{eq:2i} and \eqref{eq:2f}, with $\bk=\bK_\star + t\ktilde_2$, in \eqref{eq:2c} we get
\begin{equations}
\Pi_\near^\bKs = \dfrac{1}{2\pi}\int_{[-\pi,\pi]}^\oplus  \chi(\delta^{-2/3} t) \Pi^0_{\bKs+t \tk_2} \ dt 
= 
 \dfrac{1}{2\pi} \int_{[-\pi,\pi]}^\oplus   \chi(\delta^{-2/3} t) \cdot e^{i t \tk_2\cdot \bx} N(\bKs+t \tk_2) e^{-i t \tk_2\cdot \bx} \cdot dt
 \\
=\dfrac{1}{2\pi} \int_{[-\pi,\pi]}^\oplus  \chi(\delta^{-2/3} t) \cdot e^{i t \tk_2\cdot \bx} \Pi^0_\bKs e^{-i t \tk_2\cdot \bx}  \cdot dt + \dfrac{1}{2\pi} \int_{[-\pi,\pi]}^\oplus \chi(\delta^{-2/3} t) \cdot  \OO_{L^2_{\bKs+t\tk_2}}(t) dt,
\end{equations}
The first term in this expansion is precisely $\Pi_{\near}^{\bKs,(0)}$, displayed in \eqref{PiKs0}. The second term is $\OO_{\Ll^2_0}(\delta^{2/3})$ because $\chi(\delta^{-2/3} t)$ is supported in $|t| \le\delta^{2/3}$ and of \eqref{eq:2p}.  This completes the proof of Proposition \ref{lem:2m}.
\end{proof}
Next, we give an explicit expression for $\Pi_{\near}^{\bKs,(0)}$, the leading term 
 in \eqref{PiKs0}.
\begin{proposition}\label{lem:3d} We have :
\begin{equation}
\Pi_{\near}^{\bKs,(0)} = \TT_\bKs^* \chi(\delta^{-2/3} D_s) \TT_\bKs, \label{TTchi}
\end{equation}
Here,  $\TT_\bKs : L^2(\Sigma,\C^2) \rightarrow L^2(\R)$ and its adjoint $\TT_\bKs^*:L^2(\R)\to L^2(\Sigma,\C^2)$ ($\Sigma=\R^2/\R\tv_1$) are given by
\begin{equation}
\TT_\bKs u(t) = \int_{\R/\Z} \left( \ove{\Phi^\bKs} u\right)(s\tv_1+t\tv_2) ds, \ \ \ \ \TT_\bKs^* v(\bx) = \Phi^\bKs(\bx)^\top v(\tk_2\cdot \bx).
\label{TTKs}\end{equation}
\end{proposition}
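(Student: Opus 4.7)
The identity is verified by unpacking both sides into explicit integral kernels on $\Sigma=\R^2/\Z\tv_1$ and comparing. First, applying $\TT_\bKs^*\chi(\delta^{-2/3}D_s)\TT_\bKs$ to $u\in \Ll^2_0$, we combine the Fourier-multiplier representation
\[
\chi(\delta^{-2/3}D_s)g(t)=\frac{1}{2\pi}\int_\R\int_\R e^{i\eta(t-t')}\chi(\delta^{-2/3}\eta)\,g(t')\,dt'\,d\eta
\]
with the definitions \eqref{TTKs} and the change of variables $\by=s\tv_1+t'\tv_2$ (whose Jacobian is $|\tv_1\wedge\tv_2|=1$, and which satisfies $\tk_2\cdot\by=t'$). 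This yields the explicit kernel representation
\[
(\TT_\bKs^*\chi(\delta^{-2/3}D_s)\TT_\bKs u)(\bx)=\frac{1}{2\pi}\int_\R\chi(\delta^{-2/3}\eta)\int_\Sigma e^{i\eta\tk_2\cdot(\bx-\by)}\Phi^\bKs(\bx)^\top\overline{\Phi^\bKs(\by)}\,u(\by)\,d\by\,d\eta.
\]

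Second, starting from \eqref{PiKs0} and the explicit form \eqref{PiKs} of $\Pi^0_\bKs$, together with the transverse Bloch synthesis for $\Ll^2_0$ (i.e.\ \eqref{slice-rep} at $\kpar=0$, so that $\widetilde u(t\tk_2,\by)=\sum_{n\in\Z}e^{-itn}u(\by+n\tv_2)$), we derive a representation of $(\Pi^{\bKs,(0)}_\near u)(\bx)$ of the same structural form but with the $t$-integration restricted to $[-\pi,\pi]$ and with the $\by$-integral expressed as a discrete sum over $\tv_2$-translates against an integral over a $\Lambda$-fundamental cell $\Omega$.

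Third, since $\chi(\delta^{-2/3}\cdot)$ is supported in $[-\delta^{2/3},\delta^{2/3}]\subset(-\pi,\pi)$ for $\delta$ sufficiently small, the $t$-integration in the second representation can be harmlessly extended to all of $\R$. Absorbing the $\Z\tv_2$-sum into the $\Omega$-integral then reconstitutes a single integral over $\Sigma$, matching the first expression under the relabeling $\eta\leftrightarrow t$. The substantive point of the argument lies in precisely this reconciliation between the Bloch-type representation (an integral over the compact interval $[-\pi,\pi]$ coupled to a discrete sum over $\tv_2$-translates) and the Fourier-type representation (an integral over $\R$ against a continuous $\Sigma$-integral). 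This is a Poisson-summation-style identity that is rendered essentially trivial by the support constraint on $\chi(\delta^{-2/3}\cdot)$; absent that support property, one would have to track $\Lambda^*$-aliasing contributions. The remainder of the verification is routine bookkeeping of normalizations and function-space conventions.
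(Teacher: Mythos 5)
Your plan is correct and follows essentially the same route as the paper: both sides are reduced to explicit kernels on $\Sigma$, with the change of variables $\by=s\tv_1+t'\tv_2$ (unit Jacobian, $\tk_2\cdot\by=t'$) producing the Fourier-multiplier kernel, and the support of $\chi(\delta^{-2/3}\cdot)$ inside $(-\pi,\pi)$ reconciling the full-line frequency integral with the Brillouin-segment integral, exactly as in the paper's proof. The only cosmetic difference is that the paper obtains the kernel of $\Pi_\near^{\bKs,(0)}$ directly from the general fiber-kernel formula of its Floquet--Bloch preliminaries, whereas you unpack the Bloch synthesis and the $\Z\tv_2$-translate sum by hand, which is the same computation with the bookkeeping made explicit.
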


\begin{proof} 1. We use \eqref{eq:9i} to express the kernel of $\Pi_{\near}^{\bKs,(0)}$:
\begin{equations}
\Pi_{\near}^{\bKs,(0)}(\bx,\by) = \dfrac{1}{2\pi} \int_{[-\pi,\pi]} \chi(\delta^{-2/3} t) \ e^{i t \tk_2\cdot \bx}\ \Pi^0_\bKs(\bx,\by)\ e^{-i t \tk_2\cdot \by} \ dt.
\end{equations}
Since  $\Pi^0_\bKs(\bx,\by)=\Phi^\bKs(\bx)^\top\ \ove{\Phi^\bKs(\by)}$ (see  \eqref{PiKs}), 
 the kernel of $\Pi_{\near}^{\bKs,(0)}$ as an operator acting on $L^2(\Sigma)$, $\Sigma=\R^2/\Z\vtilde_1$, is given by:
\begin{equations} \label{Pi0xy-z}
\Pi_{\near}^{\bKs,(0)}(\bx,\by) =    \dfrac{\Phi^\bKs(\bx)^\top}{2\pi} \int_{[-\pi,\pi]}\ \chi(\delta^{-2/3} \omega)\ e^{i \omega \tk_2\cdot (\bx-\by)} d\omega\ \ove{\Phi^\bKs(\by)}.
\end{equations}

2. On the other hand,
 \begin{equation}
 \left(\TT_\bKs^*\chi(\delta^{-2/3}D)\TT_\bKs\right)u(\bx) =
 \dfrac{\Phi^\bKs(\bx)^\top}{2\pi} \int_{\R}\chi(\delta^{-2/3}\omega) 
 \int_{\R } e^{i\omega(\ktilde_2\cdot\bx-t_2^\prime)}\ \left(\TT_\bKs u\right)(t_2^\prime) dt_2^\prime  d\omega.
  \label{Pi0xy-a}
  \end{equation}
  Writing out the  $dt_2^\prime$ integral we have:
 \[  \int_{\R}\ e^{i\omega(\ktilde_2\cdot\bx-t_2^\prime)}  \left(\TT_\bKs u\right)(t_2^\prime)  dt_2^\prime
   =  \int_{\R}dt_2^\prime  \int_0^1 dt_1^\prime  e^{i\omega(\ktilde_2\cdot\bx-t_2^\prime)} 
  \left( \overline{\Phi^\bKs}u\right)(t_1^\prime\tv_1+t_2^\prime\tv_2).
 \]
  Set $\by=t_1^\prime\tv_1+t_2^\prime\tv_2$. Hence $dt_1^\prime dt_2^\prime=
  |\tv_1\wedge\tv_2|\ d\by\ =\ d\by$ and $t_2^\prime=\tk_2\cdot\by$. Therefore,
   \begin{equation} 
   \int_{\R}  e^{i\omega(\ktilde_2\cdot\bx-t_2^\prime)} \left(\TT_\bKs u\right)(t_2^\prime)  dt_2^\prime
=  \int_\Sigma  e^{i\omega\tk_2\cdot(\bx-\by)}\cdot \overline{\Phi^\bKs(\by)} u(\by)  d\by.
  \label{Pi0xy-b}\end{equation}
Substitution of \eqref{Pi0xy-b} into \eqref{Pi0xy-a} yields:
 \begin{equation}
 \left(\TT_\bKs^*\ \chi(\delta^{-2/3}D)\ \TT_\bKs\right)u(\bx)  =  
   \dfrac{\Phi^\bKs(\bx)^\top}{2\pi}  \int_{\R_\omega}\chi(\delta^{-2/3}\omega)  
 \int_\Sigma\ e^{i\omega\tk_2\cdot(\bx-\by)}\ \overline{\Phi^\bKs(\by)}  u(\by)  d\by\ .
 \nn\end{equation}
 Hence,   for $\delta$ small:
 \begin{align}
 \left(\TT_\bKs^*\ \chi(\delta^{-2/3}D)\ \TT_\bKs\right)(\bx,\by)  &=  
  \dfrac{\Phi^\bKs(\bx)^\top}{2\pi}   \int_{\R}\chi(\delta^{-2/3}\omega) 
  e^{i\omega\tk_2\cdot(\bx-\by)} \cdot \overline{\Phi^\bKs(\by)} d\w\nn\\
  &=    \dfrac{\Phi^\bKs(\bx)^\top}{2\pi}  \int_{[-\pi,\pi]}\chi(\delta^{-2/3}\omega)\ 
  e^{i\omega\tk_2\cdot(\bx-\by)} \cdot \overline{\Phi^\bKs(\by)} d\w
  \label{Pi0xy-c}\end{align}
  Comparing \eqref{Pi0xy-c} with \eqref{Pi0xy-z} completes the proof of Proposition \ref{lem:3d}.
 \end{proof}

\subsection{Decomposition into near and far components}\label{sec:4.2}  Using the projections  $\Pi_\near$ and $\Pi_\far = \Id - \Pi_\near$, the space $\Ll^2_0$ splits into near and far quasi-momentum components:
\begin{equation}
\Ll^2_0 = \HH_\near \oplus \HH_\far, \ \ \ \ \HH_\near \de \Pi_\near\left( \Ll^2_0\right), \ \ \ \ \HH_\far \de \Pi_\far \left( \Ll^2_0 \right).
\end{equation}
We show here that  the invertibility of $H^\delta-E_D-\delta z$ on $\Ll^2_0$ reduces to that of 
\begin{equation}
\Pi_\near \left( H^\delta-E_D-\delta z \right) \Pi_\near : \ \HH_\near \rightarrow \HH_\near.
\end{equation}

We view $H^\delta - E_D -\delta z$  as a matrix operator acting the direct summands $\HH_\near $ and  $\HH_\far$ of $\Ll^2_0$:
\begin{equations}\label{eq:3o}
H^\delta =
 \matrice{H^\delta_\near- E_D -\delta z & \Pi_\near H^\delta \Pi_\far \\ \Pi_\far H^\delta \Pi_\near & H^\delta_\far - E_D -\delta z },  \ \ \ \  \textrm{where} \\ 
  H^\delta_\near \de \Pi_\near H^\delta \Pi_\near,\qquad H^\delta_\far \de \Pi_\far H^\delta \Pi_\far.
\end{equations}
We next simplify the off-diagonal components of \eqref{eq:3o}. Note that  $\Pi_\near$ is a direct integral of spectral projections associated to $H^0_\bk$ and therefore commutes with $H^0$. Moreover, $\Pi_\near \Pi_\far = 0$ and 
\[ H^\delta = H^0 + \delta \kappa_\delta W,\quad \kappa_\delta(\bx)=\kappa(\delta\tk_2\cdot\bx).\] We deduce that
$\Pi_\near H^\delta \Pi_\far = \delta \cdot \Pi_\near \cdot \kappa_\delta W \cdot \Pi_\far$ and 
similarly, $\Pi_\far \left(H^\delta- E_D -\delta z\right) \Pi_\near = \delta \cdot \Pi_\far  \cdot \kappa_\delta W \cdot  \Pi_\near$.
Therefore \eqref{eq:3o} becomes
\begin{equation}\label{Hnf}
H^\delta =
 \matrice{ H^\delta_\near- E_D -\delta z & \delta \cdot \Pi_\near  \cdot \kappa_\delta W \cdot  \Pi_\far \\ 
 \delta \cdot \Pi_\far  \cdot \kappa_\delta W \cdot  \Pi_\near & H^\delta_\far- E_D -\delta z}.
\end{equation}
We construct the resolvent $(H^\delta_\near- E_D -\delta z)^{-1}$ using the Schur complement:
\begin{lemma}\label{schur}
Let
$ \mathscr{M} = \matrice{A&B\\ C&D } $ be such that 
$D$ and $\E \de A - BD^{-1}C$ are invertible.\\  Then,  $\mathscr{M}$ is invertible and 
\begin{equation}
\mathscr{M}^{-1}=
 \matrice{  \E^{-1} & -\E^{-1}BD^{-1} \\
 - D^{-1}C\E^{-1} & D^{-1}+D^{-1}C\E^{-1}BD^{-1}}.
 \end{equation}
\end{lemma}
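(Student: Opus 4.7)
The plan is to prove Lemma \ref{schur} by the standard block LDU factorization argument, which both makes the structure of $\mathscr{M}^{-1}$ transparent and reduces the verification to multiplying three block triangular matrices whose inverses are immediate. First, I would check the identity
\begin{equation}
\mathscr{M} \ = \ \matrice{I & BD^{-1} \\ 0 & I} \, \matrice{\E & 0 \\ 0 & D} \, \matrice{I & 0 \\ D^{-1}C & I},
\label{ldu}
\end{equation}
by simply multiplying out the right-hand side and using $\E = A - BD^{-1}C$. This is a routine block multiplication; the only hypothesis invoked is that $D^{-1}$ exists.

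Next, each of the three factors on the right-hand side of \eqref{ldu} is invertible: the two triangular factors are unipotent and their inverses are obtained by flipping signs of the off-diagonal block, while the block-diagonal factor is invertible since both $D$ and $\E$ are assumed to be. Hence $\mathscr{M}$ is invertible, and
\begin{equation}
\mathscr{M}^{-1} \ = \ \matrice{I & 0 \\ -D^{-1}C & I} \, \matrice{\E^{-1} & 0 \\ 0 & D^{-1}} \, \matrice{I & -BD^{-1} \\ 0 & I}.
\label{ldu-inv}
\end{equation}
Multiplying out \eqref{ldu-inv} from right to left yields the $2\times 2$ block expression for $\mathscr{M}^{-1}$ stated in the lemma.

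I do not anticipate any real obstacle here: the identity \eqref{ldu} is an algebraic fact not depending on any analytic structure, and provided the compositions $BD^{-1}$, $D^{-1}C$, $\E^{-1}BD^{-1}$, and $D^{-1}C\E^{-1}$ are interpreted as operator compositions on the appropriate domains (with $B:\HH_\far\to\HH_\near$, $C:\HH_\near\to\HH_\far$, $D$ acting on $\HH_\far$, $\E$ acting on $\HH_\near$ in the intended application), everything is well defined. Alternatively, one could bypass \eqref{ldu} entirely and simply verify directly that the claimed formula for $\mathscr{M}^{-1}$ satisfies $\mathscr{M}\mathscr{M}^{-1} = I$ and $\mathscr{M}^{-1}\mathscr{M} = I$ by block multiplication; the factorization approach is preferable because it explains where the formula comes from and will later guide the Lyapunov--Schmidt reduction of \S\ref{sec:4.3}, in which $A = H^\delta_\near - E_D - \delta z$, $D = H^\delta_\far - E_D - \delta z$, and the Schur complement $\E$ is the effective operator whose invertibility is to be analyzed via the Dirac resolvent.
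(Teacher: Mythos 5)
Your proof is correct: the block LDU factorization, the invertibility of each factor under the stated hypotheses on $D$ and $\E$, and the multiplication of the three inverse factors indeed reproduce the claimed formula for $\mathscr{M}^{-1}$. The paper itself states Lemma \ref{schur} without proof, treating it as a standard algebraic fact before applying it with $A = H^\delta_\near - E_D - \delta z$, $D = H^\delta_\far - E_D - \delta z$, so your argument supplies exactly the routine verification the authors omit, and nothing in it conflicts with how the lemma is used in \S\ref{sec:4.2}.
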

\nit We shall apply Lemma \ref{schur} with 
\begin{equations}
A = H^\delta_\near- E_D -\delta z,\quad B= \delta \cdot \Pi_\near  \cdot \kappa_\delta W \cdot  \Pi_\far\\
C =  \delta \cdot \Pi_\far  \cdot \kappa_\delta W \cdot  \Pi_\near,\quad D= H^\delta_\far- E_D -\delta z\ .
\end{equations}
 We first study the invertibility of $D=H^\delta_\far- E_D -\delta z$ on $\HH_\far$. We note that
\begin{equation}
H^\delta_\far- E_D -\delta z = H^0_\far-E_D + \delta \cdot \Pi_\far (\kappa_\delta W - z) \Pi_\far = H^0_\far - E_D + \OO_{\HH_\far}(\delta).
\end{equation}
Because of (H1) and (H3), we have for some $c_1 > 0$
\begin{equation}
\dist\left( \sigma_{\HH_\far}\left(H^0\right), E_D \right) \geq c_1 \delta^{2/3}.
\end{equation}
Therefore the operator $H^0-E_D : \HH_\far \rightarrow \HH_\far$ is invertible, and its inverse is $\OO_{\HH_\far}(\delta^{-2/3})$. 
A Neumann series argument implies that $D$ is invertible   on $\HH_\far$ with
$D^{-1} = \OO_{\HH_\far}(\delta^{-2/3})$.

We turn to the invertibility of $\E = A - BD^{-1}C : \ \HH_\near \rightarrow \HH_\near$. We have
\begin{equations}
\E = \left(H^\delta_\near- E_D -\delta z\right) - \delta^2  \cdot \Pi_\near \kappa_\delta W \cdot  \left(H^\delta_\far- E_D -\delta z\right)^{-1} \cdot \kappa_\delta W \Pi_\near .\label{Eexp}
\end{equations}
Since $D^{-1} = \OO_{\HH_\far}(\delta^{-2/3})$, the correction term in \eqref{Eexp} is of size $\OO_{\HH_\near}(\delta^{2-2/3})=\OO_{\HH_\near}(\delta^{4/3})$ and therefore
\begin{equation}\label{eq:3s}
\E= H^\delta_\near- E_D -\delta z + \OO_{\HH_\near}(\delta^{4/3}).
\end{equation}
To study the invertibility of $\E$, we analyze the invertibility of the leading order term in \eqref{eq:3s}. In \S\ref{sec:4.3} and \S\ref{sec:4.4} we will prove the two following propositions,
 expansions of $H^\delta_\near- E_D -\delta z $ and its inverse:

\begin{proposition}\label{HDi} %\label{prop:5} 
For $z$ varying in compact subsets of $\C$, as operators on $\Ll^2_0$,
\begin{align}\label{eq:4y}
&H^\delta_\near - E_D- \delta z 
\\
&= \delta \cdot (U_\delta \TT)^* \cdot  \chi(\delta^{1/3} D_s)
\matrice{\DiK -z & 0 \\ 0 & \DiKp - z}   \chi(\delta^{1/3} D_s)\cdot U_\delta \TT + \OO_{\Ll^2_0}\left(\delta^{4/3}\right).\nn
\end{align}
\end{proposition}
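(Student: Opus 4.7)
The plan is to decompose $\Pi_\near = \Pi_\near^\bK + \Pi_\near^\bKp$ and expand $H^\delta_\near - E_D - \delta z$ into four blocks $\Pi_\near^\bKs(H^\delta - E_D - \delta z)\Pi_\near^{\bKs'}$ indexed by $\bKs, \bKs' \in \{\bK, \bKp\}$. Writing $H^\delta = H^0 + \delta \kappa_\delta W$ with $\kappa_\delta(\bx) = \kappa(\delta \tk_2 \cdot \bx)$ separates the contributions from the periodic Hamiltonian and from the slowly varying domain wall. The diagonal blocks will produce the two Dirac operators $\DiK - z$ and $\DiKp - z$ after rescaling by $U_\delta$; the off-diagonal blocks are negligible.

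I first address the off-diagonal blocks. The $H^0$ contribution vanishes identically because $H^0$ is fiber-diagonal in $\Ll^2_0 = \int^\oplus_{[-\pi,\pi]} L^2_{t\tk_2}\,dt/2\pi$ and, for $\delta$ small, the fiber supports of $\Pi_\near^\bK$ and $\Pi_\near^\bKp$ are disjoint intervals of width $\delta^{2/3}$ in the 1D Brillouin zone. The $\delta \kappa_\delta W$ contribution is reduced, via Proposition \ref{lem:2m}, to $\delta \cdot \Pi_\near^{\bK,(0)}(\kappa_\delta W)\Pi_\near^{\bKp,(0)}$ modulo $\OO(\delta^{5/3})$. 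Using Proposition \ref{lem:3d} and writing $\bx = s\tv_1 + t\tv_2$ so that $\tk_2 \cdot \bx = t$, one computes
\begin{equation*}
\TT_\bK (\kappa_\delta W) \TT_\bKp^*\, v(t) = \kappa(\delta t)\, M(t)\, v(t), \qquad M(t) = \int_{\R/\Z} (\bar\Phi^\bK W \Phi^{\bKp,\top})(s\tv_1+t\tv_2)\,ds.
\end{equation*}
Since $\bar\Phi^\bK W \Phi^{\bKp,\top}$ is $(\bKp - \bK)$-pseudoperiodic, $M(t+1) = e^{i(\bKp-\bK)\cdot\tv_2} M(t)$, so the Fourier spectrum of $M$ is contained in $(\bKp-\bK)\cdot\tv_2 + 2\pi\Z$. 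A short arithmetic check using $a_1 b_2 - a_2 b_1 = 1$ and $a_1 \equiv b_1 \pmod 3$ shows that in the armchair case $(\bKp - \bK)\cdot\tv_2 \not\equiv 0 \pmod{2\pi}$, so this spectrum is bounded away from zero. Multiplication by $\kappa(\delta t)$ spreads Fourier support by only $\OO(\delta)$, while the outer cutoffs $\chi(\delta^{-2/3}D_t)$ restrict to $|\xi| \le \delta^{2/3}$; for $\delta$ small these constraints are incompatible, so the middle operator vanishes exactly and the off-diagonal blocks contribute $\OO(\delta^{5/3})$.

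For the diagonal blocks, the Lipschitz estimate $\|(H^0 - E_D)\Pi_\near^\bKs\|_{\Ll^2_0} = \OO(\delta^{2/3})$ (from the conic dispersion at $\bKs$ and the $\delta^{2/3}$-wide fiber support of $\Pi_\near^\bKs$) together with Proposition \ref{lem:2m} shows that replacing $\Pi_\near^\bKs$ by $\Pi_\near^{\bKs,(0)}$ introduces only an $\OO(\delta^{4/3})$ error. The computation of $\Pi_\near^{\bKs,(0)}(H^\delta - E_D - \delta z)\Pi_\near^{\bKs,(0)}$ then splits into three pieces: (i) for $H^0 - E_D$, Taylor-expanding $H^0(\bKs+\bq)$ in $\bq = t\tk_2$ and projecting onto $\text{Range}(\Pi^0_\bKs)$, the constant term vanishes by $H^0_\bKs \Phi^\bKs = E_D \Phi^\bKs$, the linear term yields $t \vF^\bKs |\tk_2| \sigma_1$ by Proposition \ref{ips}(a) with $\br = \tk_2$, and the $|\bq|^2$ term is $\OO(\delta^{4/3})$ on the range of $\chi(\delta^{-2/3}D_s)$; (ii) for $\delta \kappa_\delta W$, Proposition \ref{thetaKs} identifies the DC Fourier mode of the fiber integral of $\bar\Phi^\bKs W \Phi^{\bKs,\top}$ as $\var^\bKs \sigma_3$, while the non-DC modes at frequencies in $2\pi\Z \setminus \{0\}$ are annihilated by the pair of cutoffs as in the off-diagonal argument, reducing this piece to $\delta \var^\bKs \sigma_3 \kappa(\delta t)$; (iii) $-\delta z$ passes through as $-\delta z\, \Pi_\near^{\bKs,(0)}$. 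Conjugation by the dilation $U_\delta$, using $U_\delta^* D_s U_\delta = \delta^{-1}D_s$, turns $\chi(\delta^{-2/3}D_s) \mapsto \chi(\delta^{1/3} D_s)$ and $\kappa(\delta s) \mapsto \kappa(s)$, and the three pieces assemble into $\delta(\DiKs - z)$ flanked by $(U_\delta \TT_\bKs)^* \chi(\delta^{1/3}D_s)$ and $\chi(\delta^{1/3}D_s)(U_\delta \TT_\bKs)$. Summing over $\bKs \in \{\bK, \bKp\}$ and using the block structure $\TT = (\TT_\bK, \TT_\bKp)^\top$ yields the claimed block-diagonal expansion.

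The main technical obstacle is the off-diagonal analysis, which is genuinely new compared to the zigzag case of \cite{Drouot:18b}: there, $\Ll^2_{\bK\cdot\tv_1}$ and $\Ll^2_{\bKp\cdot\tv_1}$ are distinct Hilbert spaces and no $\bK$-$\bKp$ coupling can arise, whereas in the armchair setting both sectors live inside $\Ll^2_0$ and are a priori coupled by the domain-wall perturbation $\delta \kappa_\delta W$. The reduction to a block-diagonal Dirac resolvent relies on the non-resonance $(\bKp - \bK)\cdot\tv_2 \not\equiv 0 \pmod{2\pi}$, which is exactly the point at which the armchair-vs-zigzag dichotomy enters the resolvent expansion.
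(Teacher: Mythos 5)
Your overall architecture coincides with the paper's: split $\Pi_\near=\Pi_\near^\bK+\Pi_\near^\bKp$, Taylor-expand the $H^0$ fibers to get the $\vF^\bKs|\tk_2|\sigma_1 D_s$ blocks, Fourier-analyze the domain-wall term and use the non-resonance of $(\bK-\bKp)\cdot\tv_2$ modulo $2\pi$ to kill the $\bK$--$\bKp$ coupling, then conjugate by $U_\delta$. The $H^0$ part (including your slightly different route of first replacing $\Pi_\near^\bKs$ by $\Pi_\near^{\bKs,(0)}$ using $\|(H^0-E_D)\Pi_\near^{\bKs}\|=\OO(\delta^{2/3})$) and the arithmetic non-resonance check are fine.

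There is, however, a genuine gap at the step where you dispose of the non-resonant domain-wall contributions (both the off-diagonal blocks and the nonzero Fourier modes in the diagonal blocks). You claim that ``multiplication by $\kappa(\delta t)$ spreads Fourier support by only $\OO(\delta)$,'' so that sandwiching between the cutoffs $\chi(\delta^{-2/3}D_s)$ and the mode $e^{ims}$ with $m$ bounded away from $0$ makes the operator \emph{vanish exactly}. This is not justified: a domain-wall function is only assumed smooth with limits $\pm1$ at infinity and $\kappa'\in L^\infty$; it is not band-limited (it is not even in $L^2$), so $\kappa(\delta\cdot)$ does not map the frequency band $\{|\xi-m|\le\delta^{2/3}\}$ into any band of comparable width, and the operator $\chi(\delta^{-2/3}D_s)\,\kappa(\delta\cdot)\,\chi\big(\delta^{-2/3}(D_s-m)\big)$ is not zero. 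What is true, and what the paper proves, is the quantitative bound $\OO_{L^2}(\delta^{2/3})$, uniformly in $m$ bounded away from $0$: write $\chi\big(\delta^{-2/3}(\xi-m)\big)=\xi\,\psi_{\delta,m}(\xi)$ with $|\psi_{\delta,m}|\le1$, commute $D_s$ through $\kappa(\delta\cdot)$ picking up the commutator $[D_s,\kappa(\delta\cdot)]=\OO_{L^2}(\delta)$ (this is where $\kappa'\in L^\infty$ enters), and use $\|\chi(\delta^{-2/3}D_s)D_s\|_{L^2}\le\delta^{2/3}$; summing over the (absolutely convergent) Fourier coefficients of the fiber-integrated symbols then gives $\OO(\delta^{2/3})$, hence $\OO(\delta^{5/3})$ after multiplication by $\delta$, which is still within the stated $\OO_{\Ll^2_0}(\delta^{4/3})$ error. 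So the conclusion of the proposition is unaffected, but as written your argument for exact annihilation fails and must be replaced by this commutator estimate (or by an extra, unassumed, band-limitedness hypothesis on $\kappa$).
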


\begin{proposition}\label{prop:3} 
Fix $\epsi > 0$. There exists $\delta_0 > 0$ such that  for
\begin{equation}
\delta \in (0,\delta_0), \ \ |z| \leq \var_D - \epsi, \ \ \dist(\sigma_{L^2}(\DiK),z) \geq \epsi,
\end{equation}
the operator $H^\delta_\near - E_D- \delta z : \HH_\near \rightarrow \HH_\near$ is invertible; and as operators on $\HH_\near$,
\begin{equations}\label{eq:4g}
\left(H^\delta_\near - E_D- \delta z\right)^{-1} 
 = \dfrac{1}{\delta} \cdot (U_\delta \TT)^* \cdot \matrice{\DiK -z & 0 \\ 0 & \DiKp - z}^{-1}  \cdot U_\delta \TT  + \OO_{\HH_\near}\left(\delta^{-2/3}\right).
\end{equations}
\end{proposition}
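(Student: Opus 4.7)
The plan is to apply a Neumann-series argument to the expansion of Proposition \ref{HDi}. Write
\[ H^\delta_\near - E_D - \delta z = A_\delta + E_\delta, \]
with leading part $A_\delta := \delta (U_\delta \TT)^* \chi \, \Dirac_z \, \chi \, (U_\delta \TT)$, using the shorthand $\chi := \chi(\delta^{1/3} D_s)$ and $\Dirac_z := \operatorname{diag}(\DiK - z, \DiKp - z)$, and remainder $E_\delta = \OO_{\Ll^2_0}(\delta^{4/3})$. Under the hypothesis $\dist(z, \sigma_{L^2}(\Dirac)) \geq \epsi$ the resolvent $\Dirac_z^{-1}$ is bounded with $\|\Dirac_z^{-1}\|_{L^2} \leq \epsi^{-1}$, so the natural candidate
\[ Q_\delta := \frac{1}{\delta} (U_\delta \TT)^* \Dirac_z^{-1} (U_\delta \TT) \]
is a bounded operator on $\HH_\near$ with $\|Q_\delta\|_{\HH_\near} = \OO(\delta^{-1})$.

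Next I would show that $A_\delta Q_\delta = \Id_{\HH_\near} + \OO_{\HH_\near}(\delta^{1/3})$ using three near-identities. First, $(U_\delta \TT)^* \chi (U_\delta \TT) = \Pi_\near + \OO_{\Ll^2_0}(\delta^{2/3})$, a direct transcription of Propositions \ref{lem:2m}--\ref{lem:3d} once one observes that conjugation by $U_\delta$ sends $\chi(\delta^{-2/3} D_s)$ to $\chi(\delta^{1/3} D_s)$. Second, $U_\delta \TT (U_\delta \TT)^* = \Id_{L^2(\R,\C^4)} + \OO(\delta^{2/3})$ on the range of $\chi$, which encodes approximate orthonormality of the chosen Floquet--Bloch basis after the slice integration. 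Third, $[\chi, \Dirac_z] = \OO_{L^2}(\delta^{1/3})$, a smooth-symbol commutator bound available because the only non-commuting part of $\Dirac$ is the smooth multiplication $\kappa(s) \sigma_3$ (the first-order term $\sigma_1 D_s$ is itself a function of $D_s$). Combining these, and using $\chi^2 = \chi$, one finds $\chi \Dirac_z \chi \cdot \Dirac_z^{-1} = \chi + \OO(\delta^{1/3})$; sandwiching by $U_\delta \TT$ on the right and $(U_\delta \TT)^*$ on the left then reduces $A_\delta Q_\delta$ to $(U_\delta \TT)^* \chi (U_\delta \TT) + \OO(\delta^{1/3}) = \Pi_\near + \OO(\delta^{1/3})$, which equals $\Id_{\HH_\near} + \OO(\delta^{1/3})$ after restriction to $\HH_\near$. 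Together with $E_\delta Q_\delta = \OO_{\HH_\near}(\delta^{4/3}) \cdot \OO(\delta^{-1}) = \OO(\delta^{1/3})$, this yields
\[ (H^\delta_\near - E_D - \delta z) Q_\delta = \Id_{\HH_\near} + \OO_{\HH_\near}(\delta^{1/3}). \]

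For $\delta$ small enough, Neumann inversion then gives both invertibility of $H^\delta_\near - E_D - \delta z$ on $\HH_\near$ and the expansion
\[ (H^\delta_\near - E_D - \delta z)^{-1} = Q_\delta \bigl(\Id_{\HH_\near} + \OO_{\HH_\near}(\delta^{1/3})\bigr)^{-1} = Q_\delta + \OO_{\HH_\near}(\delta^{-2/3}), \]
which is exactly \eqref{eq:4g}. The principal technical obstacle is establishing the approximate isometry $U_\delta \TT (U_\delta \TT)^* = \Id + \OO(\delta^{2/3})$ on the range of $\chi$. The diagonal blocks of $\TT \TT^*$ act as multiplication (in the slice coordinate $t$) by the density $\int_0^1 \sum_j |\Phi_j^\bKs(s\tv_1 + t\tv_2)|^2 \, ds$, which averages to $1$ but is not pointwise constant; the off-diagonal blocks involve the cross-density $\int_0^1 \overline{\Phi^\bK}^\top \Phi^\bKp (s\tv_1 + t\tv_2) \, ds$. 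Because $\bK \cdot \tv_1 = \bKp \cdot \tv_1 = 0$ in the armchair case, this cross-density is genuinely $\Z$-periodic in $s$ and does not vanish by quasi-momentum orthogonality (contrast the zigzag case, where the integrand carries a non-trivial quasi-periodic phase). Showing that the resulting deviations are absorbed into the $\OO(\delta^{2/3})$ error on the low-frequency range of $\chi(\delta^{1/3} D_s)$---the scale set by the domain-wall modulation $\kappa(\delta \tk_2 \cdot \bx)$---is the armchair-specific step that goes beyond \cite{Drouot:18b}, and is where the no-fold hypothesis (H2) on the unperturbed bulk operator plays an essential role.
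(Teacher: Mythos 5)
Your overall architecture coincides with the paper's: take the leading operator from Proposition \ref{HDi}, multiply by the candidate inverse $Q_\delta=\tfrac1\delta(U_\delta\TT)^*(\Dirac-z)^{-1}U_\delta\TT$, show the product is $\Pi_\near+\OO_{\Ll^2_0}(\delta^{1/3})$, and conclude by a Neumann series. The genuine gap is that the one step carrying all the difficulty is asserted rather than proved, and the tools you name would not close it. The error $U_\delta\TT\TT^*U_\delta^*-\Id$ is multiplication by the rapidly oscillating, $O(1)$-sized function $f(\delta^{-1}\cdot)$ (mean-zero diagonal density plus the $\bK$--$\bKp$ cross-density); it is \emph{not} $\OO(\delta^{2/3})$ in operator norm, and in the product $A_\delta Q_\delta$ it acts on the range of $(\Dirac-z)^{-1}$, not on the range of $\chi(\delta^{1/3}D_s)$, so an approximate identity ``on the range of $\chi$'' is not the statement your chain actually invokes. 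One needs a quantitative mechanism trading the oscillation for a power of $\delta$: in the paper this is $f=D_sF$ with $F$ bounded, giving $U_\delta\TT\TT^*U_\delta^*-\Id=\OO_{H^1\to H^{-1}}(\delta)$ \eqref{eq:4b}, combined with $\chi(\delta^{1/3}D_s)(\Dirac-z)\chi(\delta^{1/3}D_s)=\OO_{H^{-1}\to L^2}(\delta^{-2/3})$ and $(\Dirac-z)^{-1}:L^2\to H^1$ bounded; a frequency-support argument (the frequencies of $f(\delta^{-1}\cdot)$ lie at distance $\gtrsim\delta^{-1}$ from $0$) would also work, but you give neither and explicitly defer this as ``the principal technical obstacle.'' Your diagnosis of that obstacle is also off: its resolution has nothing to do with the no-fold condition (H2), which enters only through the far component $H^0_\far$ in \S\ref{sec:4.2}. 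What makes the deviation harmless is that the diagonal density has zero cell-average (orthonormality of $\Phi^\bKs_j$) and that the cross-density, while nonzero for armchair edges, is pseudo-periodic with phase $e^{i(\bK-\bKp)\cdot\tv_2}\neq 1$ since $\bK\neq\bKp\bmod\Lambda^*$, hence has no zero Fourier mode --- the same arithmetic fact ($\dist(0,S)=2\pi/3$) used in Proposition \ref{lem:3e}.

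Two further points. Your bound $[\chi(\delta^{1/3}D_s),\Dirac-z]=\OO_{L^2}(\delta^{1/3})$ is false as stated: $\chi$ is the sharp indicator of $[-1,1]$, and the commutator of a sharp Fourier cutoff with the domain-wall multiplication $\kappa$ does not tend to zero in operator norm (it transfers $O(1)$ mass across the cutoff frequency). What is true, and what your chain needs, is the composed estimate $\chi(\Dirac-z)\chi\,(\Dirac-z)^{-1}=\chi+\OO_{L^2}(\delta^{1/3})$, which follows from $\chi(1-\chi)=0$, $\|1-\chi(\delta^{1/3}D_s)\|_{H^1\to L^2}=O(\delta^{1/3})$ and the smoothing $(\Dirac-z)^{-1}:L^2\to H^1$; this is the paper's computation \eqref{Az}, so the step is repairable, but not by a ``smooth-symbol'' commutator bound. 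Finally, $Q_\delta$ does not map $\HH_\near$ into itself, so the Neumann argument first yields the inverse in the projected form $\tfrac1\delta\,\Pi_\near(U_\delta\TT)^*(\Dirac-z)^{-1}(U_\delta\TT)\Pi_\near+\OO(\delta^{-2/3})$; passing to the unprojected formula \eqref{eq:4g} requires $\Pi_\near(U_\delta\TT)^*=(U_\delta\TT)^*\chi(\delta^{1/3}D_s)+\OO(\delta^{2/3})$ and $\chi(\Dirac-z)^{-1}\chi=(\Dirac-z)^{-1}+\OO(\delta^{1/3})$ (the paper's Step 5), which your writeup omits and which again rest on the Sobolev bounds above.
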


\begin{proof}[Proof of Theorem \ref{res-exp} assuming Propositions \ref{HDi} and \ref{prop:3}] 1. Because of Proposition \ref{prop:3}, in the range specified by Theorem \ref{res-exp}, $H^\delta_\near - E_D- \delta z$ is invertible on $\HH_\near$ with norm $\OO_{\HH_\near}(\delta^{-1})$. A Neumann series argument then shows that $\E$ -- given in \eqref{eq:3s} -- is invertible on $\HH_\near$ with $\E^{-1} = O_{\HH_\near}(\delta^{-1})$. This implies
\begin{equations}\label{eq:2q}
\E^{-1} = \left( H^\delta_\near- E_D -\delta z + \OO_{\HH_\near}(\delta^{4/3}) \right)^{-1} 
\\
= \left(H^\delta_\near - E_D- \delta z\right)^{-1} + \OO_{\HH_\near}(\delta^{-2/3}).
\end{equations}

2. Because of $\E^{-1} = \OO_{\HH_\near}(\delta^{-1})$, $B = \OO_{\HH_\far \rightarrow \HH_\near}(\delta)$, $C = \OO_{\HH_\near \rightarrow \HH_\far}(\delta)$ and $D^{-1} = \OO_{\HH_\far}(\delta^{-2/3})$, Lemma \ref{schur} yields 
\begin{equations}\label{eq:1e}
\left( H^\delta-E_D-\delta z \right)^{-1} = \matrice{\E^{-1} & 0 \\ 0 & 0} + \OO_{\Ll^2_0}(\delta^{-2/3}),
\end{equations}
where the leading order term in \eqref{eq:1e} is $\OO_{\Ll^2_0}(\delta^{-1})$. We then substitute \eqref{eq:2q} into \eqref{eq:1e}. The leading order 
term is given by that of \eqref{eq:4g}. This completes the proof of  \eqref{ac-res}, the resolvent expansion $H^\delta$ acting in $\Ll^2_0$ (armchair-type edges). As noted at the start of the proof, the arguments apply $H^\delta$ acting in $\Ll^2_{\delta\mu}$ for all $|\mu|<\mu_0$, with $\delta$ sufficiently small. This completes the proof of Theorem \ref{res-exp} assuming 
 Propositions \ref{HDi} and \ref{prop:3}.
\end{proof}

In the following two sections we turn to the proofs of Propositions \ref{HDi} and \ref{prop:3}.

\subsection{Proof of Proposition \ref{HDi}}\label{sec:4.3}
Since 
\begin{equation}\label{eq:1f}
H^\delta_\near = H^0_\near + \delta \cdot \Pi_\near \cdot \kappa_\delta W \cdot \Pi_\near,\quad
 \kappa_\delta(\bx)\ =\ \kappa(\delta\tk_2\cdot\bx).
\end{equation}
to prove Proposition \ref{HDi}, we must expand terms arising from 
$H^0_\near$,  which are
dominated by the conical-crossings of Dirac points, and 
 those arising from $\Pi_\near  \kappa_\delta W \Pi_\near$, which arise from the domain wall. We treat these terms separately in the following two subsections.

\subsubsection{Expansion of $H^0_{\rm near}-E_D-\delta z$; contributions from the conical crossing}

\begin{proposition}\label{lem:3b} As $\delta \rightarrow 0$, uniformly for $z$ in compact subsets of $\C$,
 \begin{align}\label{eq:2s}
H^0_{\rm near}-E_D-\delta z\ &=\ \Pi^\bKs_\near \left( H^0-E_D - \delta z \right) \Pi^\bKs_\near
\\
&=
\TT_\bKs^* \chi(\delta^{-2/3} D_s) \cdot    \big( \vF^\bKs \sigma_1 D_s - \delta z\big)  \cdot \chi(\delta^{-2/3} D_s) \TT_\bKs + \OO_{\Ll^2_0}(\delta^{4/3}).
\nn
\end{align}
where $D_s=-i\D_s$ and the operators $\TT_\bKs$ and $\TT_\bKs^*$ are defined in \eqref{TTKs}. 
\end{proposition}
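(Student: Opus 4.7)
The plan is to study $\Pi^\bKs_\near(H^0-E_D-\delta z)\Pi^\bKs_\near$ by expanding each fiber of its Floquet--Bloch representation around the Dirac quasimomentum $\bKs$, and then reassembling the integral via the Fourier identity already exploited in the proof of Proposition \ref{lem:3d}. The first step is to use that $\Pi^\bKs_\near$ is the direct integral of spectral projections of $H^0$ (so it commutes fiberwise with $H^0$) and that $\chi^2=\chi$, so that
\[
\Pi^\bKs_\near(H^0-E_D-\delta z)\Pi^\bKs_\near=\frac{1}{2\pi}\int_{-\pi}^{\pi}\chi(\delta^{-2/3}t)^{2}\,\Pi^0_{\bKs+t\tk_2}\bigl(H^0_{\bKs+t\tk_2}-E_D-\delta z\bigr)\,dt.
\]
Conjugating each fiber by $e^{it\tk_2\cdot\bx}$ sends $H^0_{\bKs+t\tk_2}$ to $H^0(t\tk_2)=H^0+t(-2i\tk_2\cdot\nabla)+t^{2}|\tk_2|^{2}$ acting on $L^2_\bKs$, and sends $\Pi^0_{\bKs+t\tk_2}$ to $N(\bKs+t\tk_2)=\Pi^0_\bKs+\OO_{L^2_\bKs}(t)$ via \eqref{eq:2f}.

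In the second step I would use $(H^0-E_D)\Pi^0_\bKs=0$ together with the Taylor expansions to reduce the fiber integrand, modulo $\OO(t^2)+\OO(\delta t)$, to $\bigl(t\,\Pi^0_\bKs(-2i\tk_2\cdot\nabla)\Pi^0_\bKs-\delta z\,\Pi^0_\bKs\bigr)$ conjugated by $e^{\pm it\tk_2\cdot\bx}$. Proposition \ref{ips}(a) applied with $\br=\tk_2$ computes $\Pi^0_\bKs(-2i\tk_2\cdot\nabla)\Pi^0_\bKs$ as the matrix $\vF^\bKs|\tk_2|\sigma_1\cdot\Pi^0_\bKs$ in the basis $\{\Phi_1^\bKs,\Phi_2^\bKs\}$. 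To convert the fiber integral into the claimed operator on $L^2(\R,\C^2)$, I would extend the kernel computation from Proposition \ref{lem:3d}: for any scalar multiplier $f$ with $\supp f\subset[-\pi,\pi]$ and any $2\times2$ matrix $M$,
\[
\frac{1}{2\pi}\int f(t)\,M\,e^{it\tk_2\cdot\bx}\Pi^0_\bKs e^{-it\tk_2\cdot\bx}\,dt=\TT_\bKs^{*}\,f(D_s)\,M\,\TT_\bKs,
\]
applied with $f(t)=\chi(\delta^{-2/3}t)^{2}\,t$ (matrix $\vF^\bKs|\tk_2|\sigma_1$) and $f(t)=\chi(\delta^{-2/3}t)^{2}$ (matrix $-\delta z\,\Id$). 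Using $\chi^{2}=\chi$ and the fact that the scalar multiplier $\chi(\delta^{-2/3}D_s)$ commutes with $D_s$, this yields the claimed expansion.

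The main obstacle is the $\OO_{\Ll^2_0}(\delta^{4/3})$ operator-norm control of the remainders. I would estimate the $\OO(t^{2})$ Taylor remainder fiberwise in $L^2_{\bKs+t\tk_2}$ and use the cylinder norm bound \eqref{eq:2p} together with $\supp\chi(\delta^{-2/3}\cdot)\subset[-\delta^{2/3},\delta^{2/3}]$ to obtain a fiberwise sup of $\OO(t^{2})=\OO(\delta^{4/3})$; the cross term $\OO(\delta t)$ gives $\OO(\delta^{5/3})=o(\delta^{4/3})$. The required quantitative smoothness of $\bk\mapsto H^0(\bk)$ and $\bk\mapsto N(\bk)$ with controlled Taylor remainders is standard from analytic perturbation theory, valid because (H1)--(H2) make $E_D$ an isolated double eigenvalue of $H^0_\bKs$. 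Finally, the first equality in the proposition (which I interpret as carrying an implicit sum over $\bKs\in\{\bK,\bKp\}$, since the left-hand side $H^0_\near$ involves both sublattices) follows from block-diagonality: for $\delta$ small the quasi-momentum supports of $\Pi^\bK_\near$ and $\Pi^{\bKp}_\near$ are disjoint and $H^0$ preserves each fiber, so the cross terms vanish exactly.
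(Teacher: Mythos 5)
Your proposal is correct and follows essentially the same route as the paper's proof: fiberwise conjugation by $e^{it\tk_2\cdot\bx}$, Taylor expansion of the smooth projector family $N(\bk)$ with the linear cross terms killed by $(H^0_\bKs-E_D)\Pi^0_\bKs=0$, identification of the leading kernel via Proposition \ref{ips} and the Fourier computation of Proposition \ref{lem:3d}, and the $\OO(\delta^{4/3})$ remainder from the support of $\chi(\delta^{-2/3}\cdot)$ together with \eqref{eq:2p}. (Your retention of the $|\tk_2|$ factor from \eqref{lip-ip} is if anything more faithful to Proposition \ref{ips} than the displayed statement, which absorbs that constant.)
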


\begin{proof}[Proof of Proposition \ref{lem:3b}] 1. Since $\Pi_\near^\bKs$ is given by \eqref{eq:2c}, 
\begin{align}\label{eq:2r}
 & \Pi_\near^\bKs \left(H^0-E_D-\delta z\right) \Pi_\near^\bKs\\
\qquad &= \dfrac{1}{2\pi}\int_{[-\pi,\pi]}^\oplus \chi \left(\delta^{-2/3} \tau\right) \cdot \Pi^0_{\bKs+ \tau \tk_2} \left( H^0_{\bKs+ \tau \tk_2}-E_D - \delta z\right) \Pi^0_{\bKs+ \tau \tk_2} \cdot \chi \left(\delta^{-2/3} \tau\right) d\tau.
\nn\end{align}
Below we estimate the integrand of the direct integral in \eqref{eq:2r}.

2. Recall from \eqref{eq:2i} that $\Pi_\bk^0=e^{i(\bk-\bKs)\cdot\bx} N(\bk) e^{-i(\bk-\bKs)\cdot\bx}$, where $N(\bk):L^2_{\bKs}\to L^2_{\bKs}$ varies smoothly for $\bk$ near $\bKs$ .
 For such $\bk$, 
\begin{equations}
\Pi_\bk^0 \left( H_\bk^0-E_D \right) \Pi_\bk^0 = e^{i(\bk-\bKs)\cdot\bx} M(\bk) e^{-i(\bk-\bKs)\cdot\bx}, 
\\
\text{where}\ \  M(\bk)
 % \de e^{-i(\bk-\bKs) \bx} \Pi_\bk \left( H_\bk^0-E_D \right) \Pi_\bk e^{i(\bk-\bKs) \bx} 
 %\\ 
= N(\bk) \left( -(\nabla+i(\bk-\bKs))^2+V-E_D \right) N(\bk).
\end{equations} 
 is a smoothly varying family of operators acting on $L^2_\bKs$.

We next use the expansion of $N(\bk)$ in \eqref{eq:2f} to expand $M(\bk)$  about $\bk=\bKs$ up to quadratic corrections in $\bk - \bKs$. The leading order term vanishes because $M(\bKs) = 0$. We obtain
\begin{equations}
M(\bk) = \Pi_\bKs \cdot \big( -2i (\bk-\bKs) \nabla \big) \cdot \Pi_\bKs 
\\
+ \Pi_\bKs  \left( H^0_\bKs-E_D \right) \cdot \OO_{L^2_\bKs}(\bk-\bKs) + \OO_{L^2_\bKs}(\bk-\bKs) \cdot \left( H^0_\bKs-E_D \right) \Pi_\bKs   
 + \OO_{L^2_\bKs}(\bk-\bKs)^2.
\end{equations}
The second and third terms vanish because $\Pi_\bKs  \left( H^0_\bKs-E_D \right) =  \left( H^0_\bKs-E_D \right) \Pi_\bKs = 0$. We conclude that
\begin{equation}\label{eq:2d}
M(\bk) = \Pi_\bKs \cdot \big( -2i(\bk-\bKs) \cdot \nabla \big) \cdot \Pi_\bKs + \OO_{L^2_\bKs}\left(|\bk-\bKs|^2\right).
\end{equation}

3. We expand the integrand of \eqref{eq:2r} using \eqref{eq:2d} and \eqref{eq:2f}:
\begin{align}
&\Pi^0_{\bKs+ \tau \tk_2} \left( H^0_{\bKs+ \tau \tk_2}-E_D - \delta z\right) \Pi^0_{\bKs+ \tau \tk_2} 
 \nn \\
&\quad = e^{i \tau\ktilde_2\cdot\bx} \big( M(\bKs+ \tau \tk_2) -\delta z  \cdot N(\bKs+\tau\tk_2) \big) e^{-i \tau\ktilde_2\cdot\bx}
\nn \\
&\quad = e^{i \tau\ktilde_2\cdot\bx} \Pi^0_\bKs \big( -2i \tau \ktilde_2\cdot\nabla - \delta z \big)  \Pi^0_\bKs e^{-i \tau\ktilde_2\cdot\bx} \ + \ \OO_{L^2_{\bKs+\tau\ktilde_2}}(\tau^2+ z\delta \tau)
\label{eq:2dd}
\end{align}
Substitution of this expansion into the direct integral \eqref{eq:2r}, we obtain
\begin{equations}
H^0_\near-E_D -\delta z= \Pi_\near \left( H^0-E_D-\delta z \right) \Pi_\near 
\\
= \dfrac{1}{2\pi} \int_{[-\pi,\pi]}^\oplus \chi \left(\delta^{-2/3} \tau\right) \cdot e^{i \tau\ktilde_2\cdot\bx} \Pi^0_\bKs \left( -2i \tau \ktilde_2\cdot\nabla - \delta z \right)  \Pi^0_\bKs e^{-i \tau\ktilde_2\cdot\bx} \cdot \chi \left(\delta^{-2/3} \tau\right) d\tau
\end{equations}
\begin{equations}\label{eq:2e}
+ \int_{[-\pi,\pi]}^\oplus \chi \left(\delta^{-2/3} \tau\right) \cdot
\OO_{L^2_{\bKs+\tau\ktilde_2}}(\tau^2+ z\delta \tau) \chi \left(\delta^{-2/3} \tau\right) d\tau
\end{equations}
Since $\chi\left(\delta^{-2/3} \tau\right) = 0$ for $|\tau| \geq \delta^{2/3}$, the remainder term in \eqref{eq:2e}  is a direct integral of operators that are on  $\OO_{L^2_{\bKs+\tau \tk_2}}(\delta^{4/3})$. 
Therefore it is $\OO_{\Ll^2_0}(\delta^{4/3})$.

Next we focus on the leading order term in the expansion \eqref{eq:2e}. The operator $\Pi_\bKs  \cdot \big(-2 i\tk_2\cdot \nabla \big) \cdot \Pi_\bKs$ acts on $\C\Phi_1^\bKs\oplus\C\Phi_2^\bKs$. By
 Proposition \ref{ips}, with respect to this basis, $\Pi_\bKs  \cdot (-2 i\tk_2 \nabla ) \cdot \Pi_\bKs$ has the matrix representation:
\begin{equation}
\blr{\Phi^\bKs, -2 i \tk_2 \cdot  \nabla (\Phi^\bKs)^\top} = \vF^\bKs \sigma_1\ ,
\end{equation}
where $\vF^\bKs=\vF$ for $\bKs=\bK$ and $\vF^\bKs=-\vF$ for $\bKs=\bKp$.
 Thus
 $\Pi_\bKs  \cdot \big( -2 i\tk_2 \nabla \big) \cdot \Pi_\bKs$, as an operator acting on $L^2_\bKs$, has the 
 kernel
\begin{equation}
(\bx,\by) \in \R^2/\Lambda \times \R^2/\Lambda \mapsto \Phi^\bKs(\bx)^\top \cdot
 \vF^\bKs \sigma_1 \cdot \ove{\Phi^\bKs(\by)}.
\end{equation}
We deduce that the leading order term in \eqref{eq:2e} has kernel
\begin{equation}\label{eq:3u}
\dfrac{\Phi^\bKs(\bx)^\top}{2\pi}\int_{[-\pi,\pi]}^\oplus \chi^2 \left(\delta^{-2/3} \tau\right)  \big(\vF^\bKs \sigma_1 \tau-\delta z\big)  e^{i \tau \tk_2 (\bx- \by)} d\tau  \cdot \ove{\Phi^\bKs(\by)}
\end{equation}
where $(\bx,\by) \in \left(\R^2/\Z \tv_1\right)^2$.
As in the proof of Proposition \ref{lem:3d}, we see that \eqref{eq:3u} is the kernel of the operator
\begin{equations}
 \TT_\bKs^* \chi(\delta^{-2/3} D_s) \cdot    \big( \vF^\bKs \sigma_1 D_s - \delta z\big)  \cdot \chi(\delta^{-2/3} D_s) \TT_\bKs.
\end{equations}
This completes the proof of Proposition \ref{lem:3b}.
\end{proof}

Because $H^0-E_D - \delta z$ commutes with $\Pi_\near^\bKs$, and $\Pi_\near^\bK \Pi_\near^\bKp = \Pi_\near^\bKp \Pi_\near^\bK = 0$ as long as $\delta$ is sufficiently small, we deduce that
\begin{equation}
\Pi_\near^\bK \left(H^0-E_D - \delta z\right)\Pi_\near^\bKp = \Pi_\near^\bKp \left(H^0-E_D - \delta z\right)\Pi_\near^\bK = 0.
\end{equation}
We sum the expansions over $\bKs \in \{\bK,\bKp\}$, recalling that $\Pi_\near = \Pi_\near^\bK + \Pi_\near^\bKp$ and $\TT = \TT_\bK\oplus \TT_\bKp$.  
Together with Proposition \ref{lem:3b}, we conclude that as $\delta \rightarrow 0$, uniformly for $z$ in compact subsets of $\C$:  $H^0_\near-E_D - \delta  z$ equals
\begin{align}\label{eq:3y}
&H^0_\near-E_D - \delta  z\\ 
&\quad =\ \TT^* \chi(\delta^{-2/3} D_s) \cdot    \matrice{ \vF^\bKs \sigma_1 D_s - \delta z & 0 \\ 0 & \vF^\bKs \sigma_1 D_s - \delta z}  \cdot \chi(\delta^{-2/3} D_s) \TT + \OO_{\Ll^2_0}(\delta^{4/3}).
\nn
\end{align}

We note from \eqref{dilate} the following identities:
\begin{equations}\label{eq:9g}
D_s = \delta \cdot U_\delta^*\ D_s\ U_\delta; \ \ \ \ \chi(\delta^{-2/3} D_s)\ U_\delta^* = U_\delta^* \ \chi(\delta^{1/3} D_s) ; \ \ \\ U_\delta\ \chi(\delta^{-2/3} D_s)  =  \chi(\delta^{1/3} D_s)\ U_\delta.
\end{equations}
Therefore, we obtain from \eqref{eq:3y} that $H^0_\near-E_D - \delta z$
 \begin{equations}\label{eq:3yy}
 =
\delta \cdot \TT^* \chi(\delta^{-2/3} D_s)\ U_\delta^* \cdot  
  \matrice{ \vF^\bKs \sigma_1 D_s -  z & 0 \\ 0 & \vF^\bKs \sigma_1 D_s -  z}  \cdot U_\delta^*\ \chi(\delta^{-2/3} D_s)\ \TT 
  \ +\  \OO_{\Ll^2_0}(\delta^{4/3})
\\
=
\delta \cdot (U_\delta \TT)^*\ \chi(\delta^{1/3} D_s) \cdot    \matrice{ \vF^\bKs \sigma_1 D_s -  z & 0 \\ 0 & \vF^\bKs \sigma_1 D_s -  z}  \cdot \chi(\delta^{1/3} D_s)\ (U_\delta \TT)\ +\ \OO_{\Ll^2_0}(\delta^{4/3}).
\end{equations}
Thus, we have  proved \eqref{eq:4y} for $H^0_\near=H^\delta_\near-\delta\Pi_\near\ \cdot \kappa_\delta W \cdot\ \Pi_\near $.

\subsubsection{Contribution of the domain wall; proof of  Proposition \ref{HDi}}

We must expand the expression $\delta\Pi_\near\ \cdot \kappa_\delta W \cdot\ \Pi_\near $,
where $\kappa_\delta(\bx)=\kappa(\delta\tk_2\cdot\bx)$, for $\delta$ small. 

\begin{proposition}\label{lem:3c}  As $\delta \rightarrow 0$, uniformly for $z$ in compact subsets of $\C$,
\begin{equation}\label{eq:2t}
\Pi_\near^\bKs \cdot \kappa_\delta W \cdot \Pi_\near^\bKs = \TT_\bKs^* \cdot  \chi(\delta^{-2/3} D_s) \cdot \var^\bKs \sigma_3 \kappa(\delta \cdot) \cdot  \chi(\delta^{-2/3} D_s) \cdot \TT_\bKs
 + \OO_{\Ll^2_0}\left(\delta^{2/3}\right).
\end{equation}
\end{proposition}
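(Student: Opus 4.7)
My plan is to reduce, step by step, the sandwich $\Pi_\near^\bKs \kappa_\delta W \Pi_\near^\bKs$ to the claimed expression using three ingredients: the leading-order expansion of $\Pi_\near^\bKs$, a direct Fubini-style computation of the ``compressed'' operator $\TT_\bKs \kappa_\delta W \TT_\bKs^*$, and a frequency-disjointness argument for the mean-zero part of that compression.

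First, I replace each factor of $\Pi_\near^\bKs$ by $\Pi_\near^{\bKs,(0)}=\TT_\bKs^*\chi(\delta^{-2/3}D_s)\TT_\bKs$, using Propositions \ref{lem:2m} and \ref{lem:3d}. Since $\kappa_\delta W$ is a bounded multiplication operator on $\Ll^2_0$ uniformly in $\delta$, the error in replacing both outer projectors is $\OO_{\Ll^2_0}(\delta^{2/3})$. It then suffices to analyze
\[
\TT_\bKs^*\,\chi(\delta^{-2/3}D_s)\,\bigl[\TT_\bKs\,\kappa_\delta W\,\TT_\bKs^*\bigr]\,\chi(\delta^{-2/3}D_s)\,\TT_\bKs.
\]
Parametrizing $\bx=s\tv_1+t\tv_2$ with $s\in[0,1)$, and using $\tk_2\cdot\tv_1=0$ and $\tk_2\cdot\tv_2=1$, a direct calculation shows that $\TT_\bKs\,\kappa_\delta W\,\TT_\bKs^*$ acts on $v\in L^2(\R,\C^2)$ as multiplication by $\kappa(\delta t)\,M(t)$, with
\[
M(t)\;\de\;\int_0^1 \ove{\Phi^\bKs(s\tv_1+t\tv_2)}\,W(s\tv_1+t\tv_2)\,\Phi^{\bKs\top}(s\tv_1+t\tv_2)\,ds.
\]
This $\C^{2\times 2}$-valued function is smooth and $1$-periodic in $t$, and by Fubini its mean over one period is the matrix $[\blr{\Phi_i^\bKs,W\Phi_j^\bKs}]_{ij}$, which equals $\var^\bKs\sigma_3$ by Proposition \ref{thetaKs}.

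Writing $M(t)=\var^\bKs\sigma_3+N(t)$ with $N$ smooth, $1$-periodic, and mean-zero, the $\var^\bKs\sigma_3$-contribution produces exactly the claimed leading term $\TT_\bKs^*\chi(\delta^{-2/3}D_s)\,\var^\bKs\sigma_3\,\kappa(\delta\cdot)\,\chi(\delta^{-2/3}D_s)\TT_\bKs$. The heart of the proof is then to show that the $N$-contribution is $\OO_{\Ll^2_0}(\delta^{2/3})$. Expanding $N$ in its Fourier series $N(t)=\sum_{n\neq 0}c_n e^{2\pi i n t}$ (rapidly decaying $c_n$, by smoothness of $N$), and observing that $\kappa(\delta\cdot)$ and $e^{2\pi i n\cdot}$ commute as multiplications, I rearrange each term as
\[
\chi(\delta^{-2/3}D_s)\,e^{2\pi i n\cdot}\,\chi(\delta^{-2/3}D_s)\,\kappa(\delta\cdot)+\chi(\delta^{-2/3}D_s)\,e^{2\pi i n\cdot}\,[\kappa(\delta\cdot),\chi(\delta^{-2/3}D_s)].
\]
The first summand vanishes once $\delta^{2/3}<\pi$, because $e^{2\pi i n\cdot}$ translates Fourier support by $2\pi n\neq 0$, carrying it out of the window $|\xi|<\delta^{2/3}$ on which $\chi(\delta^{-2/3}D_s)$ is supported.

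The main obstacle is therefore to estimate the commutator $[\kappa(\delta\cdot),\chi(\delta^{-2/3}D_s)]$ on $L^2(\R)$ by $\OO(\delta^{2/3})$, despite the non-smoothness of the sharp cutoff $\chi$. Heuristically this succeeds because $\kappa(\delta\cdot)$ varies on the scale $\delta^{-1}$, much finer than the cutoff scale $\delta^{-2/3}$: the commutator's Fourier-kernel $\frac{1}{2\pi\delta}\hat\kappa((\xi-\eta)/\delta)\cdot[\chi(\delta^{-2/3}\xi)-\chi(\delta^{-2/3}\eta)]$ is supported in a thin neighbourhood of the boundary $|\xi|=\delta^{2/3}$ whose measure supplies the missing powers of $\delta$. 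I plan to make this rigorous by approximating $\chi$ by a smooth cutoff $\chi_\eta$ (for which a standard symbol-calculus commutator bound $\OO(\delta^{1/3})$ applies through the scaling $\delta\cdot\delta^{-2/3}=\delta^{1/3}$), and controlling the residual $\chi-\chi_\eta$ by the frequency-disjointness already used for the main term. Summing the resulting estimates over $n\neq 0$ against the rapidly decaying coefficients $c_n$ gives the desired $\OO_{\Ll^2_0}(\delta^{2/3})$ bound and completes the proof.
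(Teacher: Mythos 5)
Your setup matches the paper's: replacing $\Pi_\near^\bKs$ by $\Pi_\near^{\bKs,(0)}=\TT_\bKs^*\chi(\delta^{-2/3}D_s)\TT_\bKs$, identifying $\TT_\bKs\kappa_\delta W\TT_\bKs^*$ as multiplication by $\kappa(\delta t)$ times a smooth $1$-periodic matrix whose mean is $\var^\bKs\sigma_3$, expanding the mean-zero part in Fourier modes $e^{2\pi i n t}$, and killing the term $\chi(\delta^{-2/3}D_s)e^{2\pi in\cdot}\chi(\delta^{-2/3}D_s)\kappa(\delta\cdot)$ by frequency disjointness -- all of this is correct and is essentially what the paper does. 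The gap is in your treatment of the remaining term. You reduce it to the claim that the bare commutator $[\kappa(\delta\cdot),\chi(\delta^{-2/3}D_s)]$ is $\OO_{L^2}(\delta^{2/3})$, and this claim is false: a commutator of a bounded function with a sharp Fourier cutoff is a (modulated) Hilbert-transform commutator, which is scale invariant, so slow variation of $\kappa(\delta\cdot)$ buys nothing. Concretely, taking $u_\delta(x)=\delta^{1/2}e^{i\delta^{2/3}x}g(\delta x)$ with $\widehat g$ supported in $[1,2]$, one has $\chi(\delta^{-2/3}D_s)u_\delta=0$ while $\|\chi(\delta^{-2/3}D_s)\kappa_\delta u_\delta\|$ converges to the (generically nonzero) negative-frequency mass of $\kappa g$; hence $\|[\kappa_\delta,\chi(\delta^{-2/3}D_s)]\|_{L^2\to L^2}\geq c>0$ uniformly in $\delta$. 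Your repair plan does not close this: the smooth-cutoff commutator bound is only $\OO(\delta^{1/3})$ (short of the stated $\delta^{2/3}$ in any case), and the residual $\chi-\chi_\eta$ lives precisely in the boundary layer $|\xi|\approx\delta^{2/3}$ where the $O(1)$ mass of the commutator sits; it is not frequency-disjoint from anything and cannot be discarded the way the $m\neq 0$ main terms were.

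What saves the term is not smallness of the commutator but the outer factor $\chi(\delta^{-2/3}D_s)e^{2\pi in\cdot}$, which restricts output frequencies to a window centered at $-2\pi n$, at distance at least $2\pi-2\delta^{2/3}$ from the input window. After this restriction your second summand equals $\chi(\delta^{-2/3}D_s)e^{2\pi in\cdot}\chi\big(\delta^{-2/3}(D_s+2\pi n)\big)\kappa_\delta\,\chi(\delta^{-2/3}D_s)$, and the paper's device applies: since the shifted cutoff is supported away from $\xi=0$, write it as $D_s\,\psi_{\delta,n}(D_s)$ with $\|\psi_{\delta,n}(D_s)\|\le 1$, move $D_s$ across $\kappa_\delta$ using $[D_s,\kappa(\delta\cdot)]=\OO_{L^2}(\delta)$ (only $\kappa'\in L^\infty$ is available -- note $\widehat\kappa$ has no useful decay, so kernel or symbol estimates relying on integrability of $\widehat{\kappa_\delta}$ are not legitimate), and use $\|\chi(\delta^{-2/3}D_s)D_s\|\le\delta^{2/3}$. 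This gives the bound $\OO(\delta^{2/3})$ uniformly in $n$, summable against the rapidly decaying Fourier coefficients, which is exactly how the paper concludes. With that substitution for your commutator step, the rest of your argument goes through.
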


\begin{proof} 1. First,  Proposition \ref{lem:2m} and  $\kappa \in L^\infty$ imply that
\begin{equations}\label{eq:5a}
\Pi_\near^\bKs \cdot \kappa_\delta W \cdot \Pi_\near^\bKs = \Pi_\near^{\bKs,(0)} \cdot \kappa_\delta W \cdot \Pi_\near^{\bKs,(0)}\ +\ \OO_{\Ll^2_0}\left(\delta^{2/3}\right).
\end{equations}
Furthermore, by Proposition \ref{lem:3d}, the leading term  in \eqref{eq:5b} is 
\begin{equation}
\Pi_\near^{\bKs,(0)} \cdot \kappa_\delta W \cdot \Pi_\near^{\bKs,(0)}  = \TT_\bKs^* \chi(\delta^{-2/3}D_s) \cdot \TT_{\bKs}  \kappa_\delta W  \TT_\bKs^* \cdot \chi(\delta^{-2/3}D_s) \TT_{\bKs}.
\label{5ab}\end{equation}
 Using the definitions of $\TT_{\bKs}$ and its adjoint in \eqref{TTKs} , we find that the inner expression in \eqref{5ab}, $\TT_{\bKs}  \kappa_\delta W  \TT_\bKs^*$, is a multiplication operator:
\begin{equations}
 \left(\TT_{\bKs}\ \kappa_\delta W\ \TT_\bKs^*\right) u(t) = \kappa(\delta t)\ F(t)\ u(t), \ \ \ \ \text{where}
\\
F(s) \de \int_{\R/\Z}\ \ove{\Phi^\bKs(t\tv_1 + s\tv_2)} \ W(t\tv_1 + s\tv_2) \Phi^\bKs(t\tv_1 + s\tv_2)^\top\ dt.
\end{equations}
We deduce the following expression for the dominant term in \eqref{eq:5a}:
\begin{equation}
\Pi_\near^{\bKs,(0)} \cdot \kappa_\delta W\ \Pi_\near^{\bKs,(0)}  = \TT_\bKs^* \chi(\delta^{-2/3}D_s) \cdot \kappa_\delta F \cdot \chi(\delta^{-2/3}D_s) \TT_{\bKs}.
\end{equation}

\nit 2. $F$ is smooth and one-periodic with an absolutely convergent  Fourier series:
\begin{equation}\label{eq:5d}
F(s) = \sum_{m \in 2\pi \Z}   \hF_m\cdot e^{im s}, \ \ \ \ \hF_m = \int_0^1 e^{-im sÕ}\ F(s') ds'.
\end{equation}
We deduce that
\begin{equations}\label{eq:5b}
\Pi_\near^{\bKs,(0)} \cdot \kappa_\delta W \cdot \Pi_\near^{\bKs,(0)} = \sum_{m \in 2\pi \Z}  \TT_\bKs^*  \hF_m \cdot \chi(\delta^{-2/3}D_s)   \cdot \kappa(\delta \cdot) e^{im \cdot} \cdot \chi(\delta^{-2/3}D_s) \TT_{\bKs}
\\
= \sum_{m \in 2\pi \Z}    \TT_\bKs^*  \hF_m \cdot \chi(\delta^{-2/3}D_s)    \kappa(\delta \cdot)   \chi\big(\delta^{-2/3}(D_s-m)\big) \cdot e^{im \cdot} \TT_{\bKs}.
\end{equations}
This sum may be expressed as the  $m=0$ term plus a sum over $m\ne0$ terms.
We next bound this sum
from above and show that it is negligible for $\delta$ small.

3. Fix $m \neq 0$ and consider the operator $\chi\big(\delta^{-2/3}D_s\big) \cdot \kappa(\delta \cdot)  \cdot \chi\big(\delta^{-2/3}(D_s-m)\big)$, appearing in \eqref{eq:5b}. The function $\xi \mapsto \chi\big(\delta^{-2/3}(\xi-m)\big)$ is the indicator function of the interval $[m-\delta^{2/3},m+\delta^{2/3}]$. For all $\delta\le\delta_0$ ($\delta_0$ sufficiently small), none of these sets intersects the interval $[-1,1]$. Thus we may rewrite
 $\chi\big(\delta^{-2/3}(\xi-m)\big)$ as:
\begin{equation}
\chi\big(\delta^{-2/3}(\xi-m)\big) = \xi \cdot \dfrac{\chi\big(\delta^{-2/3}(\xi-m)\big)}{\xi} \de \xi \cdot \psi_{\delta,m}(\xi),\quad\textrm{where}\quad  |\psi_{\delta,m}(\xi)| \leq 1.
\end{equation}
Hence, $\chi\big(\delta^{-2/3}(D_s-m)\big)=D_s\ \psi_{\delta,m}(D_s)$. Substitution of this expression and commuting $D_s$ through $\kappa$ gives:
\begin{equations}
\chi\big(\delta^{-2/3}D_s\big) \cdot \kappa(\delta \cdot)  \cdot \chi\big(\delta^{-2/3}(D_s-m)\big)
=
\chi\big(\delta^{-2/3}D_s\big) \cdot \kappa(\delta \cdot)  D_s \cdot \psi_{\delta,m}(D_s)
\\
=
\chi\big(\delta^{-2/3}D_s\big)D_s\cdot \kappa(\delta \cdot)  \cdot  \psi_{\delta,m}(D_s) - \chi\big(\delta^{-2/3}D_s\big) \cdot [D_s,\kappa(\delta,\cdot)] \cdot  \psi_{\delta,m}(D_s).
\end{equations}
The first term involves the multiplier $\chi(\delta^{-2/3} D_s)D_s$, which is $\OO_{L^2}(\delta^{2/3})$ and $\psi_{\delta,m}(D_s)$, which is bounded on $L^2$ with norm $\le1$. Therefore this term is
 $\OO_{L^2}(\delta^{2/3})$, uniformly in $m \neq 0$. 
 The second term involves the commutator $[D_s,\kappa(\delta,\cdot)]$, which is $\OO_{L^2}(\delta)$ because $\kappa' \in L^\infty$.  Hence the second term satisfies the operator bound $\OO_{L^2}(\delta)$, uniformly in $m \neq 0$. We conclude 
 \begin{equation}\label{eq:5c}
\chi\big(\delta^{-2/3}D_s\big) \cdot \kappa(\delta \cdot)  \cdot \chi\big(\delta^{-2/3}(D_s-m)\big) = \OO_{L^2}(\delta^{2/3}).
\end{equation}
Summing over $m \neq 0$ and using that $\sum_m |F_m|$ is finite, because $F$ is smooth, 
\begin{equations}
\Bigg\| \sum_{m \in 2\pi \Z/\setminus\{0\}}\     \TT_\bKs^*  \hF_m \cdot \chi(\delta^{-2/3}D_s)    \kappa(\delta \cdot)   \chi\big(\delta^{-2/3}(D_s-m)\big) \cdot e^{im \cdot} \TT_{\bKs} \Bigg\|_{\Ll^2_0} 
\\
\leq C \delta^{2/3} \sum_{m \in 2\pi \Z/\setminus\{0\}}  \big| \hF_m \big| = O(\delta^{2/3}).
\end{equations}

4. It follows that the dominant contribution is from the $m=0$ term:
\begin{equation}
\Pi_\near^\bKs \cdot \kappa_\delta W \cdot \Pi_\near^\bKs =   \TT_\bKs^* \cdot \chi(\delta^{-2/3}D_s)  \cdot \hF_0\  \kappa(\delta \cdot) \cdot  \chi\big(\delta^{-2/3}D_s\big) \cdot  \TT_{\bKs} + \OO_{\Ll^2_0}(\delta^{2/3}).
\end{equation}
Finally, we observe from  \eqref{eq:5d} that
\begin{equation}
\hF_0 = \int_{\R^2/\Lambda} \ove{\Phi^\bKs(\bx)} \cdot W(\bx) \Phi^\bKs(\bx)^\top d\bx = \var^\bKs \sigma_3,
\end{equation}
where we substituted $\bx = s\tv_1 + t\tv_2$ and used the definition of $\var^\bKs$. The proof of Proposition \ref{lem:3c} is complete. 
\end{proof}

Proposition \ref{lem:3c} extracts the dominant term arising from  $\bK$-to-$\bK$ and $\bKp$-to-$\bKp$ quasimomentum coupling due to the domain wall perturbation $\kappa_\delta W$. Likewise, we must study $\bK$-to-$\bKp$ coupling via $\kappa_\delta W$. The next proposition shows that this interaction is negligible. This has the important  consequence that the effective Dirac operator is block-diagonal.

\begin{proposition}\label{lem:3e} As $\delta \rightarrow 0$, uniformly for $z$ in compact subsets of $\C$,
\begin{equation}\label{eq:3v}
\Pi_\near^\bK\cdot \kappa_\delta W \cdot  \Pi_\near^\bKp= \OO_{\Ll^2_0}\left(\delta^{2/3}\right), \ \ \ \ \Pi_\near^\bKp \cdot  \kappa_\delta W \cdot  \Pi_\near^\bK= \OO_{\Ll^2_0}\left(\delta^{2/3}\right).
\end{equation}
\end{proposition}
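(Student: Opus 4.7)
The plan is to mirror the proof of Proposition \ref{lem:3c}, now exploiting that the domain wall perturbation $\kappa_\delta W$ couples components near the $\bK+\Lambda^*$ and $\bKp+\Lambda^*$ sublattices only \emph{off-resonantly}: in contrast to the diagonal case of Proposition \ref{lem:3c}, every Fourier mode of the resulting inner multiplication operator will behave like an ``$m\ne 0$'' correction. By Proposition \ref{lem:2m} and boundedness of $\kappa_\delta W$, it suffices to estimate $\Pi_\near^{\bK,(0)} \cdot \kappa_\delta W \cdot \Pi_\near^{\bKp,(0)}$ in $\Ll^2_0$-operator norm. By Proposition \ref{lem:3d}, this equals $\TT_\bK^* \chi(\delta^{-2/3} D_s) \big( \TT_\bK\, \kappa_\delta W\, \TT_\bKp^* \big) \chi(\delta^{-2/3} D_s) \TT_\bKp$, and a direct computation (using $\tk_2\cdot\tv_1=0$) identifies $\TT_\bK\, \kappa_\delta W\, \TT_\bKp^*$ as multiplication by $\kappa(\delta t)G(t)$, where
\begin{equation*}
G(t) \de \int_{\R/\Z} \ove{\Phi^\bK(s\tv_1+t\tv_2)}\, W(s\tv_1+t\tv_2)\, \Phi^\bKp(s\tv_1+t\tv_2)^\top\, ds.
\end{equation*}
The $s$-integral is unambiguous because $\ove{\Phi^\bK}W\Phi^\bKp$ has net quasi-momentum $-\bK+\bKp=-2\bK$ and, in the armchair case, $2\bK\cdot\tv_1\in 2\pi\Z$.

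Under $t\mapsto t+1$ the same quasi-periodicity gives $G(t+1) = e^{-2i\bK\cdot\tv_2}G(t)$, so $G$ has an absolutely convergent expansion $G(t) = \sum_{m\in 2\pi\Z} \widehat G_m\, e^{i\omega_m t}$ with $\omega_m \de m - 2\bK\cdot\tv_2$ and rapidly decaying coefficients $\widehat G_m$. The crucial arithmetic fact is that in the armchair case $|\omega_m| \ge 2\pi/3$ for every $m\in 2\pi\Z$: writing $2\bK\cdot\tv_2 = \tfrac{4\pi}{3}(a_2-b_2)$ and combining $a_1b_2-a_2b_1=1$ with the armchair congruence $a_1\equiv b_1 \mod 3$ (together with $\gcd(a_1,b_1)=1$, which forces $a_1\not\equiv 0 \mod 3$), one deduces $a_2-b_2\not\equiv 0 \mod 3$. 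Hence $2\bK\cdot\tv_2$ lies at distance exactly $2\pi/3$ from $2\pi\Z$.

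With the frequency separation $|\omega_m|\ge 2\pi/3$ in hand, each Fourier mode is handled exactly as in the $m\ne 0$ part of the proof of Proposition \ref{lem:3c}: one writes $\chi(\delta^{-2/3}(D_s-\omega_m)) = (D_s-\omega_m)\psi_{m,\delta}(D_s)$ with $\|\psi_{m,\delta}\|_\infty\le 2/|\omega_m|$ for $\delta$ small, commutes $D_s$ past $\kappa(\delta\cdot)$ (with $[D_s,\kappa(\delta\cdot)] = \OO_{L^2}(\delta)$), and absorbs the remaining $D_s$ via $\chi(\delta^{-2/3}D_s)D_s = \OO_{L^2}(\delta^{2/3})$; each mode thus contributes $\OO_{L^2}(\delta^{2/3}/|\omega_m|)$. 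Absolute convergence of $\sum_m |\widehat G_m|/|\omega_m|$, guaranteed by smoothness of $G$, yields the desired bound $\OO_{\Ll^2_0}(\delta^{2/3})$. The second identity in \eqref{eq:3v} follows by taking adjoints. The main obstacle is the arithmetic verification that $|\omega_m|$ is bounded below uniformly in $m$; this encodes quantitatively the non-resonance $\bK\not\equiv\bKp \mod\Lambda^*$ along the armchair direction, and is what enables the conclusion that the effective Dirac operator of Theorem \ref{res-exp} is block-diagonal rather than fully coupled.
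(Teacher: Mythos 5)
Your proof follows essentially the same route as the paper's: reduce via Propositions \ref{lem:2m} and \ref{lem:3d} to the multiplication operator $\kappa(\delta t)G(t)$, expand $G$ in a Fourier series whose frequencies lie in the shifted lattice $(\bK-\bKp)\cdot\tv_2+2\pi\Z$, observe that this set stays a distance $2\pi/3$ from $0$ (the paper deduces this from $\bK\not\equiv\bKp \bmod \Lambda^*$ together with the armchair condition, while you verify the equivalent congruence directly from $a_1b_2-a_2b_1=1$, $a_1\equiv b_1 \bmod 3$ and $\gcd(a_1,b_1)=1$), and then treat every mode exactly like the $m\neq 0$ modes in the proof of Proposition \ref{lem:3c}. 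One small slip: the factorization should be $\chi\big(\delta^{-2/3}(D_s-\omega_m)\big)=D_s\,\psi_{m,\delta}(D_s)$ with $\psi_{m,\delta}(\xi)=\chi\big(\delta^{-2/3}(\xi-\omega_m)\big)/\xi$, bounded by $2/|\omega_m|$ because $|\xi|\ge |\omega_m|-\delta^{2/3}$ on the support, rather than $(D_s-\omega_m)\psi_{m,\delta}(D_s)$, which would make $\psi_{m,\delta}$ unbounded near $\xi=\omega_m$; your stated sup-bound and your subsequent use of $\chi(\delta^{-2/3}D_s)D_s=\OO_{L^2}(\delta^{2/3})$ make clear this is what you intended, so the argument is sound.
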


\begin{proof} The proof follows a similar strategy to that of Proposition \ref{lem:3c}. We estimate
 $\Pi_\near^\bK \cdot \kappa_\delta W \cdot  \Pi_\near^\bKp$; the adjoint bound gives the estimate for $\Pi_\near^\bKp\cdot  \kappa_\delta W\cdot  \Pi_\near^\bK$.

1. In analogy with Step 1 in the proof of Proposition \ref{lem:3c},
\begin{equations}\label{eq:5f}
\Pi_\near^\bK \cdot \kappa_\delta W \cdot \Pi_\near^\bKp =\ \Pi_\near^{\bK,(0)} \cdot \kappa_\delta W \cdot \Pi_\near^{\bKp,(0)}\ +\ \OO_{\Ll^2_0}\left( \delta^{2/3}\right) \\ 
= \TT_\bK^* \chi(\delta^{-2/3}D_s)   \cdot \kappa(\delta \cdot) G \cdot \chi(\delta^{-2/3}D_s) \TT_{\bKp} + \OO_{\Ll^2_0}\left( \delta^{2/3}\right),
\end{equations}
where the function $G$ is given by
\begin{equation}\label{eq:5g}
G(s) = \int_{\R/\Z} \ove{\Phi^\bK(t\tv_1 + s \tv_2)}  W(t\tv_1 + s\tv_2) \Phi^\bKp(t\tv_1 + s\tv_2)^\top dt.
\end{equation}

2. In contrast with $F(t)$ in \eqref{eq:5d},  the function $G(t)$ is not one-periodic; we have instead
$G(t+1) = e^{i(\bK-\bKp)\cdot\tv_2} \cdot G(t)$.
Therefore, we can write
\begin{equation}\label{eq:5i}
G(s) = \sum_{m \in S}  \hG_m\cdot e^{im s}, \ \ 
 S=(\bK-\bKp)\cdot\tv_2 + 2\pi\Z,\ \textrm{and}\ \  \hG_m = \int_0^1 e^{ -im s'}\cdot G(s') ds' . 
\end{equation}
Because $G$ is smooth, this series converges absolutely. The analog of \eqref{eq:5b} is 
\begin{equations}\label{eq:5j}
\Pi_\near^{\bK,(0)} \cdot \kappa_\delta W \cdot \Pi_\near^{\bKp,(0)} 
= \sum_{m \in S}   \TT_\bK^*  \hG_m \cdot \chi(\delta^{-2/3}D_s)    \kappa(\delta \cdot)   \chi\big(\delta^{-2/3}(D_s-m)\big) \cdot e^{im \cdot} \TT_{\bKp}.
\end{equations}

We claim that $\dist(0,S) = 2\pi/3$. Indeed, $(\bK-\bKp)\cdot\tv_2$ is either $-2\pi/3, 0$ or $2\pi/3$ modulo $2\pi$. If it is equal to $0$, then we would have
\begin{equation}
(\bK-\bKp)\cdot\tv_2 = 0 \mod 2\pi, \ \ \ \ (\bK-\bKp)\cdot\tv_1 = 0 \mod 2\pi
\end{equation}
(the second equality holds because $\R \tv_1$ is an armchair edge). It would follow that $\bK-\bKp \in 2\pi \Z\tk_1 \oplus 2\pi \Z \tk_2 = \Lambda^*$, which is not possible because $\bK \neq \bKp \mod \Lambda^*$.

3. Since $0\notin S$, arguments analogous to those in Step 3 of Proposition \ref{lem:3c} yield:
\begin{equations}
\left\|\Pi_\near^{\bK,(0)} \cdot \kappa_\delta W \cdot \Pi_\near^{\bKp,(0)} \right\|_{\Ll^2_0} 
\leq C \delta^{2/3} \sum_{m \in S}  \big| \hG_m \big| = O(\delta^{2/3}).
\end{equations}
We conclude that
$\Pi_\near^{\bK,(0)} \cdot \kappa_\delta W \cdot \Pi_\near^{\bKp,(0)} = \OO_{\Ll^2_0}(\delta^{2/3})$. Together with \eqref{eq:5f}, this completes the proof of Proposition \ref{lem:3e}.
We remark that the arguments in Steps 2. and 3. above, are an alternative to the Poisson summation arguments employed in \cite{FLW-2d_edge:16,FLW-MAMS:17}. \end{proof}

Recall that $\Pi_\near  = \Pi_\near^\bK + \Pi_\near^\bKp$. We sum the expansions of  Propositions \ref{lem:3c} and bounds of Proposition \ref{lem:3e} over $\bKs \in \{\bK,\bKp\}$, and then multiply by $\delta$, to deduce:
\begin{align}\label{eq:4z}
&\Pi_\near \cdot \delta \kappa_\delta W \cdot \Pi_\near\\
&\  =  \TT^* \cdot  \chi(\delta^{-2/3} D_s)   \cdot \delta  \kappa_\delta  \matrice{\var^\bK \sigma_3  & 0 \\ 0 & \var^\bKp \sigma_3} \cdot  \chi(\delta^{-2/3} D_s) \cdot \TT + \OO_{\Ll^2_0}\left(\delta^{4/3}\right)
\nn \\
&\  =\delta \cdot \TT^* \cdot  \chi(\delta^{-2/3} D_s) U_\delta^*  \cdot    \kappa  \matrice{\var^\bK \sigma_3  & 0 \\ 0 & \var^\bKp \sigma_3}\cdot   U_\delta \chi(\delta^{-2/3} D_s) \cdot \TT + \OO_{\Ll^2_0}\left(\delta^{4/3}\right) 
 \nn\\
&\ =\delta \cdot (U_\delta \TT)^* \cdot  \chi(\delta^{1/3} D_s)  \cdot    \kappa  \matrice{\var^\bK \sigma_3  & 0 \\ 0 & \var^\bKp \sigma_3}\cdot    \chi(\delta^{1/3} D_s) \cdot U_\delta \TT + \OO_{\Ll^2_0}\left(\delta^{4/3}\right)  
\nn\end{align}
Here, we used that $\TT = \TT_\bK \oplus \TT_\bKp : \Ll^2_0 \rightarrow L^2(\R,\C^4)$ (see \eqref{TT}) and the relations \eqref{eq:9g}. The expression \eqref{eq:4z} exhibits the domain-wall contribution in the RHS of \eqref{eq:4y}. 

Summing \eqref{eq:3yy} and \eqref{eq:4z} yields \eqref{eq:4y}:
\begin{equations}\label{eq:3l}
H^\delta_\near - E_D- \delta z = H^0_\near - E_D- \delta z + \Pi_\near \cdot \delta \kappa_\delta W \cdot \Pi_\near
\\
=
 \delta\cdot (U_\delta \TT)^* \cdot  \chi(\delta^{1/3} D_s) \cdot \matrice{\DiK -z & 0 \\ 0 & \DiKp - z}  \cdot \chi(\delta^{1/3} D_s) \cdot U_\delta\TT + \OO_{\Ll^2_0}\left(\delta^{4/3}\right).
\end{equations}
This completes the proof of Proposition \ref{HDi}.

\subsection{Resolvent expansion}\label{sec:4.4}

In this section, we use the expansion \eqref{eq:3l} for $H^\delta_\near - E_D- \delta z$ to obtain Proposition \ref{prop:3}, the corresponding expansion for the resolvent. We recall that at the end of  \S\ref{sec:4.2} we showed how to deduce Theorem \ref{res-exp} from  Proposition \ref{prop:3}.

\begin{proof}[Proof of Proposition \ref{prop:3}] By  Proposition \ref{HDi}:
\[ \frac{1}{\delta}\left(H_{\rm near}^\delta-E_D-\delta z\right)\ =\ L_\delta(z)\ +\  \OO_{\Ll^2_0}(\delta^{1/3}),\] where
\begin{equation}\label{Ldel}
L^\delta(z) \de (U_\delta \TT)^* \cdot \chi(\delta^{1/3} D_s) \ \left( \Dirac  -z\right) \chi(\delta^{1/3} D_s)\  (U_\delta \TT)  \ \text{ and } \ \  \Dirac  = \matrice{\DiK  & 0 \\ 0 & \DiKp }.
\end{equation}
Therefore, 
\begin{equation}
  \left(H_{\rm near}^\delta-E_D-\delta z\right)\ \frac{1}{\delta}\ (U_\delta\TT)^*\ \left(\ \Dirac   -z\ \right)^{-1}\ (U_\delta\TT)\ =\ A_\delta(z)\ +\  \OO_{\Ll^2_0}(\delta^{1/3}),\label{op-prod}
  \end{equation}
where $A^\delta(z) $ is the operator product:
 \begin{align}\label{eq:4h}
 &A^\delta(z) \de L^\delta(z) \cdot
 \left( (U_\delta\TT)^* \ \left( \Dirac-z \right)^{-1} \ (U_\delta \TT )\ \right)\\
&\  = (U_\delta \TT)^* \cdot \chi(\delta^{1/3} D_s)  \left( \Dirac-z \right)  \chi(\delta^{1/3} D_s) \cdot (U_\delta   \TT)\  (U_\delta\TT )^* \cdot \left(\ \Dirac   -z\ \right)^{-1}  \cdot (U_\delta \TT). 
\nn \end{align}

In Steps 1-3 below, we prove that $A^\delta(z) = \Pi_\near+\OO_{\Ll^2_0}(\delta^{-1/3})$. In Step 4-5, we conclude that $H^\delta_\near - E_D- \delta z$ is invertible for $\delta$ sufficiently small and $z\notin\sigma_{L^2}(\Dirac)=\sigma_{L^2}(\DiK)\cup\sigma_{L^2}(\DiKp)$, and we finally prove the expansion \eqref{eq:4g}. 

1. We focus on the operator $U_\delta \TT \TT^* U_\delta^*$ appearing in \eqref{eq:4h}. 
Using  \eqref{TT} and \eqref{TT*} we see that $\TT \TT^* - \Id$ is a multiplication operator:
\begin{equation}
\left(\TT \TT^*\ -\ \Id\right) u(s) = f(s) \cdot u(s), \ \ \ \ f(s) \de \int_{\R/\Z}  \ove{\Phi^\bKs(t\tv_1+s\tv_2)} \ \Phi^\bKs(t\tv_1+s\tv_2)^\top dt - 1.
\end{equation} 
Hence, $U_\delta \TT \TT^* U_\delta^* - \Id$ is the operator: multiplication by $f(\delta^{-1} \cdot)$\ .

The function $f$ is one-periodic with null average. Therefore, we may write $D_s F = f$, where  $F$ is one-periodic. Moreover, for all functions $u, g \in H^1$,
\begin{equation}
\blr{(U_\delta f) u, g}_{L^2} = \blr{f, U_\delta^* (g\ove{u})}_{L^2}  = \blr{F, D_s U_\delta^* (g\ove{u})}_{L^2} .
\end{equation}
Using the product rule for derivatives, we deduce that
\begin{equation}
\left| \blr{(U_\delta f) u, g}_{L^2} \right| \leq C\delta |F|_{L^\infty} |g|_{H^1} |u|_{H^1}. 
\end{equation}
By duality,  the multiplication operator by $u\mapsto (U_\delta f)u$ is bounded from $H^1$ to $H^{-1}$, with norm at most $C\delta |F|_{L^\infty}$. Therefore, we have shown
\begin{equation}\label{eq:4b}
U_\delta \TT \TT^* U_\delta^* - \Id = \OO_{H^1 \rightarrow H^{-1}}(\delta).
\end{equation}

2. We shall make use of the following bounds on the operators $\chi(\delta^{1/3} D_s)$ and $\Id-\chi(\delta^{1/3} D_s)$ between Sobolev spaces:
\begin{align}
\big\| \chi(\delta^{1/3}D_s) \big\|_{H^{-2}\to L^2} \le  \sup_{|\xi|\le\delta^{-1/3}} \left(1+|\xi|^2\right)^{-1} = O\left( \delta^{-2/3} \right);   \label{eq:1x-a}\\
\big\| \chi(\delta^{1/3}D_s)-\Id \big\|_{H^1 \to L^2} \le  \sup_{|\xi|\ge\delta^{-1/3}} (1+|\xi|^2)^{-1/2} = O \left( \delta^{1/3}\right). \label{eq:1x-b}
\end{align}

The operator $\chi(\delta^{1/3}D_s)$ commutes with $D_s$. Thus it is bounded from $H^{-1}$ to itself and satisfies the bound   $\big\| \chi(\delta^{1/3}D_s) \big\|_{H^{-1}} = 1$.  By \eqref{eq:1x-a}, it is also bounded from $H^{-2}$ to $L^2$ with   bound $O(\delta^{-2/3})$. Hence, $\chi(\delta^{1/3}D_s) \left( \Dirac-z \right)\chi(\delta^{1/3}D_s) = \OO_{H^{-1} \rightarrow L^2}(\delta^{-2/3})$. 
Combined with \eqref{eq:4b}, we get
\begin{equation}
\chi(\delta^{1/3}D_s) \left( \Dirac-z \right) \chi(\delta^{1/3}D_s) \cdot \left(U_\delta \TT \TT^* U_\delta^* - \Id\right) = \OO_{H^1 \rightarrow L^2}(\delta^{-2/3}\cdot\delta)=\OO_{H^1 \rightarrow L^2}(\delta^{1/3}).
\end{equation}
Since $z$ is at fixed distance from the spectrum of $\Dirac  $ and $\kappa, \kappa' \in L^\infty$, the operator 
$\left( \Dirac-z \right)^{-1}$ maps $L^2$ to $H^1$ with bounded norm. We deduce that
\begin{equation}\label{eq:4a}
\chi(\delta^{1/3}D_s) \left( \Dirac-z \right) \chi(\delta^{1/3}D_s) \cdot \left(U_\delta \TT \TT^* U_\delta^* - \Id\right) \cdot \left( \Dirac-z \right)^{-1} = \OO_{L^2}(\delta^{1/3}).
\end{equation}

3. We write $(U_\delta \TT)\  (U_\delta\TT) ^* = \Id + \left(U_\delta \TT \TT^* U_\delta^*-\Id\right)$ in \eqref{eq:4h} and apply \eqref{eq:4a}:
\begin{equation}
A^\delta(z)  = (U_\delta\TT)^*  \cdot  \chi(\delta^{1/3}D_s) \left( \Dirac-z \right) \chi(\delta^{1/3}D_s) \left( \Dirac-z \right)^{-1} \cdot U_\delta\TT + \OO_{\Ll^2_0}(\delta^{1/3}).
\end{equation}
Furthermore, inserting $\chi = 1+(\chi-1)$ and using the definition of $U_\delta$ we have
\begin{align}
A^\delta(z)  &= (U_\delta\TT)^*  \cdot  \chi(\delta^{1/3}D_s) \left( \Dirac-z \right) \ \left( \Dirac-z \right)^{-1} \cdot U_\delta\TT\nn\\
&\  \  +\  (U_\delta\TT)^*  \cdot  \chi(\delta^{1/3}D_s) \left( \Dirac-z \right) \ (\chi(\delta^{1/3}D_s)-1)\ \left( \Dirac-z \right)^{-1}\ (U_\delta\TT)\ +\ \OO_{\Ll^2_0}(\delta^{1/3}).
\nn\\
&\de\ \TT^*\ \chi(\delta^{-2/3}D_s)\ \TT\ +\ B_\delta(z)\ +\ \OO_{\Ll^2_0}(\delta^{1/3})
\label{B-def}\end{align}

For the first term on the right hand side of \eqref{B-def}  we have by Proposition \ref{PiKs0}:
\[\TT^*\ \chi(\delta^{-2/3}D_s)\ \TT\ =\ 
\Pi_\near^{(0)}\ =\ \Pi_{\rm near}+\OO(\delta^{2/3})\ .
%\TT^*\ U_\delta^*\chi(\delta^{1/3} D_s)\ U_\delta\ \TT
\]
 We bound the operator $B_\delta(z):L^2\to L^2$ 
 as follows. The operator $B_\delta(z)$ is a conjugation by $U_\delta\TT$ of the operator
 \begin{align}
& \chi(\delta^{1/3}D_s) \left( \Dirac-z \right) \ (\chi(\delta^{1/3}D_s)-1)\ \left( \Dirac-z \right)^{-1}\nn
\\ 
&=\ \chi(\delta^{1/3}D_s)\ (\chi(\delta^{1/3}D_s)-1)\  \left( \Dirac-z \right) \ \left( \Dirac-z \right)^{-1}
\nn\\
&\quad  +\ \chi(\delta^{1/3}D_s)\ [\left( \Dirac-z \right), (\chi(\delta^{1/3}D_s)-1)]\ \left( \Dirac-z \right)^{-1}\nn\\
&=\ \OO\left(\ \|\ \chi(\delta^{1/3}D_s)\ \kappa\ \left(1-\chi(\delta^{1/3}D_s)\right)\ \|_{H^1\to L^2} \  \| \left( \Dirac-z \right)^{-1}\|_{L^2\to H^1}\ \right)\ =\ \OO_{L^2}(\delta^{1/3})
\label{Az} \end{align}
To obtain the second to last equality in \eqref{Az}, we have used that $\chi\ (1-\chi)=0$ and that the operator $\Dirac$ is equal to matrix-valued function of $D_s$ plus multiplication operator equal to:   $\kappa$ times a constant matrix. The bound in \eqref{Az} then follows from \eqref{eq:1x-b} and the assumption $\kappa\in L^\infty$. It follows then, that 
$B_\delta(z)=\OO_{\Ll^2_0}(\delta^{1/3})$. Furthermore,
$A^\delta(z) \ =\ \Pi_{\rm near}\ +\ \OO_{\Ll^2_0}(\delta^{1/3})$ and by \eqref{op-prod}
\begin{equation}
\left( H_\near^\delta-E_D-\delta z \right) \cdot \left( \dfrac{1}{\delta} (U_\delta\TT)^* \cdot \left( \Dirac-z \right)^{-1}  \cdot U_\delta \TT\right) = \Pi_\near + O_{\Ll^2_0}(\delta^{1/3}).
\end{equation} %\Dirac  -z\

4. 
Using the hypothesis that $\dist\left(z,\sigma_{L^2}\left(\Dirac \right)\right) \geq \epsi$, a Neumann series argument implies that $H_\near^\delta-E_D-\delta z$ is invertible on $\HH_\near$ with
\begin{equation}\label{eq:4i}
\left(H_\near^\delta-E_D-\delta z \right)^{-1} =
\dfrac{1}{\delta} \cdot \Pi_\near (U_\delta\TT)^* \cdot \left( \Dirac-z \right)^{-1}  \cdot \left(U_\delta \TT\right) \Pi_\near + \OO_{\Ll^2_0}(\delta^{-2/3}).
\end{equation}

5. To conclude the proof of Proposition \ref{prop:3}, we must replace  $\Pi_\near$ in \eqref{eq:4i}
 by the identity operator. Proposition \ref{lem:2m} and the identity $\chi(\delta^{-2/3} D_s)  =  U_\delta^*\chi(\delta^{1/3} D_s)U_\delta$ imply
\begin{equations}
\Pi_\near=\Pi_\near^{(0)}+\OO_{\Ll^2_0}(\delta^{2/3})=
(U_\delta\TT)^*\chi(\delta^{1/3} D_s)U_\delta \TT + \OO_{\Ll^2_0}(\delta^{2/3}).
\end{equations}
Using \eqref{eq:4b} and $\chi(\delta^{1/3}D_s) = \OO_{L^2 \to H^1}(\delta^{-1/3})$, we deduce that
\begin{equations}\label{eq:4d}
 \left(U_\delta \TT\right) \Pi_\near = \left(U_\delta \TT\right)  (U_\delta \TT)^*  \cdot \chi(\delta^{1/3}D_s) U_\delta\TT  + \OO_{\Ll^2_0}(\delta^{2/3}) 
 \\ =
\chi(\delta^{1/3}D_s)  \left(U_\delta \TT\right) + \OO_{\Ll^2_0 \rightarrow \Hh^1_0}(\delta^{2/3}).
\end{equations}
The dual bound to \eqref{eq:4d} is 
\begin{equation}\label{eq:1c}
\Pi_\near (U_\delta \TT)^*   = 
(U_\delta \TT)^* \chi(\delta^{1/3}D_s)  + \OO_{\Hh_0^1 \rightarrow \Ll^2_0}(\delta^{2/3}).
\end{equation}
Acting on  \eqref{eq:4d} with $\Pi_\near (U_\delta \TT)^* \left( \Dirac-z \right)^{-1}$ and using \eqref{eq:1c}, we deduce that
\begin{equations}
\Pi_\near (U_\delta \TT)^* \cdot \left( \Dirac-z \right)^{-1} \cdot \left(U_\delta \TT\right) \Pi_\near \\
 = (U_\delta \TT)^* \cdot \chi(\delta^{1/3}D_s)\  \left( \Dirac-z \right)^{-1}\ \chi(\delta^{1/3}D_s) \cdot  \left(U_\delta \TT\right) \ + \ \OO_{\Ll^2_0}(\delta^{2/3}).
\label{eq:4e}
\end{equations}

We eliminated  $\Pi_\near$ from \eqref{eq:4i} at the expense of introducing $\chi(\delta^{1/3}D_s)$.
We now replace 
the operator  $\chi(\delta^{1/3}D_s)$ by the identity -- with an error $\OO_{\Ll^2_0}(\delta^{2/3})$.
Observe that
\begin{equations}\label{chDich}
\chi(\delta^{1/3}D_s)  \left( \Dirac-z \right)^{-1} \chi(\delta^{1/3}D_s) - \left( \Dirac-z \right)^{-1}\\
  = \left(\chi(\delta^{1/3}D_s) -1 \right)  \left( \Dirac-z \right)^{-1} \chi(\delta^{1/3}D_s) + 
  \left( \Dirac-z \right)^{-1} \left( \chi(\delta^{1/3}D_s) - 1 \right).
\end{equations}
We bound the first term in \eqref{chDich} by studying the operator prefactor $\left(\chi(\delta^{1/3}D_s) -1 \right)  \left( \Dirac-z \right)^{-1}$ in $L^2$; the  factor $\chi(\delta^{1/3}D_s)$ is clearly bounded on $L^2$. Since $\left( \Dirac-z \right)^{-1}$ is bounded from $L^2$ to $H^1$ and $\chi(\delta^{1/3}D_s) - \Id = \OO_{H^1 \to L^2}(\delta^{1/3})$ (see \eqref{eq:1x-b}), 
 the product is  $\OO_{L^2}(\delta^{1/3})$ and so the first term in  \eqref{chDich} is  $\OO_{L^2}(\delta^{1/3})$. The second term in \eqref{chDich} is the adjoint of the operator prefactor just studied. It is therefore  $\OO_{\Ll^2_0}(\delta^{1/3})$. We conclude that
\begin{equations}
\chi(\delta^{1/3}D_s)   \left( \Dirac-z \right)^{-1} \chi(\delta^{1/3}D_s) = \left( \Dirac-z \right)^{-1} + \OO_{\Ll^2_0}(\delta^{1/3})
\label{chDch1}\end{equations}
Substituting \eqref{chDch1} into \eqref{eq:4e} yields 
\begin{align}\nn
\Pi_\near (U_\delta \TT)^* \cdot \left( \Dirac-z \right)^{-1} \cdot \left(U_\delta \TT\right) \Pi_\near= (U_\delta \TT)^*\cdot \left( \Dirac-z \right)^{-1} \cdot  \left(U_\delta \TT\right) +  \OO_{\Ll^2_0}(\delta^{1/3}).
\label{eq:4ee}
\end{align}
Substituting back into \eqref{eq:4i} we get
\begin{equation}\label{eq:4ii}
\left(H_\near^\delta-E_D-\delta z \right)^{-1} =
\dfrac{1}{\delta} \cdot  (U_\delta \TT)^*\ \cdot \left( \Dirac-z \right)^{-1}\  \left(U_\delta\TT\right)  + \OO_{\Ll^2_0}(\delta^{-2/3}).
\end{equation}
The proof of Proposition \ref{prop:3} is now complete.
\end{proof}

\bibliographystyle{plain}
\bibliography{kkp25-autonum}
\end{document}